%% file: main.tex
\PassOptionsToPackage{unicode}{hyperref}
\PassOptionsToPackage{hyphens}{url}
\PassOptionsToPackage{dvipsnames,svgnames,x11names}{xcolor}
\documentclass[12pt]{article}

\usepackage{amsmath,amssymb,amsthm}
\usepackage{threeparttable}
\usepackage{mathtools}
\usepackage{iftex}
\ifPDFTeX
  \usepackage[T1]{fontenc}
  \usepackage[utf8]{inputenc}
  \usepackage{textcomp}
\else
  \usepackage{unicode-math}
  \defaultfontfeatures{Scale=MatchLowercase}
  \defaultfontfeatures[\rmfamily]{Ligatures=TeX,Scale=1}
\fi
\usepackage{lmodern}

\IfFileExists{upquote.sty}{\usepackage{upquote}}{}
\IfFileExists{microtype.sty}{%
  \usepackage[]{microtype}
  \UseMicrotypeSet[protrusion]{basicmath}
}{}

\makeatletter
\@ifundefined{KOMAClassName}{%
  \IfFileExists{parskip.sty}{%
    \usepackage{parskip}
  }{%
    \setlength{\parindent}{0pt}
    \setlength{\parskip}{6pt plus 2pt minus 1pt}}
}{%
  \KOMAoptions{parskip=half}}
\makeatother

\usepackage{xcolor}
\colorlet{blue}{black}
\colorlet{Blue}{black}
\colorlet{red}{black}

\usepackage[normalem]{ulem}
\usepackage{cancel}

\makeatletter
\ifx\paragraph\undefined\else
  \let\oldparagraph\paragraph
  \renewcommand{\paragraph}{%
    \@ifstar
      \xxxParagraphStar
      \xxxParagraphNoStar
  }
  \newcommand{\xxxParagraphStar}[1]{\oldparagraph*{#1}\mbox{}}
  \newcommand{\xxxParagraphNoStar}[1]{\oldparagraph{#1}\mbox{}}
\fi
\ifx\subparagraph\undefined\else
  \let\oldsubparagraph\subparagraph
  \renewcommand{\subparagraph}{%
    \@ifstar
      \xxxSubParagraphStar
      \xxxSubParagraphNoStar
  }
  \newcommand{\xxxSubParagraphStar}[1]{\oldsubparagraph*{#1}\mbox{}}
  \newcommand{\xxxSubParagraphNoStar}[1]{\oldsubparagraph{#1}\mbox{}}
\fi
\makeatother

\usepackage{longtable,booktabs,array}
\usepackage{calc}
\usepackage{etoolbox}
\makeatletter
\patchcmd\longtable{\par}{\if@noskipsec\mbox{}\fi\par}{}{}
\makeatother
\IfFileExists{footnotehyper.sty}{\usepackage{footnotehyper}}{\usepackage{footnote}}
\makesavenoteenv{longtable}
\usepackage{graphicx}
\makeatletter
\def\maxwidth{\ifdim\Gin@nat@width>\linewidth\linewidth\else\Gin@nat@width\fi}
\def\maxheight{\ifdim\Gin@nat@height>\textheight\textheight\else\Gin@nat@height\fi}
\makeatother
\setkeys{Gin}{width=\maxwidth,height=\maxheight,keepaspectratio}
\makeatletter
\def\fps@figure{htbp}
\makeatother

\makeatletter
\@ifpackageloaded{caption}{}{\usepackage{caption}}
\AtBeginDocument{%
\ifdefined\contentsname
  \renewcommand*\contentsname{Table of contents}
\else
  \newcommand\contentsname{Table of contents}
\fi
\ifdefined\listfigurename
  \renewcommand*\listfigurename{List of Figures}
\else
  \newcommand\listfigurename{List of Figures}
\fi
\ifdefined\listtablename
  \renewcommand*\listtablename{List of Tables}
\else
  \newcommand\listtablename{List of Tables}
\fi
\ifdefined\figurename
  \renewcommand*\figurename{Figure}
\else
  \newcommand\figurename{Figure}
\fi
\ifdefined\tablename
  \renewcommand*\tablename{Table}
\else
  \newcommand\tablename{Table}
\fi
}
\@ifpackageloaded{float}{}{\usepackage{float}}
\floatstyle{ruled}
\@ifundefined{c@chapter}{\newfloat{codelisting}{h}{lop}}{\newfloat{codelisting}{h}{lop}[chapter]}
\floatname{codelisting}{Listing}

\makeatother

\makeatletter
\@ifpackageloaded{caption}{}{\usepackage{caption}}
\@ifpackageloaded{subcaption}{}{\usepackage{subcaption}}
\makeatother
\usepackage{bbm}        
\usepackage{multirow}   
\usepackage{makecell}   
\usepackage{tabularx}   
\usepackage{lscape}     
\usepackage{algorithm}
\usepackage{algpseudocode}
\usepackage[many]{tcolorbox}
\newtcolorbox{breakablealgorithm}[3][]{
    enhanced,
    breakable,
    blanker,
    boxsep=0pt,
    left=0pt,
    right=0pt,
    top=0pt,
    bottom=0pt,
    before upper={%
      \refstepcounter{algorithm}%
      \label{#3}%
      \addcontentsline{loa}{algorithm}{\protect\numberline{\thealgorithm}{#2}}%
      \hrule height 0.8pt depth 0pt\kern2pt
      \noindent\textbf{Algorithm~\thealgorithm}\ #2\par
      \kern2pt\hrule\kern2pt
    },
    after upper={\kern2pt\hrule},
    #1
}

\ifLuaTeX
  \usepackage{selnolig}
\fi

\theoremstyle{plain}

\newtheorem{theorem}{Theorem}[section]
\newtheorem{lemma}[theorem]{Lemma}

\theoremstyle{plain}
\newtheorem{definition}[theorem]{Definition}
\newtheorem{assumption}{Condition}
\newtheorem{proposition}[theorem]{Proposition}

\newtheorem{remark}{Remark}

\input{command}

\providecommand{\bthe}{\btheta}
\providecommand{\hbthe}{\hbtheta}
\providecommand{\cbthe}{\check{\btheta}}

\providecommand{\hq}{\widehat q}
\providecommand{\hv}{\widehat v}

\usepackage[]{natbib}
\usepackage{bibunits}
\defaultbibliographystyle{abbrvnat}
\defaultbibliography{citation}

\IfFileExists{xurl.sty}{\usepackage{xurl}}{}
\usepackage{hyperref}
\usepackage{bookmark}

\hypersetup{
    pdftitle={High-Dimensional Assisted Learning for Vertically Distributed Data with Blockwise Missingness},
  pdfauthor={Yuwen Long; Shuyuan Wu; Yin Xia},
  pdfkeywords={vertical federated learning, block-missing data, high-dimensional inference, debiased estimation, data perturbation},
  colorlinks=true,
  linkcolor={black},
  filecolor={black},
  citecolor={black},
  urlcolor={black},
  pdfcreator={LaTeX}
}

\newcommand{\anon}{1}

\newcommand{\spacingset}[1]{\renewcommand{\baselinestretch}{#1}\small\normalsize}

\begin{document}

\spacingset{1}


\title{\bf {High-Dimensional Assisted Learning for\\ Vertically Distributed Data with\\ Blockwise Missingness}}

\if1\anon
  \author{Yuwen Long$^{1}$, Shuyuan Wu$^{2}$, and Yin Xia$^{1}$\\[0.5em]
    \normalsize $^{1}$Department of Statistics and Data Science, Fudan University\\
    \normalsize $^{2}$School of Statistics and Data Science, Shanghai University of Finance and Economics\\[0.4em]}
\else
  \author{}
\fi

\date{}
\maketitle

\begin{bibunit}[abbrvnat]

\begin{abstract}
In multi-institutional studies, different parties hold distinct feature blocks for partially overlapping sets of individuals. Responses may also be missing for some records. In such settings, we propose Assisted Learning with Block-Missing Data (ALB) for sparse high-dimensional linear estimation and coordinatewise inference without pooling records or relying on a coordinating server. ALB minimizes a regularized available-case quadratic loss using cyclic block updates. Each cycle communicates $O(n)$ scalars through sample-level linear summaries, regardless of data dimension $p$, and the iterates converge geometrically to the centralized solution. We derive estimation rates that separate statistical and optimization errors. For inference on a target coefficient, ALB estimates the corresponding precision column and uses a sample-level variance estimator that accounts for dependence among moments computed from overlapping samples. Under sparsity and overlap conditions, the studentized estimator is asymptotically standard normal at the $\sqrt n$ rate, even when there are no complete cases. We also study one-time perturbed covariate and response releases that reduce direct disclosure by replacing unperturbed sample-level quantities with noisy versions. Simulations and an analysis of multimodal Alzheimer's Disease Neuroimaging Initiative data indicate that ALB approximates its centralized benchmark and improves upon complete-case Lasso by incorporating partially observed records.
\end{abstract}

\noindent%
{\it Keywords:} available-case estimation, decentralized optimization, debiased inference, multimodal integration, sample-level perturbation

\vfill

\spacingset{1.2} 


\section{Introduction}
\subsection{Motivation}
Modern studies often combine covariates held by different institutions. In health care, for example, clinical, imaging, and biomarker data may reside at separate institutions. Similar arrangements arise in finance, online platforms, and public agencies. Privacy, governance, and ownership constraints can preclude pooling individual records, yielding \emph{vertically distributed data}, in which different parties hold distinct feature blocks for overlapping individuals \citep{liu2024vertical,yin2026vertical}. We therefore seek a decentralized procedure that exchanges only task-specific summaries and requires no coordinating server.

The Alzheimer's Disease Neuroimaging Initiative (ADNI) illustrates a more general setting \citep{mueller2005alzheimer}: many visits include only some biomarker and imaging modalities, and responses may also be missing. Consequently, the sample IDs available across feature blocks overlap only partially. After alignment, this \emph{partially collated} structure appears as blockwise missingness, shown in Figure \ref{fig:diagram}. Complete-case analysis may therefore discard most records and much of the available information.

This setting presents several challenges. First, even for individuals observed by all parties, no single party can evaluate a loss involving all feature blocks from its local data. Second, the covariance between two feature blocks can be computed only from IDs observed by both users, while a feature--response product can be computed only from IDs with both that feature block and the response. Because these quantities use different samples, combining them need not produce a positive semidefinite covariance matrix. Third, high dimensionality calls for algorithms whose communication costs do not grow with $p$. Finally, even when raw records remain local, exchanged summaries may disclose individual-level information. We therefore seek to use every ID available for each covariance or feature--response calculation without pooling individual records.

\begin{figure}[t]
        \centering
        \includegraphics[width=1\linewidth]{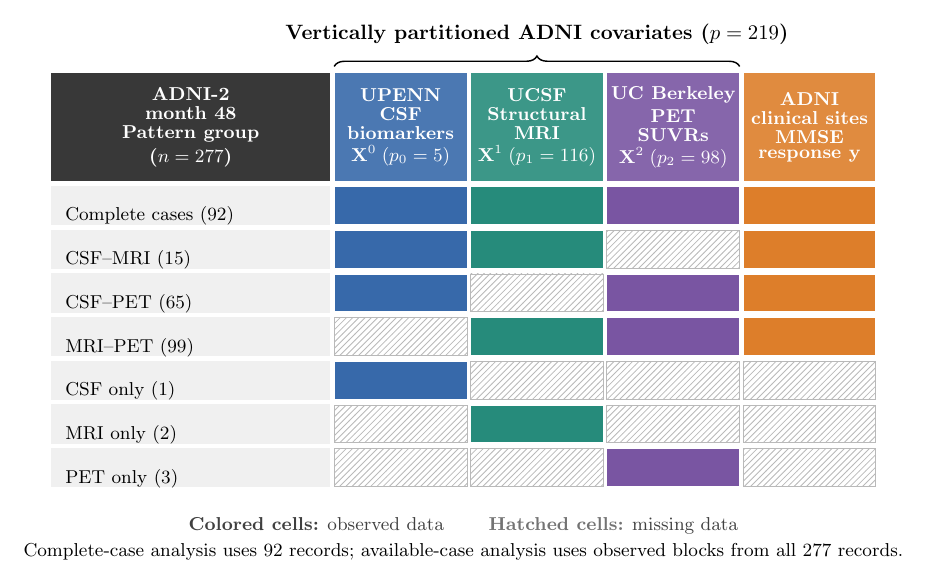}
        \caption{Block-missing pattern in the ADNI-2 month-48 data ($n=277$, $p=219$). The covariates comprise CSF biomarkers ($p_0=5$), structural MRI measures ($p_1=116$), and PET SUVRs ($p_2=98$); MMSE is observed for the 92 complete and 179 pairwise-overlap records.}
        \label{fig:diagram}
    \end{figure}

\subsection{Related Work}

Related work spans three areas corresponding to these challenges: vertically and assisted learning, methods for block-missing data, and high-dimensional inference. The first concerns joint learning when each party retains its own feature block. Vertical federated learning (VFL) has been developed for generalized linear models, trees and forests, regularized regression, clustering, survival and quantile regression, and neural networks \citep{hardy2017private,fang2021large,liu2020federated,huang2022coresets,dai2020verticox,fan2023residual,wang2023unified}. Within VFL, communication costs can be reduced through less frequent synchronization or compressed intermediate quantities \citep{liu2022fedbcd,castiglia2022compressed}, and high-dimensional methods have considered feature selection and screening \citep{castiglia2023less,yin2026vertical}. Most VFL procedures, however, assume aligned samples and coordination by an active party or server \citep{liu2024vertical}. In a related line of research, assisted learning uses decentralized task-specific summaries in a small number of rounds, including reciprocal, gradient-based, and collaborator-screening procedures \citep{xian2020assisted,wang2022parallel,diao2022gal,zhang2026additive}, but does not address block missingness.

Separately, block-missing methods exploit incomplete records through multi-source modeling, structured matrix completion or imputation, and direct use of observed blocks for source selection, sparse prediction, and semi-supervised inference \citep{yuan2012multi,xiang2013multi,cai2016structured,xue2021integrating,xue2025statistical,xiang2014bilevel,yu2020optimal,song2024semi}. These methods show that partial records can improve estimation and inference, but assume centralized computation.

Bridging these areas, recent VFL methods handle incomplete overlap through representation completion and pseudo-labels \citep{kang2022fedcvt}, semi-supervised limited-round training \citep{sun2023communication}, knowledge distillation \citep{ren2022improving}, parameter sharing and task sampling \citep{valdeira2025vertical}, or Gaussian-copula privatization \citep{tang2025data}. These approaches improve prediction, model availability, or privatization, yet do not support decentralized high-dimensional coefficient estimation and inference with unequal overlaps.

For coefficient-level inference, debiasing methods provide valid inference after sparse estimation in centralized high-dimensional models \citep{javanmard2014confidence,zhang2014confidence,van2014asymptotically}. Their distributed extensions aggregate or iteratively refine local estimators and test statistics \citep{lee2017communication,battey2018distributed,cai2022individual,sun2024optimal,jordan2019communication,fan2023communication,chen2022first,li2022statistical,tu2023distributed}, but largely assume row-partitioned designs in which every site observes the same covariates.

Taken together, existing methods do not provide communication-efficient and decentralized high-dimensional coefficient estimation and inference with block-missing data. The proposed Assisted Learning with Block-Missing Data (ALB) method fills this gap through decentralized updates that target the centralized available-case estimator and a variance calculation that retains the dependence induced by \textcolor{blue}{overlapping} IDs. This yields coefficient-level inference even when the complete-case overlap is empty.

\subsection{Contributions} 
The methodological and theoretical contributions are threefold:
\begin{itemize}
    \item \textbf{Decentralized optimization of the available-case objective.} Different covariance and response--covariate blocks are computed from different sets of sample IDs, so no user holds the full objective. ALB computes the cross-user terms from sample-level linear predictions and applies cyclic updates without a \textcolor{blue}{centralized} server. For a fixed number of users, each update cycle communicates $O(n)$ scalars independently of $p$. We establish geometric convergence to the centralized available-case estimator and \textcolor{red}{combine the optimization and centralized high-dimensional statistical errors to derive an estimation-error bound for ALB}.

    \item \textcolor{blue}{\textbf{Decentralized coefficient-level inference under unequal overlaps.} Because the covariance and \textcolor{red}{response--covariate} blocks are computed from partially overlapping samples and the feature blocks are held by different users, we estimate the required precision column and score variance in a decentralized manner while incorporating all available samples. We establish coordinatewise asymptotic normality and consistent variance estimation even when no complete cases are available.}

    \item \textbf{One-time perturbation of transmitted summaries.} 
    The responses and sample-level linear predictions transmitted between users may reveal individual values.
    We add noise 
    to each released response and covariate row once and reuse the perturbed values in all subsequent messages, while each user computes its within-user covariance from the original local data. Under bounded noise scales, {\color{blue} we show that} the perturbed cross-user covariances and response--covariate averages remain unbiased and retain the max-norm concentration rates of their unperturbed counterparts.
\end{itemize}

The remainder of the paper is organized as follows. Section \ref{sec: methodology} presents ALB estimation and inference, Section \ref{sec: theory} gives the theory, and Section \ref{sec: perturbed estimator} introduces perturbation. Sections \ref{sec: simulation} and \ref{sec: ADNI} report simulations and the ADNI application, and Section \ref{sec: conclusion} concludes. The Supplementary Material contains proofs, an optional local differential privacy calibration, additional simulations, and implementation details.

\section{Distributed Estimation and Inference}
\label{sec: methodology}
This section introduces the block-missing data structure and problem formulation, followed by the decentralized ALB estimator and its debiasing procedure. Throughout, for vectors, $\norm{\cdot}_1$, $\norm{\cdot}_2$, and $\norm{\cdot}_{\infty}$ denote the usual norms, and $\norm{\cdot}_0$ counts nonzero entries. For a matrix $\mathbf A=(a_{ij})$, let $\norm{\mathbf A}_{\max}=\max_{i,j}|a_{ij}|$ and $\norm{\mathbf A}_{\infty}=\max_i\sum_j|a_{ij}|$. We use $\#\mathcal A$ for set cardinality, $\mathrm I_p$ for the $p\times p$ identity matrix, and $\operatorname{tr}(\mathbf A)$ for matrix trace.

\subsection{Problem Setup}
\label{subsec: problem set up}
We consider high-dimensional linear estimation and inference with vertically distributed, block-missing data. Users $B_0,\ldots,B_K$ hold disjoint feature blocks for partially overlapping ID sets, responses may also be missing, and the total dimension $p$ may exceed the number $n$ of distinct IDs. Our goal is to estimate the full coefficient vector and conduct coordinatewise inference without pooling raw feature blocks.

Formally, let $\mathcal I=\{1,\ldots,n\}$ be the union of all sample IDs. User $B_k$ observes the subset $\mathcal I_k\subseteq\mathcal I$ and holds a $p_k$-dimensional covariate block $\bx_i^k$ for each $i\in\mathcal I_k$. Let $n_k=\#\mathcal I_k$ and $\bX^k\in\mathbb R^{n_k\times p_k}$ denote its sample size and design matrix, respectively, and let $p=\sum_{k=0}^Kp_k$. The response is observed on an ID set $\mathcal I^y\subseteq\mathcal I$. For each ID $i$, define $r_i^k=\mathbf1\{i\in\mathcal I_k\}$ and $r_i^y=\mathbf1\{i\in\mathcal I^y\}$. Thus, $\bx_i^k$ is observed when $r_i^k=1$, $y_i$ is observed when $r_i^y=1$, and $r_i^k=0$ means that the entire block held by $B_k$ is missing for ID $i$.

Two types of overlap determine the sample-level quantities. For users $B_k$ and $B_l$, $k\neq l$, let $\mathcal I_{kl}=\mathcal I_k\cap\mathcal I_l$ and $n_{kl}=\#\mathcal I_{kl}$. The matrix $\bX^{k,l}$ contains the rows of $\bX^k$ for these shared IDs, ordered to match $\bX^{l,k}$. For response-covariate moments at user $B_k$, let $\mathcal I_k^y=\mathcal I_k\cap\mathcal I^y$ and $n_k^y=\#\mathcal I_k^y$. Align $\by_k\in\mathbb R^{n_k}$ with $\bX^k$ by setting entries outside $\mathcal I_k^y$ to zero, so $\bX^{k\top}\by_k$ sums only over IDs with both $\bx_i^k$ and $y_i$ observed.

The complete-case set is $\mathcal I_{\mathrm{comp}}=\mathcal I^y\cap\bigcap_{k=0}^K\mathcal I_k$. It may be small or empty even when the user-level counts $n_k$, $n_{kl}$, and $n_k^y$ are substantial. For the theoretical rates, we also use coordinate-level effective sample sizes. Specifically, $N_j^x$ counts IDs with covariate $j$, $N_{jt}^{xx}$ counts IDs with both covariates $j$ and $t$, and $N_j^{xy}$ counts IDs with both covariate $j$ and the response. Under blockwise missingness, if $j$ belongs to $B_k$ and $t$ to $B_l$, these counts reduce to $N_j^x=n_k$, $N_j^{xy}=n_k^y$, and $N_{jt}^{xx}=n_k$ when $k=l$ or $n_{kl}$ when $k\neq l$.

Three types of moments are estimated from different ID sets: $\mathcal I_k$ for within-block moments, $\mathcal I_{kl}$ for cross-block moments, and $\mathcal I_k^y$ for response-covariate moments. Consequently, the assembled covariance need not be positive semidefinite, and no user can form the full empirical loss locally. Complete-case analysis wastes the larger pairwise overlaps, whereas transmitting cross-block covariances requires $p_kp_l$ entries. Together with $p\gtrsim n$, these challenges motivate a regularized available-case target and its decentralized computation. We first define the ideal full-data estimand and then its feasible counterpart.

To define the statistical target, suppose first that the complete covariate vector $\bx=(\bx^{0\top},\ldots,\bx^{K\top})^\top\in\mathbb R^p$ and the response $y$ were jointly available. \textcolor{blue}{Throughout, we assume that the covariates and response are centered, so that $\E(\bx)=\mathbf0$ and $\E(y)=0$}. The ideal population coefficient minimizes the mean squared prediction loss:
    \begin{equation}
    \label{oracle loss 1}
        \bbeta^*=\argmin_{\bbeta\in\mathbb R^p}\cL^*(\bbeta),
        \qquad
        \cL^*(\bbeta)=\frac12\E\{(y-\bx^\top\bbeta)^2\},
    \end{equation}
Here $\bbeta=(\bbeta_0^\top,\ldots,\bbeta_K^\top)^\top$, where $\bbeta_k\in\mathbb R^{p_k}$ corresponds to user $B_k$. Define $\Sigma_{kl}=\E(\bx^k\bx^{l\top})$ and $\mbC_k=\E(\bx^k y)$, and assemble $\Sigma=(\Sigma_{kl})_{k,l=0}^K$ and $\mbC=(\mbC_0^\top,\ldots,\mbC_K^\top)^\top$. If $\Sigma$ is nonsingular, the ideal coefficient satisfies
    \begin{equation}
    \label{true estimator}
        \bbeta^*=\Sigma^{-1}\mbC.
    \end{equation}
We take $\bbeta^*$ as the statistical estimand. In the high-dimensional setting, we assume that it is sparse and write $S=\operatorname{supp}(\bbeta^*)$ and $s=\norm{\bbeta^*}_0$.

In the observed block-missing data, neither the complete vector $\bx$ nor its empirical covariance can be formed. Following DISCOM \citep{yu2020optimal}, we therefore estimate each population moment using all IDs on which its variables are jointly observed. The within-user and cross-user covariance estimators are $\tSigma_k=n_k^{-1}\bX^{k\top}\bX^k$ and $\tSigma_{kl}=n_{kl}^{-1}\bX^{k,l\top}\bX^{l,k}$ for $k\neq l$, while $\tmbC_k=(n_k^y)^{-1}\bX^{k\top}\by_k$ estimates the response-covariate block. Set $\tSigma_{kk}=\tSigma_k$, $\tSigma=(\tSigma_{kl})_{k,l=0}^K$, and $\tmbC=(\tmbC_0^\top,\ldots,\tmbC_K^\top)^\top$.

Because these blocks are estimated from different samples, $\tSigma$ may not be positive semidefinite. Write $\tSigma_I=\operatorname{diag}(\tSigma_0,\ldots,\tSigma_K)$ for the within-user blocks and $\tSigma_C=\tSigma-\tSigma_I$ for the cross-user blocks. To obtain a positive-definite matrix, DISCOM replaces $\tSigma$ by $\hSigma=a_1\tSigma_I+a_2\tSigma_C+(1-a_1)c_{tr}\mathrm I_p/p$, where $c_{tr}=\sum_{k=0}^K\operatorname{tr}(\tSigma_k)$ and $a_1,a_2$ are chosen so that $\hSigma$ is positive definite. Because $\bbeta^*$ is sparse and $p$ may exceed $n$, we add an $\ell_1$ penalty and, for $\lambda\geq0$, define
    \begin{equation}
    \label{Qn}
    \begin{aligned}
        \cQ_n(\bbeta)
        &=\cL_n(\bbeta)+\lambda\norm{\bbeta}_1\\
        &=\frac12\bbeta^\top\hSigma\bbeta-\tmbC^\top\bbeta
          +\lambda\norm{\bbeta}_1,
    \end{aligned}
    \end{equation}
where $\cL_n(\bbeta)=\bbeta^\top\hSigma\bbeta/2-\tmbC^\top\bbeta$ is the smooth quadratic loss. 

Then the centralized available-case estimator is $\cbbeta=\argmin_{\bbeta\in\mathbb R^p}\cQ_n(\bbeta)$. If all available-case moments could be assembled centrally, $\cbbeta$ could be obtained by minimizing $\cQ_n$ directly. 

 In the decentralized setting considered here, however, these moments remain distributed across users. Our goal is therefore to recover the same estimator without centralizing them as described in the next subsection.
    
\subsection{Assisted Learning with Block-Missing Data}
\label{subsec: ALB}
To recover $\cbbeta$ without assembling the cross-user moments, ALB replaces each $p_k\times p_l$ cross-user matrix with an $n_{kl}$-dimensional vector of sample-level linear predictions. These summaries permit cyclic block minimization of $\cQ_n$ without assembling any cross-user covariance matrix. Under the conditions of Section \ref{subsec: algorithmic convergence}, the resulting iterates converge geometrically to $\cbbeta$ while communicating $O(n)$ scalars per complete update cycle for a fixed number of {users}, regardless of $p$. 

Specifically, the objective admits the following block decomposition:
    \begin{equation}
    \label{eq:decomposed-discom-loss}
    \begin{aligned}
        \cQ_n(\bbeta)
        &=\frac{a_1}{2}\sum_{k=0}^K\bbeta_k^\top\tSigma_k\bbeta_k
          +\frac{(1-a_1)c_{tr}}{2p}\sum_{k=0}^K\norm{\bbeta_k}_2^2\\
        &\quad+\frac{a_2}{2}\sum_{k\neq l}
          \frac{\bbeta_k^\top\bX^{k,l\top}\bX^{l,k}\bbeta_l}{n_{kl}}
          -\sum_{k=0}^K\frac{\by_k^\top\bX^k\bbeta_k}{n_k^y}
          +\lambda\sum_{k=0}^K\norm{\bbeta_k}_1.
    \end{aligned}
    \end{equation}
To update $\bbeta_k$, user $B_k$ acts as the receiver. For every $l\neq k$, sender $B_l$ computes $\bu_{l\to k}(\bbeta_l)=\bX^{l,k}\bbeta_l\in\mathbb R^{n_{kl}}$ and sends it to $B_k$. The receiver combines this vector with its overlapping rows $\bX^{k,l}$ to evaluate $\bbeta_l^\top\bX^{l,k\top}\bX^{k,l}\bbeta_k
=\bu_{l\to k}(\bbeta_l)^\top\bX^{k,l}\bbeta_k$. When $B_l$ is updated, the roles reverse and $B_k$ sends $\bu_{k\to l}(\bbeta_k)$ to $B_l$. Hence, neither user transmits raw covariates or the $p_k\times p_l$ cross-covariance matrix. Communication for this pair consists only of an $n_{kl}$-vector in each required direction. The complete procedure is summarized in Algorithm \ref{alg:optimization}.
\begin{algorithm}[!t]
\caption{ALB: Cyclic Block Optimization for $K+1$ Users}
\label{alg:optimization}

{\renewcommand{\baselinestretch}{0.90}\selectfont
\begin{algorithmic}[1]
\Statex \textbf{Input:}
$(a_1,a_2,\lambda)$, maximum sweeps $e_{\max}$,
loss tolerance $\tau_{\mathrm{ALB}}$, and initial estimates
$\{\hbbeta_k^{(0)}\}_{k=0}^K$.
\Statex \textbf{Output:}
$\hbbeta=(\hbbeta_0^\top,\ldots,\hbbeta_K^\top)^\top$.
\State
$e\gets0$ and
$q_{\mathrm{prev}}\gets\cQ_n(\hbbeta^{(0)})$.
\While{$e<e_{\max}$}
    \For{$k=1,\ldots,K$}
        \State $B_0$ sends
        $\bu_{0\to k}(\hbbeta_0^{(e)})$ to $B_k$.
        \State $B_l$ sends
        $\bu_{l\to k}(\hbbeta_l^{(e+1)})$ for $1\leq l<k$
        and $\bu_{l\to k}(\hbbeta_l^{(e)})$ for
        $k<l\leq K$ to $B_k$.
        \State $B_k$ minimizes
        \eqref{distributed Qn k not 0} to obtain
        $\hbbeta_k^{(e+1)}$.
    \EndFor
    \State Each $B_k$ sends
    $\bu_{k\to0}(\hbbeta_k^{(e+1)})$ to $B_0$,
    $k=1,\ldots,K$.
    \State $B_0$ minimizes
    \eqref{distributed Qn k 0} to obtain
    $\hbbeta_0^{(e+1)}$.
    \State
    $e\gets e+1$,
    $q_e\gets\cQ_n(\hbbeta^{(e)})$, and
    $\delta_e\gets
    |q_e-q_{\mathrm{prev}}|/
    \max\{1,|q_{\mathrm{prev}}|\}$.
    \If{$\delta_e\leq\tau_{\mathrm{ALB}}$}
        \State \textbf{break}
    \EndIf
    \State $q_{\mathrm{prev}}\gets q_e$.
\EndWhile
\State \Return
$\hbbeta=
(\hbbeta_0^{(e)\top},\ldots,\hbbeta_K^{(e)\top})^\top$.
\end{algorithmic}
}
\end{algorithm}

In this procedure, ALB updates one coefficient block at a time. We call one ordered pass in which every user updates its block once a \emph{sweep}. Let $\hbbeta_k^{(e)}$ denote the estimate held by $B_k$ after $e$ complete sweeps. During sweep $e+1$, users $B_1,\ldots,B_K$ update sequentially, followed by $B_0$. Thus, when $B_k$, $1\leq k\leq K$, is updated, blocks $B_1,\ldots,B_{k-1}$ already have their current-sweep values $\hbbeta_l^{(e+1)}$, whereas $B_{k},\ldots,B_K$ and $B_0$ retain their previous-sweep values $\hbbeta_l^{(e)}$ and $\hbbeta_0^{(e)}$, respectively. Accordingly, $B_k$ minimizes
    \begin{equation}
    \label{distributed Qn k not 0}
    \begin{aligned}
        &\cQ_{n,k}\!\left(\bbeta_k;
        \hbbeta_0^{(e)},\{\hbbeta_l^{(e+1)}\}_{1\leq l<k},
        \{\hbbeta_l^{(e)}\}_{k<l\leq K}\right)\\
        &\quad=\frac{a_1}{2}\bbeta_k^\top\tSigma_k\bbeta_k
        +\frac{(1-a_1)c_{tr}}{2p}\norm{\bbeta_k}_2^2
        +a_2\sum_{1\leq l<k}\frac{1}{n_{lk}}
          \left(\hbbeta_l^{(e+1)\top}
          \bX^{l,k\top}\bX^{k,l}\bbeta_k\right)\\
        &\qquad
        +a_2\sum_{\{l:k<l\leq K\}\cup\{0\}}
         \frac{1}{n_{lk}} \left(\hbbeta_l^{(e)\top}
          \bX^{l,k\top}\bX^{k,l}\bbeta_k\right)
        -\frac{1}{n_k^y}\by_k^\top\bX^k\bbeta_k
        +\lambda\norm{\bbeta_k}_1.
    \end{aligned}
    \end{equation}
Once users $B_1,\ldots,B_K$ have been updated, $B_0$ uses all $\{\hbbeta_k^{(e+1)}\}_{k=1}^K$ and minimizes
\begin{equation}
\label{distributed Qn k 0}
\begin{aligned}
    \cQ_{n,0}\!\left(\bbeta_0;\{\hbbeta_k^{(e+1)}\}_{k=1}^K\right)
    &=\frac{a_1}{2}\bbeta_0^\top\tSigma_0\bbeta_0
      +\frac{(1-a_1)c_{tr}}{2p}\norm{\bbeta_0}_2^2\\
    &\quad+ a_2\sum_{k=1}^K
      \frac{1}{n_{k0}}\left(\hbbeta_k^{(e+1)\top}
      \bX^{k,0\top}\bX^{0,k}\bbeta_0\right)
      -\frac{\by_0^\top\bX^0\bbeta_0}{n_0^y}
      +\lambda\norm{\bbeta_0}_1.
\end{aligned}
\end{equation}
Here, the label $B_0$ fixes a reference point in the cyclic update order rather than a distinct statistical role. User $B_0$ holds its own data block, has no access to the other users' raw data, and is not a separate coordinating server. Thus, any user can be designated as $B_0$ without changing the objective or the theoretical results.

Importantly, Equations \eqref{distributed Qn k not 0} and \eqref{distributed Qn k 0} show that the exchanged summaries evaluate the original cross terms exactly rather than define a surrogate objective. Because each cross term is evaluated through an $n_{kl}$-dimensional summary instead of a $p_k\times p_l$ matrix, one complete sweep communicates $O(\sum_{k<l}n_{kl})=O(K^2n)$ scalars, or $O(n)$ for fixed $K$, regardless of $p$.

In implementation, during each sweep, every user solves its block subproblem by coordinate descent as detailed in Section \ref{app: proximal gradient descent} in the Supplement. After each complete sweep, ALB evaluates the relative change in the penalized loss and stops when this change is below a prescribed tolerance. Otherwise, it proceeds to the next sweep. \textcolor{blue}{The global objective value used in the stopping rule is obtained by
summing user‑local scalar contributions and scalar pairwise
cross‑term contributions. No covariance block or raw feature matrix is centralized for this calculation.} The resulting distributed iterates converge geometrically to the centralized estimator $\cbbeta$, as established in Section \ref{subsec: algorithmic convergence}.

\subsection{Distributed Debiasing and Inference}
\label{subsec: debiased estimator}

We next construct coordinatewise inference for the sparse estimator. The $\ell_1$ penalty introduces coordinatewise shrinkage bias, so direct inference based on $\hbbeta$ is generally invalid. To remove this first-order bias, we combine an unpenalized available-case score with an inverse-covariance \textcolor{blue}{column} for the target coordinate. We begin with the latter. For $j\in\{1,\ldots,p\}$, let $\be_j\in\mathbb R^p$ be the $j$th canonical basis vector and define the population precision column $\bthe_j^*=\Sigma^{-1}\be_j$. Because $\Sigma$ is unknown, we estimate this column by
    \begin{equation}
    \label{global inverse column}
        \cbthe_j
        \coloneqq\argmin_{\bthe\in\mathbb R^p}
        \left\{\frac12\bthe^\top \hSigma^{(j)} \bthe
        -\be_j^\top\bthe+\lambda_j\norm{\bthe}_1\right\},
    \end{equation}
where $\hSigma^{(j)}=a_{1,j}\tSigma_I+a_{2,j}\tSigma_C+(1-a_{1,j})c_{tr}\mathrm I_p/p$ and $\lambda_j\geq0$ is the precision-column penalty. Unlike the point-estimation matrix $\hSigma$, $\hSigma^{(j)}$ may use weights $(a_{1,j},a_{2,j})$ selected specifically for precision-column estimation. To avoid coordinate-by-coordinate tuning, Section \ref{app: precision covariance selection} in the Supplement describes an efficient strategy that shares one weight pair across precision columns. Since this objective has the same quadratic block structure as \eqref{Qn}, Algorithm \ref{alg:optimization} applies after replacing $(a_1,a_2)$ with $(a_{1,j},a_{2,j})$, without changing the communication scheme. Let $\hbthe_j$ denote its distributed solution. Theorem \ref{the: algorithmic convergence} in the next section establishes that $\cbthe_j$ is unique and $\hbthe_j$ converges to $\cbthe_j$. Thus, the same decentralized machinery yields the required precision \textcolor{blue}{column}.

With the precision \textcolor{blue}{column} in hand, we next construct the score. The population normal equation $\Sigma\bbeta^*-\mbC=\mathbf0$ suggests the unpenalized available-case score $S_n(\bbeta)=\tSigma\bbeta-\tmbC$. Combining this score with $\hbthe_j$ gives the debiased estimator
    \begin{equation}
    \label{distributed individual debiased estimator}
        \hb_j \coloneqq \hbeta_j - \hbthe_j^\top S_n(\hbbeta).
    \end{equation}

To see how this correction removes the first-order bias, note that the score satisfies the affine identity
    \begin{equation}
    \label{expansion of score function}
    S_n(\hbbeta) = S_n(\bbeta^*) + \tSigma (\hbbeta-\bbeta^*).
    \end{equation}
Substituting \eqref{expansion of score function} into \eqref{distributed individual debiased estimator} gives
    \begin{equation}
    \begin{aligned}
    \label{debiased error decomposition}
        \hb_j -\beta^*_j &= -\bthe_j^{*\top} S_n(\bbeta^*) - \underbrace{(\hbthe_j-\bthe_j^*)^\top S_n(\bbeta^*)}_{\Delta_1} \\
        &- \underbrace{\hbthe_j^\top\left( \tSigma-\hSigma^{(j)} \right)(\hbbeta-\bbeta^*)}_{\Delta_2} - \underbrace{\left( \hSigma^{(j)} \hbthe_j - \be_j \right)^\top (\hbbeta-\bbeta^*)}_{\Delta_3}. 
    \end{aligned}
    \end{equation}
The first term is the leading stochastic term. The three remainders arise, respectively, from estimating $\bthe_j^*$, using different covariance matrices in the precision-column loss and the score, and approximating the inverse equation. Section \ref{subsec: oracle property} shows that $\sqrt n\Delta_r=o_p(1)$ for $r=1,2,3$, so the leading term determines the limiting distribution.

Based on the leading term, we derive its variance by expressing the score as a sum of sample-ID-level contributions. Recall the observation indicators $r_i^k$ and $r_i^y$. Write $R_i=(r_i^y,r_i^0,\ldots,r_i^K)$ and $R=(R_1,\ldots,R_n)$ for the missingness pattern, and adopt $n_{kk}=n_k$. Define the $p_k\times p_l$ matrix $A_{i,kl}$ and the $p_k$-vector $q_{i,k}$ by
    \begin{equation}
    \label{population influence contribution}
    \begin{aligned}
        A_{i,kl}\coloneqq\frac{n}{n_{kl}}r_i^kr_i^l(\bx_i^k\bx_i^{l\top}-\Sigma_{kl}),
        \qquad
        q_{i,k}\coloneqq\frac{n}{n_k^y}r_i^kr_i^y(y_i\bx_i^k-\mbC_k),
    \end{aligned}
    \end{equation}
Assemble $A_i=(A_{i,kl})_{k,l=0}^K$ and $q_i=(q_{i,0}^\top,\ldots,q_{i,K}^\top)^\top$, and set $\bphi_i=A_i\bbeta^*-q_i$. Then $S_n(\bbeta^*)=(\tSigma-\Sigma)\bbeta^*-(\tmbC-\mbC)=n^{-1}\sum_{i=1}^n\bphi_i$. Consequently, the conditional variance of the leading term is $v_{n,j}=\bthe_j^{*\top}\Omega_n\bthe_j^*$, where $\Omega_n=n\Var\{S_n(\bbeta^*)\mid R\}$.
    
Since this variance depends on unknown population quantities, we replace them in $\bphi_i$ with sample analogues. Specifically, define $\hA_{i,kl}=\frac{n}{n_{kl}}r_i^kr_i^l(\bx_i^k\bx_i^{l\top}-\tSigma_{kl})$ and $\hq_{i,k}=\frac{n}{n_k^y}r_i^kr_i^y(y_i\bx_i^k-\tmbC_k)$. Assemble $\hA_i=(\hA_{i,kl})_{k,l=0}^K$ and $\hq_i=(\hq_{i,0}^\top,\ldots,\hq_{i,K}^\top)^\top$, and set $\hbphi_i=\hA_i\hbbeta-\hq_i$. The variance estimator is
    \begin{equation}
    \label{estimator of variance}
    \hv_{n,j} \coloneqq \frac{1}{n-1}\sum_{i=1}^n\left(\hbthe_j^\top (\hbphi_i-\overline{\hbphi})\right)^2,
    \end{equation}
where $\overline{\hbphi}=\sum_{i=1}^n \hbphi_i/n$. Contributions associated with the same sample ID are aggregated before centering and squaring, so that the variance estimator accounts for covariance induced by overlapping samples.

Finally, combining the debiased estimator with this variance estimator yields the following asymptotic $(1-\alpha)$ confidence interval for $0<\alpha<1$:
    \begin{equation}
    \label{distributed CI}
        \left[\hb_j-\Phi^{-1}(1-\alpha/2)\sqrt{\hv_{n,j}/n},\;
        \hb_j+\Phi^{-1}(1-\alpha/2)\sqrt{\hv_{n,j}/n}\right],
    \end{equation}
    where $\Phi(\cdot)$ is the standard normal cumulative distribution function.
    
The preceding construction uses global notation only for exposition. Its block structure permits decentralized computation of both $\hb_j$ and $\hv_{n,j}$ without centralizing $\hbphi_i$. Section \ref{app: pseudocode} in the Supplement gives the full procedure. Again, for fixed $K$, each precision-column sweep and the final aggregation communicate $O(n)$ scalars independently of $p$.

\section{Theoretical Properties}
\label{sec: theory}
This section separates the performance of ALB into statistical, optimization, and inference components. Section \ref{subsec: estimation theory} first establishes concentration of the available-case moments and the resulting estimation rate for the centralized target $\cbbeta$. Section \ref{subsec: algorithmic convergence} then controls the distance between the distributed iterate $\hbbeta^{(e)}$ and $\cbbeta$. Finally, Section \ref{subsec: oracle property} combines these results with a conditional central limit theorem to justify coordinatewise inference.

\subsection{Moment Concentration and Estimation Error}
\label{subsec: estimation theory}
Recall that $\cbbeta$ is the centralized available-case estimator defined in Section \ref{subsec: problem set up}. The total estimation error admits the decomposition
   \[
       \hbbeta^{(e)}-\bbeta^*
       =\{\hbbeta^{(e)}-\cbbeta\}+\{\cbbeta-\bbeta^*\},
   \]
where the first term is optimization error after $e$ ALB sweeps and the second is statistical error. This subsection controls the second term by first establishing concentration of the available-case moments and then applying a sparse-regression argument. 

We begin with conditions. Condition \ref{con: ratios} ensures that every moment required by the procedure has an effective sample size proportional to $n$, while Condition \ref{con: missing pattern} makes the corresponding available-case moments unbiased. \textcolor{red}{Throughout, Condition~\ref{con: missing pattern} is assumed, and results are conditional on realizations of $R$ satisfying Condition~\ref{con: ratios}, unless stated otherwise.}
    
    \begin{assumption}
    \label{con: ratios}
    Let $n^y=\sum_{i=1}^nr_i^y$ and $n_{kl}^y=\sum_{i=1}^nr_i^yr_i^kr_i^l$. Assume that, as $n\to\infty$,
    \[
        n_k/n\to\rho_k,\quad
        n_{kl}/n\to\rho_{kl},\quad
        n_k^y/n\to\rho_k^y,\quad
        n^y/n\to\rho^y,\quad
        n_{kl}^y/n\to\rho_{kl}^y,
    \]
    where all limits belong to $(0,1]$ and statements involving $(k,l)$ apply to $k\neq l$.
    \end{assumption}

    \begin{assumption}
    \label{con: missing pattern}
        Assume that $\{(\bx_i,y_i,R_i)\}_{i=1}^n$ are independent and identically distributed and that $R_i$ is independent of $(\bx_i,y_i)$. Dependence among the components of $R_i$ is allowed.
    \end{assumption}
        
    \begin{remark}
    \label{remark: missing pattern}
        The above conditions constitute the missing completely at random (MCAR) mechanism. MCAR makes each available-case moment unbiased, and the positive limiting proportions give every required moment an effective sample size of order $n$. Complete observations are unnecessary. The full intersection may be empty even when all pairwise overlaps grow proportionally with $n$.
    \end{remark}

    The next three conditions concern covariance geometry, tail behavior, and sparse identifiability.

    \begin{assumption}
    \label{con: PD}
         Assume that the population covariance matrix $\Sigma$ is positive definite. Choose $(a_1,a_2)$ so that $\hSigma$ is positive definite and, for inference on coordinate $j$, choose $(a_{1,j},a_{2,j})$ so that $\hSigma^{(j)}$ is positive definite. Assume also that the covariates are standardized: $\sigma_{jj}=\E(X_j^2)=1$ and $\tilde\sigma_{jj}=1$ for $j=1,\ldots,p$.
    \end{assumption}
    \begin{remark}
    \label{remark: PD}
        These positive-definiteness requirements have distinct roles: $\Sigma$ identifies the population target, $\hSigma$ makes $\cQ_n$ strongly convex, and $\hSigma^{(j)}$ does the same for the precision-column objective in \eqref{global inverse column}. Standardization can be achieved by rescaling and gives $c_{tr}/p=1$. Sections \ref{app: covariance regularization}--\ref{app: shifted Lanczos iteration} in the Supplement describe practical procedures for selecting $(a_1,a_2)$ and $(a_{1,j},a_{2,j})$ under the requirements of Condition \ref{con: PD}, including distributed estimation of the smallest eigenvalues used to screen candidate pairs.
    \end{remark}

    \begin{assumption}
    \label{con: subgaussian}
        \textcolor{red}{Assume that the covariates and response are centered, so that $\E(X_j)=0$ and $\E(Y)=0$, and have uniformly sub-Gaussian tails: there exists a constant $L>0$, independent of $n$ and $p$, such that}
        \begin{equation*}
        \begin{aligned}
            &\E \exp\{tX_j\} \leq \exp\left(\frac{L^2 t^2}{2}\right),
            &&j=1,\ldots,p,\quad t\in\mathbb R,\\
            &\E \exp\{tY\}\leq \exp\left(\frac{L^2 t^2}{2}\right),
            &&t\in\mathbb R.
        \end{aligned}
        \end{equation*}
    \end{assumption}

    \begin{assumption}
    \label{con: RE}
        
        Let $S\coloneqq\operatorname{supp}(\bbeta^*)$. Assume that there exists a constant $m>0$, independent of $n$ and $p$, such that
        $$
            \bdelta^\top\Sigma\bdelta\geq m\norm{\bdelta}_2^2
        $$
        for every $\bdelta\in\mathbb R^p\setminus\{\mathbf0\}$ satisfying $\norm{\bdelta_{S^c}}_1\leq7\norm{\bdelta_S}_1$.
        
    \end{assumption}

    With these conditions in place, we first quantify the estimation error of the moment blocks. Adopt the convention $n_{kk}=n_k$ and define the smallest covariance, response, and overall effective sample sizes by $N_{\Sigma,n}\coloneqq\min_{0\leq k,l\leq K}n_{kl}$, $N_{y,n}\coloneqq\min_{0\leq k\leq K}n_k^y$, and $N_n\coloneqq N_{\Sigma,n}\wedge N_{y,n}$. For $j,t=1,\ldots,p$, define
    $\tilde\sigma_{jt}\coloneqq(\tSigma)_{jt},$
     $\sigma_{jt}\coloneqq(\Sigma)_{jt}.$
    The next two propositions extend Theorems 1 and 2 of \citet{yu2020optimal}, respectively, to allow missing responses.

    \begin{proposition}
    \label{the: sample cov}
         Suppose Conditions \ref{con: missing pattern}--\ref{con: subgaussian} hold. Let $1-a_1=\textcolor{blue}{O\{\sqrt{(\log p)/(\min_k n_k)}\}}$, $1-a_2=\textcolor{blue}{O\{\sqrt{(\log p)/(\min_{k\neq l}n_{kl})}\}}$. If $\min_{j,t}N_{jt}^{xx}\geq6\log p$, then, for $v_1=8\sqrt6(1+4L^2)$,
         $$
          \max_{j,t}\bbP \left( \left.\abs{\tilde{\sigma}_{jt} - \sigma_{jt}} \geq v_1 \sqrt{\frac{\log p}{N_{jt}^{xx}}}\,\right|R \right) \leq \frac{4}{p^3},
         $$
         $$
          \bbP \left( \left.\norm{\tSigma - \Sigma}_{\max} \geq v_1 \sqrt{\frac{\log p}{\min_{j,t} N_{jt}^{xx}}}\,\right|R \right) \leq \frac{4}{p}.
         $$
         Let $\tilde c_j$ and $c_j$ denote the $j$th entries of $\tmbC$ and $\mbC$. If $\min_jN_j^{xy}\geq4\log p$, then, for \textcolor{blue}{$v_2=16\left(1+(4L^2)/{\min\{\Var(Y),1\}}\right) \max\{\Var(Y),1\}$},
         $$
          \max_j\bbP \left( \left.\abs{\tilde c_j-c_j} \geq v_2 \sqrt{\frac{\log p}{N_j^{xy}}}\,\right|R \right) \leq \frac{4}{p^2},
         $$
         $$
          \bbP \left( \left.\norm{\tmbC - \mbC}_{\infty} \geq v_2 \sqrt{\frac{\log p}{\min_j N_j^{xy}}}\,\right|R \right) \leq \frac{4}{p}.
         $$
         Furthermore, there exists a constant $v_3>0$ such that
         $$
          \bbP\left(\left.\norm{\hSigma - \Sigma}_{\max} \geq v_3 \sqrt{\frac{\log p}{\min_{j,t}N_{jt}^{xx}}}\,\right|R\right) \leq \frac{4}{p}.
         $$
         If $1-a_{1,j}=O\{\sqrt{(\log p)/(\min_k n_k)}\}$ and $1-a_{2,j}=O\{\sqrt{(\log p)/(\min_{k\neq l}n_{kl})}\}$, the same bound holds with $\hSigma$ replaced by $\hSigma^{(j)}$.
    \end{proposition}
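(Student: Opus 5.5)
The plan is to condition on $R$ throughout. Under Condition~\ref{con: missing pattern}, conditioning on $R$ leaves $\{(\bx_i,y_i)\}$ i.i.d.\ with the population law, and it fixes the index sets $\mathcal I_k,\mathcal I_{kl},\mathcal I_k^y$.

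\textbf{Covariance entries.} For $j$ in block $k$ and $t$ in block $l$, write
\[
\tilde\sigma_{jt}-\sigma_{jt}=(N_{jt}^{xx})^{-1}\sum_{i\in\mathcal I_{kl}}(X_{ij}X_{it}-\sigma_{jt}),
\]
an average of $N_{jt}^{xx}$ i.i.d.\ centered terms. Products of sub-Gaussian variables are sub-exponential, and I will use the polarization identity
\[
X_jX_t=\tfrac14\{(X_j+X_t)^2-(X_j-X_t)^2\}.
\]
Each of $X_j\pm X_t$ is sub-Gaussian with parameter at most $2L$, so each square is centered sub-exponential with parameters of order $1+4L^2$; here standardization $\sigma_{jj}=1$ bounds the variances. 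A Bernstein inequality for sub-exponential averages then gives
\[
\bbP\big(|\tilde\sigma_{jt}-\sigma_{jt}|\ge v_1\sqrt{\log p/N}\,\big|\,R\big)\le 4\exp(-3\log p)
\]
with $N=N_{jt}^{xx}$. The condition $N\ge 6\log p$ ensures that the deviation $\sqrt{\log p/N}$ stays in the sub-Gaussian (quadratic) regime of Bernstein's bound, which yields the explicit constant $v_1=8\sqrt6(1+4L^2)$. This is the calculation of \citet{yu2020optimal}, Theorem~1, and I would follow it, tracking constants. A union bound over the $p^2$ pairs $(j,t)$, with $N_{jt}^{xx}$ replaced by its minimum, then turns $4/p^3$ into $4/p$.

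\textbf{Response moments.} The same argument applies to $\tilde c_j-c_j$, which averages $X_{ij}Y_i-c_j$ over $\mathcal I_k^y$, of size $N_j^{xy}$; this is the point where missing responses enter. Now $Y$ is not standardized. I would rescale to $Y/\max\{\mathrm{sd}(Y),1\}$ before polarizing, so the sub-Gaussian parameter of the combination is controlled by $L^2/\min\{\Var(Y),1\}$; undoing the rescaling produces the factor $\max\{\Var(Y),1\}$ in $v_2$. I would tune Bernstein to give exponent $2\log p$, hence $4/p^2$ per coordinate. A union bound over $p$ coordinates then gives $4/p$, and $N\ge4\log p$ again keeps the deviation in the quadratic regime.

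\textbf{The regularized matrix $\hSigma$.} Write
\[
\hSigma-\Sigma=(\tSigma-\Sigma)+(a_1-1)\tSigma_I+(a_2-1)\tSigma_C+(1-a_1)(c_{tr}/p)\mathrm I_p .
\]
Under standardization $c_{tr}/p=1$, and on the event of the max-norm bound above, $\norm{\tSigma}_{\max}\le1+v_1$. Hence the two extra pieces are bounded by $|1-a_1|(1+v_1)+|1-a_2|(1+v_1)+|1-a_1|$. Each of these is $O(\sqrt{\log p/\min_{j,t}N^{xx}_{jt}})$, because $\min_kn_k$ and $\min_{k\ne l}n_{kl}$ are both at least $\min_{j,t}N_{jt}^{xx}$. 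Collecting constants defines $v_3$, and the event is the same, so the probability is still at most $4/p$. The argument for $\hSigma^{(j)}$ is verbatim.

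The main obstacle is the constant bookkeeping in Bernstein's inequality needed to reach the stated explicit $v_1,v_2$, in particular the scaling step for the non-standardized $Y$. The structure of the argument is otherwise routine.
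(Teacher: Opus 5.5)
Your proposal is correct and follows essentially the same route as the paper. The paper's proof invokes Theorem 1 of Yu et al.\ (2020), which is the polarization--Bernstein calculation you describe, for each entry. It repeats that argument on the response--covariate pairs with $N_j^{xy}$ in place of $N_{jt}^{xx}$ and takes union bounds over $p^2$ and $p$ indices.

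The only difference is the $\hSigma$ bound. The paper cites the appendix of Yu et al.\ rather than writing out your decomposition $\hSigma-\Sigma=(\tSigma-\Sigma)+(a_1-1)\tSigma_I+(a_2-1)\tSigma_C+(1-a_1)\mathrm I_p$. Your decomposition is a valid self-contained replacement, since the resulting bad event is contained in the one already controlled for $\tSigma$.
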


    Each bound uses the sample size available for the corresponding moment, so the smallest required overlap controls the uniform error. The stated orders of $1-a_1$ and $1-a_2$ also keep the positive-definiteness adjustment within the sampling error.
    
    \begin{proposition}
    \label{the: global consistency}
        Suppose Conditions \ref{con: missing pattern}--\ref{con: RE} hold. Let $a_1$ and $a_2$ satisfy the rates in Proposition \ref{the: sample cov}. If $s\sqrt{\frac{\log p}{N_{\Sigma,n}}}=o(1)$ and $\lambda=2\norm{\tmbC-\hSigma\bbeta^*}_{\infty}$, then
        \begin{align*}
            \norm{\cbbeta-\bbeta^*}_2
            =O_p\left(\norm{\bbeta^*}_1\sqrt{\frac{s\log p}{N_n}}\right),\quad
            \norm{\cbbeta-\bbeta^*}_1
            =O_p\left(s\norm{\bbeta^*}_1\sqrt{\frac{\log p}{N_n}}\right).
        \end{align*}
    \end{proposition}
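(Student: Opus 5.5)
The argument is the standard Lasso analysis, with the available-case matrix $\hSigma$ in place of the Gram matrix. We control the gradient at the truth using Proposition \ref{the: sample cov}, and transfer the restricted eigenvalue condition from $\Sigma$ to $\hSigma$.

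\textbf{Step 1: basic inequality.} Write $\bdelta=\cbbeta-\bbeta^*$. Optimality of $\cbbeta$ gives $\cQ_n(\cbbeta)\le\cQ_n(\bbeta^*)$. Since $\cL_n$ is quadratic, this becomes
\[
\tfrac12\bdelta^\top\hSigma\bdelta\le(\tmbC-\hSigma\bbeta^*)^\top\bdelta+\lambda\bigl(\norm{\bbeta^*}_1-\norm{\cbbeta}_1\bigr).
\]
By H\"older's inequality and the choice $\lambda=2\norm{\tmbC-\hSigma\bbeta^*}_\infty$, the first term is at most $(\lambda/2)\norm{\bdelta}_1$. The usual support decomposition then yields
\[
\tfrac12\bdelta^\top\hSigma\bdelta+\tfrac{\lambda}{2}\norm{\bdelta_{S^c}}_1\le\tfrac{3\lambda}{2}\norm{\bdelta_S}_1 .
\]
Because $\hSigma$ is positive definite by Condition \ref{con: PD}, the left side is nonnegative, so $\norm{\bdelta_{S^c}}_1\le3\norm{\bdelta_S}_1$. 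This places $\bdelta$ inside the cone of Condition \ref{con: RE} with room to spare (constant $3\le7$).

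\textbf{Step 2: size of $\lambda$.} Write
\[
\tmbC-\hSigma\bbeta^*=(\tmbC-\mbC)-(\hSigma-\Sigma)\bbeta^*,
\]
using $\Sigma\bbeta^*=\mbC$. Then
\[
\norm{\tmbC-\hSigma\bbeta^*}_\infty\le\norm{\tmbC-\mbC}_\infty+\norm{\hSigma-\Sigma}_{\max}\norm{\bbeta^*}_1 .
\]
Under blockwise missingness, $\min_{j,t}N^{xx}_{jt}=N_{\Sigma,n}$ and $\min_jN^{xy}_j=N_{y,n}$. Proposition \ref{the: sample cov}, conditionally on $R$ and with probability at least $1-8/p$, therefore gives
\[
\lambda\lesssim(1+\norm{\bbeta^*}_1)\sqrt{\log p/N_n}.
\]
We then absorb the constant into $\norm{\bbeta^*}_1$, as the stated rate implicitly does; we may assume $\norm{\bbeta^*}_1$ is bounded below, otherwise we keep a $1\vee$. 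The sample-size conditions $\min N\ge 6\log p$ hold eventually under Condition \ref{con: ratios}.

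\textbf{Step 3: restricted eigenvalue for $\hSigma$ (main obstacle).} Because $\hSigma$ is not a Gram matrix, we cannot invoke sub-Gaussian RE results directly. Instead we use a max-norm perturbation. For $\bdelta$ in the cone,
\[
|\bdelta^\top(\hSigma-\Sigma)\bdelta|\le\norm{\hSigma-\Sigma}_{\max}\norm{\bdelta}_1^2\le16\norm{\hSigma-\Sigma}_{\max}\,s\norm{\bdelta}_2^2,
\]
using $\norm{\bdelta}_1\le4\norm{\bdelta_S}_1\le4\sqrt s\norm{\bdelta}_2$. The last bound in Proposition \ref{the: sample cov} gives $\norm{\hSigma-\Sigma}_{\max}=O_p(\sqrt{\log p/N_{\Sigma,n}})$. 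This bound already includes the $a_1,a_2$ adjustment and the ridge term $(1-a_1)\mathrm I_p$, which is of the same order. The assumption $s\sqrt{\log p/N_{\Sigma,n}}=o(1)$ then makes the perturbation at most $(m/2)\norm{\bdelta}_2^2$ with probability tending to one. Hence $\bdelta^\top\hSigma\bdelta\ge(m/2)\norm{\bdelta}_2^2$. This step is exactly where the sparsity condition is used, and it is the delicate point.

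\textbf{Step 4: conclusion.} Combining Steps 1 and 3 gives
\[
\tfrac m4\norm{\bdelta}_2^2\le\tfrac{3\lambda}2\norm{\bdelta_S}_1\le\tfrac{3\lambda}2\sqrt s\norm{\bdelta}_2 ,
\]
so $\norm{\bdelta}_2\le6\lambda\sqrt s/m$. The cone constraint then gives $\norm{\bdelta}_1\le4\sqrt s\norm{\bdelta}_2\lesssim s\lambda$. Substituting the bound on $\lambda$ from Step 2 yields both stated rates. These hold conditionally on $R$ with probability tending to one, and hence in $O_p$.
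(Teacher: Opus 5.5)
Your proposal is correct and follows the same route as the paper. The paper invokes the basic-inequality and restricted-eigenvalue argument of Theorem 2 in Yu et al.\ (2020), with the moment concentration bounds supplying the gradient control, and then uses the cone inequality to pass from the $\ell_2$ rate to the $\ell_1$ rate. You make explicit what the paper leaves inside that citation: the max-norm transfer of the restricted eigenvalue condition from $\Sigma$ to $\hSigma$, which is where $s\sqrt{\log p/N_{\Sigma,n}}=o(1)$ enters, and you obtain the sharper cone constant $3$ in place of the paper's $7$.

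Your caveat is accurate and worth keeping: $\lambda$ is really of order $(1+\norm{\bbeta^*}_1)\sqrt{\log p/N_n}$, so the stated rate, which carries only the factor $\norm{\bbeta^*}_1$, tacitly requires $\norm{\bbeta^*}_1$ to be bounded away from zero.
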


    Proposition \ref{the: global consistency} translates the moment concentration in Proposition \ref{the: sample cov} into estimation rates. Since the objective involves both covariance and response-covariate moments, its effective sample size is the smaller of the two, $N_n=N_{\Sigma,n}\wedge N_{y,n}$.

\subsection{Algorithmic Convergence}
\label{subsec: algorithmic convergence}

The preceding result controls the statistical error $\cbbeta-\bbeta^*$. We next bound the optimization error $\hbbeta^{(e)}-\cbbeta$ generated by the distributed updates.


    \begin{theorem}
    \label{the: algorithmic convergence}
        Suppose Condition \ref{con: PD} holds and every block subproblem in Algorithm \ref{alg:optimization} is minimized exactly. Let $\hbbeta^{(e)}=(\hbbeta_0^{(e)\top},\ldots,\hbbeta_K^{(e)\top})^\top$, $\lambda_n^{(min)}=\lambda_{\min}(\hSigma)$, $\lambda_n^{(max)}=\lambda_{\max}(\hSigma)$. Define
        \begin{align*}
            q_n^{(1)}&=\left\{1-
            \frac{(\lambda_n^{(min)})^2}
            {(\lambda_n^{(min)})^2+(K+1)(\lambda_n^{(max)})^2}
            \right\}^{1/2}\in(0,1),\\
            c_n&=\cQ_n(\hbbeta^{(0)})-\cQ_n(\cbbeta),\qquad
            q_n^{(2)}=\sqrt{\frac{2c_n}{\lambda_n^{(min)}}}\in[0,\infty).
        \end{align*}
        Then the following statements hold.

        \begin{enumerate}
        \item[(i)] For every $e\geq0$,
        \begin{equation}
        \label{algorithmic l2 bound}
            \norm{\hbbeta^{(e)}-\cbbeta}_2 \leq q_n^{(2)} \cdot (q_n^{(1)})^e,
        \end{equation}

        \item[(ii)] If the conditions of Proposition \ref{the: global consistency} also hold, then
        $$
            \norm{\hbbeta^{(e)}-\bbeta^*}_2
            \leq q_n^{(2)}(q_n^{(1)})^e
            +O_p\left(\norm{\bbeta^*}_1\sqrt{\frac{s\log p}{N_n}}\right).
        $$
    
        \item[(iii)] If $q_n^{(2)}=0$, then $\hbbeta^{(0)}=\cbbeta$. Otherwise, define $\lceil x\rceil_+=\max\{0,\lceil x\rceil\}$, where $\lceil x\rceil$ is the smallest integer not less than $x$. If
        $$
            e\geq
            \left\lceil
            \frac{\log\left\{q_n^{(2)}\sqrt{np/(\log p)}\right\}}
            {\log\{1/q_n^{(1)}\}}
            \right\rceil_+,
        $$
        then
        $$
            \norm{\hbbeta^{(e)}-\cbbeta}_1
            \leq\sqrt{\frac{\log p}{n}}.
        $$
        \end{enumerate}
    \end{theorem}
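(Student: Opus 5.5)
The plan is to treat Algorithm \ref{alg:optimization} as exact block coordinate minimization (BCM) of $\cQ_n$, a strongly convex quadratic plus a separable $\ell_1$ penalty. Equations \eqref{distributed Qn k not 0} and \eqref{distributed Qn k 0} show that each block subproblem is exactly $\cQ_n$ restricted to $\bbeta_k$, with the other blocks held at their current values. By Condition \ref{con: PD}, $\hSigma\succ0$, so $\cQ_n$ is $\lambda_n^{(min)}$-strongly convex. Its smooth part has a gradient that is $\lambda_n^{(max)}$-Lipschitz, and each block Lipschitz constant is at most $\lambda_n^{(max)}$. This gives uniqueness of $\cbbeta$, and by the same argument uniqueness of $\cbthe_j$. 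Throughout, write $F_e=\cQ_n(\hbbeta^{(e)})-\cQ_n(\cbbeta)$.

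\textbf{Step 1 (sufficient decrease).} Let $\bz^{(e,k)}$ denote the iterate after $k$ block updates within sweep $e+1$, so $\bz^{(e,0)}=\hbbeta^{(e)}$ and $\bz^{(e,K+1)}=\hbbeta^{(e+1)}$. Each block subproblem has quadratic part $\frac12\bbeta_k^\top\hSigma_{kk}\bbeta_k$ with $\hSigma_{kk}\succeq\lambda_n^{(min)}\mathrm I$, so the block subproblem is $\lambda_n^{(min)}$-strongly convex and exactly minimized. Therefore
\[
\cQ_n(\bz^{(e,k-1)})-\cQ_n(\bz^{(e,k)})\ge \tfrac{\lambda_n^{(min)}}2\norm{\bz^{(e,k)}-\bz^{(e,k-1)}}_2^2 .
\]
Summing over the blocks gives $F_e-F_{e+1}\ge \frac{\lambda_n^{(min)}}2\norm{\hbbeta^{(e+1)}-\hbbeta^{(e)}}_2^2$.

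\textbf{Step 2 (cost-to-go, the main obstacle).} The goal is $F_{e+1}\le C\norm{\hbbeta^{(e+1)}-\hbbeta^{(e)}}_2^2$ with $C$ chosen so that the rate $q_n^{(1)}$ comes out. The optimality of block $k$ gives $\mathbf0\in\nabla_k\cL_n(\bz^{(e,k)})+\lambda\partial\norm{\bz^{(e,k)}_k}_1$. At the end of the sweep, replace $\bz^{(e,k)}$ by $\hbbeta^{(e+1)}$. The resulting error in the block-$k$ gradient is $\hSigma_{k\cdot}(\hbbeta^{(e+1)}-\bz^{(e,k)})$, whose norm is at most $\lambda_n^{(max)}\norm{\hbbeta^{(e+1)}-\hbbeta^{(e)}}_2$. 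Stacking the $K+1$ blocks gives a subgradient $\bg\in\partial\cQ_n(\hbbeta^{(e+1)})$ with $\norm{\bg}_2^2\le (K+1)(\lambda_n^{(max)})^2\norm{\hbbeta^{(e+1)}-\hbbeta^{(e)}}_2^2$. Strong convexity then gives $F_{e+1}\le \norm{\bg}_2^2/(2\lambda_n^{(min)})$. Combining this with Step 1,
\[
F_e-F_{e+1}\ge \frac{(\lambda_n^{(min)})^2}{(K+1)(\lambda_n^{(max)})^2}F_{e+1},
\]
so that
\[
F_{e+1}\le \Bigl\{1-\frac{(\lambda_n^{(min)})^2}{(\lambda_n^{(min)})^2+(K+1)(\lambda_n^{(max)})^2}\Bigr\}F_e=(q_n^{(1)})^2F_e .
\]
The delicate part is the bookkeeping in this step. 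I need to check that the cyclic order (blocks $B_1,\ldots,B_K$, then $B_0$) does not change the constant. I also need to verify that the per-block gradient differences bound correctly in terms of the full row blocks of $\hSigma$. If a factor is lost, I would fall back on bounding each row by $\lambda_n^{(max)}$ via $\norm{\hSigma_{k\cdot}}_2\le\lambda_n^{(max)}$, which is exactly the bound used above.

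\textbf{Step 3 (conclusions).} Iterating Step 2 gives $F_e\le (q_n^{(1)})^{2e}c_n$. Strong convexity gives $\norm{\hbbeta^{(e)}-\cbbeta}_2^2\le 2F_e/\lambda_n^{(min)}$, which yields (i). Part (ii) follows from the triangle inequality together with Proposition \ref{the: global consistency}. For (iii), if $q_n^{(2)}=0$ then $c_n=0$, and uniqueness of the minimizer gives $\hbbeta^{(0)}=\cbbeta$. Otherwise, use $\norm{\cdot}_1\le\sqrt p\norm{\cdot}_2$. It then suffices that $q_n^{(2)}(q_n^{(1)})^e\le\sqrt{\log p/(np)}$, which is equivalent to the stated lower bound on $e$ after taking logarithms, since $q_n^{(1)}<1$.
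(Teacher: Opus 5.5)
Your proposal is correct and follows essentially the same route as the paper. Both arguments use blockwise sufficient decrease from strong convexity, and both build a subgradient at the end-of-sweep iterate from each block's optimality condition, with gradient error bounded by $\lambda_n^{(max)}$ times the sweep displacement, which yields the $(K+1)(\lambda_n^{(max)})^2$ factor. Both then apply the bound $\cQ_n(\bbeta)-\cQ_n(\cbbeta)\leq\norm{g}_2^2/(2\lambda_n^{(min)})$ to obtain the contraction $D_n/(1+D_n)$; the only difference is that the paper indexes by individual block updates rather than by full sweeps, which does not change the constant.
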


    Part (i) establishes geometric convergence per sweep, while part (ii) shows that ALB attains the centralized statistical rate once the optimization term is negligible. Part (iii) gives an explicit sweep count that reduces the $\ell_1$ optimization error to the level required for debiasing in Section \ref{subsec: oracle property}. Because the argument depends only on the quadratic term $\hSigma$, it also applies to the precision-column problem \eqref{global inverse column}. Specifically, replacing $\hSigma$ by $\hSigma^{(j)}$ and redefining the eigenvalues, objective gap, and minimizer accordingly yields the same three conclusions. Thus, both distributed iterates required for inference can be controlled by the same convergence result.

\subsection{Asymptotic Inference}
\label{subsec: oracle property}
    We now combine the statistical and optimization bounds with the debiasing decomposition in \eqref{debiased error decomposition}. The argument has three parts: a conditional central limit theorem for the leading score term, control of the three debiasing remainders, and consistency of the plug-in variance estimator. We consider a fixed coordinate $j$ and a fixed number of users $K+1$ as $n,p\to\infty$. The following conditions are considered.

    \begin{assumption}
    \label{con: conditional variance of y}
        Assume that there exists a constant $c>0$, independent of $n$ and $p$, such that $\Var(Y\mid\bx)\geq c$ almost surely.
    \end{assumption}

    \begin{assumption}
    \label{con: overlap covariance}
        Let $\mathbf P_n\in\mathbb R^{p\times p}$ be the block matrix whose $(k,l)$ block is $(\mathbf P_n)_{kl}
            =(n n_{kl}^y)/(n_k^yn_l^y)
            \mathbf1_{p_k}\mathbf1_{p_l}^\top$, $k,l=0,\ldots,K$, with $n_{kk}^y=n_k^y$. Assume that there exists a constant $\kappa_P>0$, independent of $n$ and $p$, such that, for all sufficiently large $n$, $\lambda_{\min}(\mathbf P_n\circ\Sigma)\geq\kappa_P.$ Here $\circ$ denotes the Hadamard product.
    \end{assumption}

    \begin{remark}
    \label{remark: conditional variance and overlap covariance}
        Condition \ref{con: conditional variance of y} allows misspecification and heteroscedasticity while excluding a degenerate score variance. For Condition \ref{con: overlap covariance}, setting the $k$th block of $\bd_i$ to $nr_i^yr_i^k\mathbf1_{p_k}/n_k^y$ gives $\mathbf P_n=n^{-1}\sum_i\bd_i\bd_i^\top$. The Schur product theorem then makes $\mathbf P_n\circ\Sigma$ positive semidefinite, and the lower eigenvalue bound ensures that the overlap structure does not produce a degenerate score direction, paralleling the variance condition in \citet{xue2025statistical}.
    \end{remark}
    
    \begin{lemma}
    \label{le: normality}
    Suppose Conditions \ref{con: ratios}--\ref{con: subgaussian} and \ref{con: conditional variance of y}--\ref{con: overlap covariance} hold. {Let $B_{\beta,n}\coloneqq1+\norm{\bbeta^*}_1$ and assume}
    \begin{equation}
    \label{coordinate Lyapunov rate}
        {\norm{\bthe_j^*}_1^4B_{\beta,n}^4/
        \left(n\norm{\bthe_j^*}_2^4\right)\longrightarrow0.}
    \end{equation}
    Then, $v_{n,j}\geq c\kappa_P$ for all sufficiently large $n$ and
    $$
        \frac{\sqrt n\,\bthe_j^{*\top}S_n(\bbeta^*)}{\sqrt{v_{n,j}}}
        \xrightarrow{d}\mathcal N(0,1).
    $$
    \end{lemma}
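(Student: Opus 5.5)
Conditional on $R$, write $\sqrt n\,\bthe_j^{*\top}S_n(\bbeta^*)=n^{-1/2}\sum_{i=1}^n \xi_i$ with $\xi_i=\bthe_j^{*\top}\bphi_i$. Under Condition \ref{con: missing pattern}, given $R$ the $\xi_i$ are independent but not identically distributed: their laws depend on $R_i$ only through the weights $r_i^kr_i^l n/n_{kl}$ and $r_i^kr_i^y n/n_k^y$. They have mean zero because $\E(\bx^k\bx^{l\top})=\Sigma_{kl}$ and $\E(y\bx^k)=\mbC_k$. The plan has three steps: (a) a lower bound on $v_{n,j}=n^{-1}\sum_i\E(\xi_i^2\mid R_i)$, (b) a conditional Lyapunov condition, and (c) the Lyapunov CLT. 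Since the limit $\mathcal N(0,1)$ does not depend on $R$, convergence conditional on $R$ implies the stated convergence.

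\textbf{Variance lower bound.} Decompose $\bphi_i=A_i\bbeta^*-q_i$ using the residual $\varepsilon_i=y_i-\bx_i^\top\bbeta^*$. By \eqref{true estimator}, $\E(\bx\varepsilon)=\mathbf 0$. Write $\bphi_i=\bphi_i^{(1)}+\bphi_i^{(2)}$, where
\[
\bphi_i^{(2)}=-\Big(\tfrac{n}{n_k^y}r_i^kr_i^y\bx_i^k(y_i-\E(y_i\mid\bx_i))\Big)_{k=0}^K
\]
collects the part driven by $e_i=y_i-\E(y_i\mid\bx_i)$, and $\bphi_i^{(1)}$ is a function of $\bx_i$ alone. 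Since $\E(e_i\mid\bx_i)=0$, the two parts are conditionally uncorrelated, so
\[
\E(\xi_i^2\mid R_i)\geq \E\{(\bthe_j^{*\top}\bphi_i^{(2)})^2\mid R_i\}=\E\Big\{\Var(Y\mid\bx_i)\Big(\sum_k \tfrac{n}{n_k^y}r_i^kr_i^y\,\bthe_{j,k}^{*\top}\bx_i^k\Big)^2\,\Big|\,R_i\Big\}.
\]
By Condition \ref{con: conditional variance of y} this is at least $c\,\bthe_j^{*\top}(\bd_i\bd_i^\top\circ\Sigma)\bthe_j^*$, with $\bd_i$ as in Remark \ref{remark: conditional variance and overlap covariance}. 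Averaging over $i$ gives
\[
v_{n,j}\geq c\,\bthe_j^{*\top}(\mathbf P_n\circ\Sigma)\bthe_j^*\geq c\kappa_P\norm{\bthe_j^*}_2^2 .
\]
Since $\norm{\bthe_j^*}_2\geq \theta^*_{jj}\geq 1/\lambda_{\max}(\Sigma)$ and $\norm{\bthe_j^*}_2^2\ge \bthe_j^{*\top}\Sigma\bthe_j^*/\lambda_{\max}(\Sigma)$, I expect the scale-free form $v_{n,j}\gtrsim\norm{\bthe_j^*}_2^2$ to be what the Lyapunov step needs. The stated bound $v_{n,j}\geq c\kappa_P$ then follows under the normalization implicit in the statement, since $\norm{\bthe_j^*}_2\geq\theta^*_{jj}\geq\sigma_{jj}^{-1}=1$ by Condition \ref{con: PD} and Cauchy--Schwarz.

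\textbf{Lyapunov condition.} I will check the fourth-moment version,
\[
\sum_i\E(\xi_i^4\mid R_i)/(n v_{n,j})^2\to0 .
\]
The weights $n/n_{kl}$ and $n/n_k^y$ are bounded by Condition \ref{con: ratios}. Using $|\bthe^\top A\bbeta|\leq\norm{\bthe}_1\norm{\bbeta}_1\norm{A}_{\max}$, we get
\[
|\xi_i|\lesssim\norm{\bthe_j^*}_1\Big\{B_{\beta,n}\big(\norm{\bx_i}_\infty^2+\norm{\Sigma}_{\max}\big)+\norm{y_i\bx_i}_\infty+\norm{\mbC}_\infty\Big\}.
\]
A sup-norm of $p$ sub-Gaussian products would cost a $\log p$ factor, which is absent from \eqref{coordinate Lyapunov rate}. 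To avoid it, I will expand $\xi_i$ as a sum of scalars
\[
\sum_{k,l}w_{i,kl}\,(\bthe_{j,k}^{*\top}\bx_i^k)(\bx_i^{l\top}\bbeta_l^*)-\sum_k w_{i,k}\,(\bthe_{j,k}^{*\top}\bx_i^k)\,y_i
\]
plus centering constants. Each linear form $\bthe_{j,k}^{*\top}\bx^k$ is sub-Gaussian with parameter $L\norm{\bthe_{j,k}^*}_1$ by Condition \ref{con: subgaussian} and H\"older on the mgf, and similarly $\bx^{l\top}\bbeta^*_l$ has parameter $L\norm{\bbeta_l^*}_1$. Their products therefore have bounded fourth moments of order $\norm{\bthe_j^*}_1^4B_{\beta,n}^4$, with $K$ fixed. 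Hence
\[
\sum_i\E(\xi_i^4\mid R_i)\lesssim n\norm{\bthe_j^*}_1^4B_{\beta,n}^4,
\]
and dividing by $n^2v_{n,j}^2\gtrsim n^2\norm{\bthe_j^*}_2^4$ gives exactly \eqref{coordinate Lyapunov rate}. The Lyapunov CLT for triangular arrays, applied conditionally on $R$, then finishes the proof.

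The main obstacle is the variance lower bound: the conditioning argument separating the $\Var(Y\mid\bx)$ part must be arranged so that the cross terms between the $A_i$-part and the $q_i$-part vanish. Since $A_i\bbeta^*$ mixes different overlap weights $n/n_{kl}$ and $n/n_k^y$, $\bphi_i$ is not simply $\bx\varepsilon$, and I will decompose $y_i=\E(y_i\mid\bx_i)+e_i$ rather than use $\varepsilon_i$ directly. A secondary point is matching the normalization $\norm{\bthe_j^*}_2$ in \eqref{coordinate Lyapunov rate} with the lower bound, which I handle by keeping the $\norm{\bthe_j^*}_2^2$ factor throughout.
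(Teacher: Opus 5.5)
Your proposal is correct and follows the paper's proof. The paper obtains the same bound $\Omega_n\succeq c(\mathbf P_n\circ\Sigma)\succeq c\kappa_P\mathrm I_p$ via the law of total variance, which is equivalent to your orthogonal split off of $y_i-\E(y_i\mid\bx_i)$, and then uses $\theta^*_{jj}\geq 1/\sigma_{jj}=1$ and a conditional fourth-moment Lyapunov ratio exactly as you do; the only cosmetic difference is that the paper gets $\E(Z_{i,j}^4\mid R)\lesssim\norm{\bthe_j^*}_1^4B_{\beta,n}^4$ from the bound $\norm{Z_{i,j}}_{\psi_1}\leq C\norm{\bthe_j^*}_1B_{\beta,n}$ (products of sub-Gaussians are sub-exponential), while you use sub-Gaussianity of the linear forms $\bthe_{j,k}^{*\top}\bx^k$, and both routes avoid the $\log p$ factor.
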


 
    \begin{remark}
    \label{remark: normality}
        Positive response and pairwise-overlap proportions yield the $\sqrt n$ normalization. In contrast, Theorem 2 of \citet{zhang2026additive} is indexed by the number of complete observations $n_{\mathrm{comp}}$. Our framework retains the $\sqrt n$ rate even when $n_{\mathrm{comp}}=0$, provided all required overlaps remain proportional to $n$.
    \end{remark}
    
    Lemma \ref{le: normality} handles the leading score term. We next combine it with the estimation and optimization rates from Sections \ref{subsec: estimation theory} and \ref{subsec: algorithmic convergence} to control the remainders and the plug-in variance estimator. Recall $N_{\Sigma,n}$, $N_{y,n}$, and $N_n$ from Section \ref{subsec: estimation theory}, and let $s_j\coloneqq\norm{\bthe_j^*}_0$. Define
    \begin{align*}
    r_{\beta,n}&=s\norm{\bbeta^*}_1\sqrt{\frac{\log p}{N_n}}+\sqrt{\frac{\log p}{n}}, &r_{\theta,n}&=s_j\norm{\bthe_j^*}_1\sqrt{\frac{\log p}{N_{\Sigma,n}}}+\sqrt{\frac{\log p}{n}},\\
    a_n&=\norm{\bbeta^*}_1\sqrt{\frac{\log p}{N_{\Sigma,n}}}
          +\sqrt{\frac{\log p}{N_{y,n}}}, &\mu_{j,n}&=\norm{\bthe_j^*}_1\sqrt{\frac{\log p}{N_{\Sigma,n}}}+\sqrt{\frac{\log p}{n}},
    \qquad h_n=\sqrt{\frac{\log p}{N_{\Sigma,n}}}.
    \end{align*}
    Also set $T_{\theta,n}\coloneqq\norm{\bthe_j^*}_1+r_{\theta,n}$.
    The rates $r_{\beta,n}$ and $r_{\theta,n}$ combine the statistical $\ell_1$ errors and the optimization tolerance for the regression and precision-column estimators, respectively. The remaining quantities $a_n$, $\mu_{j,n}$, and $h_n$ control the score error, inverse-equation residual, and covariance adjustment, respectively. 

    To obtain the inference result, let $\hbbeta$ and $\hbthe_j$ denote distributed iterates satisfying $\norm{\hbbeta-\cbbeta}_1\leq\sqrt{(\log p)/n}$ and $\norm{\hbthe_j-\cbthe_j}_1\leq\sqrt{(\log p)/n}$, respectively. Part (iii) of Theorem \ref{the: algorithmic convergence} gives a sufficient stopping rule for both objectives.

    \begin{theorem}
    \label{the: oracle property}
        Suppose Conditions \ref{con: ratios}--\ref{con: overlap covariance} hold, together with  the conditions in Propositions \ref{the: sample cov} and \ref{the: global consistency}. Let $S_j\coloneqq\operatorname{supp}(\bthe_j^*)$. Assume that Condition \ref{con: RE} also holds with $S$ replaced by $S_j$ and that $s_j\sqrt{\frac{\log p}{N_{\Sigma,n}}}=o(1)$, so that Proposition \ref{the: global consistency} applies to the inverse-column problem. Assume $1-a_{1,j}=O\{\sqrt{(\log p)/(\min_k n_k)}\}$ and $1-a_{2,j}=O\{\sqrt{(\log p)/(\min_{k\neq l}n_{kl})}\}$. Choose its theoretical penalty as $\lambda_j=2\norm{({\hSigma^{(j)}}-\Sigma)\bthe_j^*}_{\infty}
            =O_p\left(\norm{\bthe_j^*}_1
            \sqrt{\frac{\log p}{N_{\Sigma,n}}}\right)$. Assume the coordinate-specific Lyapunov condition \eqref{coordinate Lyapunov rate} and
        \begin{align}
        \label{oracle rate conditions}
        \sqrt n\,r_{\theta,n}a_n&\to0,&
        \sqrt n\,r_{\beta,n}(\norm{\bthe_j^*}_1+r_{\theta,n})h_n&\to0,&
        \sqrt n\,r_{\beta,n}\mu_{j,n}&\to0,
        \end{align}
        together with
        \begin{equation}
        \label{variance plugin rate conditions}
        r_{\theta,n}B_{\beta,n}\log p/\norm{\bthe_j^*}_2\to0,
        \quad T_{\theta,n}r_{\beta,n}\log p/\norm{\bthe_j^*}_2\to0,
        \quad T_{\theta,n}a_n/\norm{\bthe_j^*}_2\to0.
        \end{equation}
        Then, for the fixed coordinate $j$,
        \begin{align}
        \label{oracle asymptotic linearity}
            \sqrt n(\hb_j-\beta_j^*)
            &=-\frac1{\sqrt n}\sum_{i=1}^n\bthe_j^{*\top}\bphi_i+o_p(1),\\
        \label{oracle variance consistency}
            \hv_{n,j}/v_{n,j}&\xrightarrow{p}1,
        \end{align}
        and hence
        $$
        \frac{\sqrt{n}(\hb_j-\beta_j^*)}{\sqrt{\hv_{n,j}}}
        \xrightarrow{d}\mathcal{N}(0,1),
        $$
        where $\hv_{n,j}$ is defined in \eqref{estimator of variance}. 
            \end{theorem}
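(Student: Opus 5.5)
The plan follows the decomposition \eqref{debiased error decomposition}. Multiplying by $\sqrt n$, the leading term is $-\sqrt n\,\bthe_j^{*\top}S_n(\bbeta^*)=-n^{-1/2}\sum_i\bthe_j^{*\top}\bphi_i$. Hence \eqref{oracle asymptotic linearity} reduces to showing $\sqrt n\Delta_r=o_p(1)$ for $r=1,2,3$. I first collect the rates for the two estimated vectors. By Proposition \ref{the: global consistency} and Theorem \ref{the: algorithmic convergence}(iii), together with the triangle inequality, $\norm{\hbbeta-\bbeta^*}_1=O_p(r_{\beta,n})$. Applying Proposition \ref{the: global consistency} to \eqref{global inverse column} gives $\norm{\hbthe_j-\bthe_j^*}_1=O_p(r_{\theta,n})$; its score is $\hSigma^{(j)}\bthe_j^*-\be_j=(\hSigma^{(j)}-\Sigma)\bthe_j^*$, so $\lambda_j$ matches the stated order via the $\hSigma^{(j)}$ bound in Proposition \ref{the: sample cov}. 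Consequently $\norm{\hbthe_j}_1\le T_{\theta,n}$ with high probability.

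Each remainder is then controlled by Hölder's inequality. For $\Delta_1$, I bound $|\Delta_1|\le\norm{\hbthe_j-\bthe_j^*}_1\norm{S_n(\bbeta^*)}_\infty$. Writing $S_n(\bbeta^*)=(\tSigma-\Sigma)\bbeta^*-(\tmbC-\mbC)$, Proposition \ref{the: sample cov} gives $\norm{S_n(\bbeta^*)}_\infty=O_p(a_n)$, and the first part of \eqref{oracle rate conditions} then kills $\sqrt n\Delta_1$. For $\Delta_2$, I bound $|\Delta_2|\le\norm{\hbthe_j}_1\norm{\tSigma-\hSigma^{(j)}}_{\max}\norm{\hbbeta-\bbeta^*}_1$. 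Here $\tSigma-\hSigma^{(j)}=(1-a_{1,j})(\tSigma_I-\mathrm I_p)+(1-a_{2,j})\tSigma_C$ because $c_{tr}/p=1$, so its max norm is $O_p(h_n)$ by the assumed orders of $1-a_{1,j}$ and $1-a_{2,j}$ and boundedness of $\norm{\tSigma}_{\max}$. The second part of \eqref{oracle rate conditions} applies. For $\Delta_3$, I bound $|\Delta_3|\le\norm{\hSigma^{(j)}\hbthe_j-\be_j}_\infty\norm{\hbbeta-\bbeta^*}_1$. The KKT conditions at $\cbthe_j$ give $\norm{\hSigma^{(j)}\cbthe_j-\be_j}_\infty\le\lambda_j$, and the optimization error contributes $\norm{\hSigma^{(j)}}_{\max}\sqrt{(\log p)/n}$. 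Together these give $O_p(\mu_{j,n})$, and the third part of \eqref{oracle rate conditions} closes the argument. All probability statements are conditional on $R$, and they transfer unconditionally by dominated convergence.

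For \eqref{oracle variance consistency}, Lemma \ref{le: normality} gives $v_{n,j}\ge c\kappa_P$, so it suffices to show $|\hv_{n,j}-v_{n,j}|=o_p(1)$, normalized where needed by $\norm{\bthe_j^*}_2^2\lesssim v_{n,j}$. I would introduce the oracle $\tilde v=(n-1)^{-1}\sum_i(\bthe_j^{*\top}(\bphi_i-\bar\bphi))^2$. First, $\tilde v/v_{n,j}\to1$ follows from a weak law for independent (conditionally on $R$) summands; the fourth-moment bound is supplied by sub-Gaussianity and the Lyapunov rate \eqref{coordinate Lyapunov rate}. Second, I write $\hbthe_j^\top\hbphi_i-\bthe_j^{*\top}\bphi_i$ as a sum of three differences: $(\hbthe_j-\bthe_j^*)^\top\bphi_i$; $\hbthe_j^\top(A_i(\hbbeta-\bbeta^*))$; and $\hbthe_j^\top$ applied to the plug-in moment differences $(\hA_i-A_i)\hbbeta-(\hq_i-q_i)$. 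The last difference is nonrandom across $i$ up to the indicators, and it is bounded by $O_p(a_n)$-type quantities. Using $\max_i\norm{\bx_i}_\infty^2=O_p(\log p)$ (after a union bound over $n,p$) and $(a+b)^2$ inequalities plus Cauchy--Schwarz, the mean squared deviation is controlled by $r_{\theta,n}B_{\beta,n}\log p$, $T_{\theta,n}r_{\beta,n}\log p$ and $T_{\theta,n}a_n$, each relative to $\norm{\bthe_j^*}_2$, which matches \eqref{variance plugin rate conditions}.

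Finally, Slutsky's lemma combined with Lemma \ref{le: normality} yields the studentized limit. I expect the main obstacle to be the variance-consistency step: the summands $\hbphi_i$ are data-dependent through $\hbbeta$ and the centered moments, and uniform control over $i$ of $\norm{\bx_i}_\infty$ and $|y_i|$ is needed in order to obtain exactly the $\log p$ factors that appear in \eqref{variance plugin rate conditions}.
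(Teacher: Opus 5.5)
Your proposal is correct and follows the paper's proof essentially step for step. It uses the same H\"older bounds on $\Delta_1,\Delta_2,\Delta_3$ through the $O_p(a_n)$, $O_p(h_n)$ and KKT-based $O_p(\mu_{j,n})$ rates, followed by an oracle fourth-moment weak law and a mean-square stability bound for $\hbthe_j^\top\hbphi_i-\bthe_j^{*\top}\bphi_i$; your split into $A_i(\hbbeta-\bbeta^*)$ and $(\hA_i-A_i)\hbbeta$ is only a regrouping of the paper's.

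One small correction is needed. A union bound over $i\le n$ and $j\le p$ gives $\max_i\norm{\bx_i}_\infty^2=O_p(\log(np))$, not $O_p(\log p)$, so uniform control in $i$ is neither needed nor sharp. The paper instead bounds the averages $n^{-1}\sum_i\norm{\bphi_i}_\infty^2=O_p(B_{\beta,n}^2\log^2p)$ and $n^{-1}\sum_i\norm{\hA_i}_{\max}^2=O_p(\log^2p)$ via the expected squared maximum over coordinates and Markov's inequality, which gives exactly the $\log p$ factors in \eqref{variance plugin rate conditions}.
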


The three conditions in \eqref{oracle rate conditions} respectively make $\sqrt n\Delta_1$, $\sqrt n\Delta_2$, and $\sqrt n\Delta_3$ asymptotically negligible, which yields the linear representation in \eqref{oracle asymptotic linearity}. The three conditions in \eqref{variance plugin rate conditions} respectively control the effects of estimating $\bthe_j^*$, $\bbeta^*$, and the available-case moments on $\hv_{n,j}$. Their normalization by $\norm{\bthe_j^*}_2$ reflects the natural standard-deviation scale because \eqref{score variance lower bound} in the Supplement gives $\sqrt{v_{n,j}}\geq\sqrt{c\kappa_P}\norm{\bthe_j^*}_2$. When $\norm{\bbeta^*}_1$ and $\norm{\bthe_j^*}_1$ are bounded, a simple sufficient condition for \eqref{oracle rate conditions} and \eqref{variance plugin rate conditions} is $(s+s_j+1)(\log p)^{3/2}/\sqrt n\to0$, which is comparable to the sparsity and dimensionality condition of \citet{van2014asymptotically}. More generally, the stated conditions allow these norms to grow sufficiently slowly relative to the effective sample sizes.

\section{ALB with One-Time Perturbed Releases}
\label{sec: perturbed estimator}

Although ALB does not transmit raw covariates, its sample-level messages may still disclose individual values. An observed response may be shared with every user whose covariates are available for that ID, and repeated linear summaries may reveal exact covariate information. For example, if $\bx_i^l=0$, the corresponding entry of every transmitted linear combination remains zero across sweeps. We therefore introduce one-time perturbations for the transmitted responses and cross-user prediction summaries.  The within-user covariance $\tSigma_k$ remains unperturbed because user $B_k$ computes and retains it locally.

\subsection{Perturbation Mechanism and Objective}
\label{subsec: perturbation mechanism}

We perturb responses and covariates separately. For the responses, let $\mathcal R_r\subseteq\mathcal I^y$ be the set of IDs whose observed responses are initially held by user $B_r$, for $r=0,\ldots,K$. We assume that every observed response has exactly one initial holder. Thus, the sets $\mathcal R_0,\ldots,\mathcal R_K$ are disjoint and $\bigcup_{r=0}^K\mathcal R_r=\mathcal I^y$. For $i\in\mathcal R_r$, user $B_r$ releases $y_i^{\mathrm{priv}}=y_i+\xi_i^r$, \textcolor{red}{where $\xi_i^r$ is independent of the data and all other perturbations.} This response is perturbed once and the same release is used by every user that subsequently needs it. Let $\by_k^{\mathrm{priv}}\in\mathbb R^{n_k}$ align these releases with the rows of $\bX^k$, using zero when $i\notin\mathcal I_k^y$.

For the covariates, each user $B_k$ independently draws a random matrix $\Xi^k\in\mathbb R^{n_k\times p_k}$ once and sets $\bX_{\mathrm{priv}}^k=\bX^k+\Xi^k$, with rows ordered by $\mathcal I_k$. For $l\neq k$, let $\bX_{\mathrm{priv}}^{k,l}\in\mathbb R^{n_{kl}\times p_k}$ denote the rows indexed by the shared IDs $\mathcal I_{kl}$. When $B_k$ updates $\bbeta_k$, user $B_l$ sends $\bu_{l\to k}^{\mathrm{priv}}(\bbeta_l)=\bX_{\mathrm{priv}}^{l,k}\bbeta_l\in\mathbb R^{n_{kl}}$. The perturbation matrix $\Xi^l$ is sampled only once and reused whenever $B_l$ forms a cross-user message.

Together, these one-time releases induce perturbed cross-user and response-covariate moments while leaving local moments unchanged. Specifically, the local covariance remains $\tSigma_k=n_k^{-1}\bX^{k\top}\bX^k$. For $k\neq l$, define $\tSigma_{kl}^{\mathrm{priv}}=n_{kl}^{-1}\bX_{\mathrm{priv}}^{k,l\top}\bX_{\mathrm{priv}}^{l,k}$ and $\tmbC_k^{\mathrm{priv}}=(n_k^y)^{-1}\bX^{k\top}\by_k^{\mathrm{priv}}$, and stack the latter as $\tmbC^{\mathrm{priv}}$. Let $\tSigma_C^{\mathrm{priv}}$ collect the perturbed cross-user blocks. Then $\tSigma^{\mathrm{priv}}=\tSigma_I+\tSigma_C^{\mathrm{priv}}$ and $\hSigma^{\mathrm{priv}}=a_1\tSigma_I+a_2\tSigma_C^{\mathrm{priv}}+(1-a_1)(c_{tr}/p)\mathrm I_p$, where $c_{tr}=\sum_{k=0}^K\operatorname{tr}(\tSigma_k)$. Accordingly, replacing the transmitted moments in $\cQ_n$ gives the perturbed estimator
\begin{equation}
\label{private objective}
    \cbbeta^{\mathrm{priv}}
    =\argmin_{\bbeta\in\mathbb R^p}\cQ_n^{\mathrm{priv}}(\bbeta),
    \qquad
    \cQ_n^{\mathrm{priv}}(\bbeta)
    =\frac12\bbeta^\top\hSigma^{\mathrm{priv}}\bbeta
     -\tmbC^{\mathrm{priv}\top}\bbeta
     +\lambda\norm{\bbeta}_1.
\end{equation}
Algorithm \ref{alg:optimization} applies with the same communication order and optimizes $\cQ_n^{\mathrm{priv}}$ without transmitting raw cross-user records. For inference on coordinate $j$, we use $\hSigma^{\mathrm{priv},(j)}=a_{1,j}^{\mathrm{priv}}\tSigma_I+a_{2,j}^{\mathrm{priv}}\tSigma_C^{\mathrm{priv}}+(1-a_{1,j}^{\mathrm{priv}})(c_{tr}/p)\mathrm I_p$ to estimate the precision column and debias with $S_n^{\mathrm{priv}}(\bbeta)=\tSigma^{\mathrm{priv}}\bbeta-\tmbC^{\mathrm{priv}}$.

\subsection{Statistical Properties}
\label{subsec: privacy utility}

We now quantify the statistical cost of perturbation. The analysis allows general sub-Gaussian perturbations, with the Gaussian perturbations used in the numerical studies as a special case.

\begin{assumption}
\label{con: noises}
For $k,r=0,\ldots,K$, the entries of $\Xi^k$ and $\{\xi_i^r:i\in\mathcal R_r\}$ are mutually independent, centered, and sub-Gaussian, with variances $\tau_{X,k}^2$ and $\tau_{Y,r}^2$, respectively. Their sub-Gaussian parameters are bounded uniformly in $n$ and $p$. The perturbations are independent of the complete data and the missingness pattern and are sampled only once.
\end{assumption}

Under this condition, $\E(\tSigma_k)=\Sigma_{kk}$, $\E(\tSigma_{kl}^{\mathrm{priv}})=\Sigma_{kl}$, and $\E(\tmbC_k^{\mathrm{priv}})=\mbC_k$. Thus, perturbation preserves the population targets and changes only the variability of the transmitted moments.

\begin{proposition}[Concentration of Perturbed Moments]
\label{the: perturbed moment concentration}
Suppose Conditions \ref{con: missing pattern}, \ref{con: PD}, \ref{con: subgaussian}, and \ref{con: noises} hold, with the positive-definiteness requirements in Condition \ref{con: PD} imposed on $\hSigma^{\mathrm{priv}}$ and $\hSigma^{\mathrm{priv},(j)}$. Let $1-a_1=\textcolor{blue}{O\{\sqrt{(\log p)/(\min_k n_k)}\}}$, $1-a_2=\textcolor{blue}{O\{\sqrt{(\log p)/(\min_{k\neq l}n_{kl})}\}}$. If $\min_{j,t}N_{jt}^{xx}\geq c_0\log p$ and $\min_jN_j^{xy}\geq c_0\log p$ for a sufficiently large constant $c_0$, then there exist constants $C_\Sigma,C_C,C_H>0$, independent of $n$ and $p$, such that
\begin{align*}
    \bbP\left(\left.\norm{\tSigma^{\mathrm{priv}}-\Sigma}_{\max}
    >C_\Sigma\sqrt{\frac{\log p}{\min_{j,t}N_{jt}^{xx}}}\,\right|R\right)
    &\leq \frac4p,\\
    \bbP\left(\left.\norm{\tmbC^{\mathrm{priv}}-\mbC}_{\infty}
    >C_C\sqrt{\frac{\log p}{\min_jN_j^{xy}}}\,\right|R\right)
    &\leq \frac4p,\\
    \bbP\left(\left.\norm{\hSigma^{\mathrm{priv}}-\Sigma}_{\max}
    >C_H\sqrt{\frac{\log p}{\min_{j,t}N_{jt}^{xx}}}\,\right|R\right)
    &\leq \frac4p.
\end{align*}
If $1-a_{1,j}^{\mathrm{priv}}=O\{\sqrt{(\log p)/(\min_k n_k)}\}$ and $1-a_{2,j}^{\mathrm{priv}}=O\{\sqrt{(\log p)/(\min_{k\neq l}n_{kl})}\}$, the final bound also holds with $\hSigma^{\mathrm{priv}}$ replaced by $\hSigma^{\mathrm{priv},(j)}$.
\end{proposition}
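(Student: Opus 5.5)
The plan is to reduce each perturbed moment to its unperturbed counterpart plus noise terms, and to bound every noise term entrywise, conditional on $R$, at the same $\sqrt{(\log p)/N}$ scale. The local blocks $\tSigma_k$ are unchanged, so Proposition \ref{the: sample cov} already controls those entries. For a cross block $k\neq l$ and a pair of coordinates $j$ (in $B_k$) and $t$ (in $B_l$), expand
\[
\tSigma^{\mathrm{priv}}_{kl}-\tSigma_{kl}=n_{kl}^{-1}\bigl(\bX^{k,l\top}\Xi^{l,k}+\Xi^{k,l\top}\bX^{l,k}+\Xi^{k,l\top}\Xi^{l,k}\bigr).
\]
Each entry of this matrix is an average over $i\in\mathcal I_{kl}$ of products $x_{ij}\xi_{it}$, $\xi_{ij}x_{it}$, or $\xi_{ij}\xi_{it}$. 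Similarly,
\[
\tmbC_k^{\mathrm{priv}}-\tmbC_k=(n_k^y)^{-1}\sum_{i\in\mathcal I_k^y}\xi_i^{r(i)}\bx_i^k,
\]
where $r(i)$ is the unique initial holder of $y_i$.

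The summands in these averages are i.i.d. conditional on $R$, by Conditions \ref{con: missing pattern} and \ref{con: noises}. They are centered, because the noise is centered and independent of the data. Since $k\neq l$, the matrices $\Xi^k$ and $\Xi^l$ are independent, so $\xi_{ij}\xi_{it}$ is also centered. Each summand is a product of two sub-Gaussian variables with uniformly bounded parameters, by Conditions \ref{con: subgaussian} and \ref{con: noises}, and is therefore sub-exponential with bounded $\psi_1$-norm. Bernstein's inequality then gives, for a single entry,
\[
\bbP\bigl(|\text{avg}|\ge t\mid R\bigr)\le 2\exp\{-cN\min(t^2,t)\}.
\]
Take $t=C\sqrt{(\log p)/N}$ with $N=N_{jt}^{xx}$ or $N_j^{xy}$. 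The requirement $N\ge c_0\log p$ with $c_0$ large keeps $t\le1$, so the bound is in the sub-Gaussian regime and equals at most $2p^{-3}$ once $C$ is large. A union bound over at most $p^2$ entries for each of the three noise terms, and over $p$ entries for $\tmbC$, gives a failure probability of order $p^{-1}$.

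The pieces are then combined with the triangle inequality,
\[
\norm{\tSigma^{\mathrm{priv}}-\Sigma}_{\max}\le\norm{\tSigma-\Sigma}_{\max}+\norm{\tSigma^{\mathrm{priv}}-\tSigma}_{\max},
\]
using Proposition \ref{the: sample cov} for the first term. The same decomposition handles $\tmbC^{\mathrm{priv}}$. The constants $C_\Sigma$ and $C_C$ absorb $v_1$, $v_2$ and the Bernstein constants, and the event probabilities are split so that the total stays below $4/p$, enlarging the constants if needed.

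For $\hSigma^{\mathrm{priv}}$, write
\[
\hSigma^{\mathrm{priv}}-\Sigma=a_1(\tSigma_I-\Sigma_I)+a_2(\tSigma_C^{\mathrm{priv}}-\Sigma_C)-(1-a_1)\Sigma_I-(1-a_2)\Sigma_C+(1-a_1)(c_{tr}/p)\mathrm I_p.
\]
Standardization makes the diagonal of $\Sigma$ and $\tSigma$ equal to one, so $c_{tr}/p=1$. The max-norm of $\Sigma$ is at most $1$ by Cauchy--Schwarz. The stated orders of $1-a_1$ and $1-a_2$ then make the deterministic adjustment terms $O(\sqrt{(\log p)/\min N_{jt}^{xx}})$, because $\min_k n_k$ and $\min_{k\neq l}n_{kl}$ are both at least $\min_{j,t}N^{xx}_{jt}$. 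The case of $\hSigma^{\mathrm{priv},(j)}$ is identical, with $(a_{1,j}^{\mathrm{priv}},a_{2,j}^{\mathrm{priv}})$ in place of $(a_1,a_2)$.

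The main obstacle is the noise–noise term $\xi_{ij}\xi_{it}$ and getting the probability constant exactly $4/p$. For the noise–noise term, I would use the sub-exponential Bernstein bound together with $N\ge c_0\log p$ to stay in the quadratic regime. For the probability constant, I would allocate the budgets as $p^{-3}$ per entry and per term, and rely on $p$ being large, with adjustable constants.
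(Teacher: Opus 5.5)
Your proposal is correct, but it takes a different route from the paper. The paper never separates the noise from the data.

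For a cross-user entry, the paper treats the perturbed covariates $x_{ij}+\xi_{ij}$ themselves as sub-Gaussian variables whose cross moment is still $\sigma_{jt}$, because the two users' noises are independent and centered. It then applies the polarization--Bernstein argument of Lemma~1 of Ravikumar et al.\ directly. For $\tmbC^{\mathrm{priv}}$ it proves Lemma~\ref{le: convergence of perturbed individual cross-covariance} by writing $\bar y_i^{\mathrm{priv}}\bar X_{ij}-\rho_j=\frac14(U_{ij}^2-\mu_{U,r})-\frac14(V_{ij}^2-\mu_{V,r})$. These centerings are specific to the response holder, and the noise variance $\eta_r^2$ cancels between them. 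The paper then verifies the Bernstein moment condition with a common constant $C_{\max}$.

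You instead write each perturbed moment as its unperturbed version plus centered averages of data--noise and noise--noise products. You control the unperturbed part with Proposition~\ref{the: sample cov} and the noise averages with sub-exponential Bernstein. This is more modular: it reuses the unperturbed result verbatim and shows the perturbation cost as an additive term that scales with the noise levels. It needs no polarization identity and no grouping of indices by response holder, only independence, centering and bounded $\psi_1$-norms. For the same reason, ``i.i.d.'' should read ``independent'', since the response-noise variance depends on $r(i)$; nothing relies on identical distribution. The paper's route, for its part, gives a single tail bound per entry, so the union bound yields exactly $4/p$ without splitting the probability budget.

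That budget is the one point to tighten. Proposition~\ref{the: sample cov}, used as stated, bounds the unperturbed failure probability only by $4/p$ at the fixed constant $v_1$ (respectively $v_2$). Adding any positive noise-term failure probability therefore gives a bound strictly above $4/p$ for every $p$, and taking $p$ large does not fix this.

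To reach exactly $4/p$, go back to the exponential tail behind that proposition, which has the form $4\exp(-cN\delta^2)$ for $\delta$ below a fixed constant. Enlarge the threshold so that each of the four pieces of a cross entry costs at most $p^{-3}$, giving $4p^{-3}$ per entry. Likewise make each of the two pieces of a response--covariate entry cost at most $2p^{-2}$. The condition $N\ge c_0\log p$ with $c_0$ large keeps the enlarged $\delta$ in the admissible range.

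With that adjustment, your treatment of $\hSigma^{\mathrm{priv}}$ and $\hSigma^{\mathrm{priv},(j)}$ goes through as written. That event is implied deterministically by the $\tSigma^{\mathrm{priv}}$ event once $a_1$ and $a_2$ are bounded.
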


Consequently, when the perturbation scales remain bounded, these bounds have the same orders as their unperturbed counterparts: the perturbations change the constants but not the effective-sample-size rates.

\begin{theorem}[Estimation Rate for Perturbed ALB]
\label{cor: private global consistency}
Suppose the conditions of Proposition \ref{the: global consistency} and Proposition \ref{the: perturbed moment concentration} hold, with $\hSigma$ and $\tmbC$ replaced by $\hSigma^{\mathrm{priv}}$ and $\tmbC^{\mathrm{priv}}$. If $\lambda=2\norm{\tmbC^{\mathrm{priv}}-
    \hSigma^{\mathrm{priv}}\bbeta^*}_{\infty}$, then
\begin{align*}
    \norm{\cbbeta^{\mathrm{priv}}-\bbeta^*}_2
    =O_p\left(\norm{\bbeta^*}_1
      \sqrt{\frac{s\log p}{N_n}}\right), \quad
    \norm{\cbbeta^{\mathrm{priv}}-\bbeta^*}_1 =O_p\left(s\norm{\bbeta^*}_1
      \sqrt{\frac{\log p}{N_n}}\right).
\end{align*}
Moreover, if $\hSigma^{\mathrm{priv}}$ is positive definite and each block subproblem is solved exactly, conclusions (i)--(iii) of Theorem \ref{the: algorithmic convergence} hold with $\cQ_n$, $\cbbeta$, and $\hSigma$ replaced by $\cQ_n^{\mathrm{priv}}$, $\cbbeta^{\mathrm{priv}}$, and $\hSigma^{\mathrm{priv}}$, respectively.
\end{theorem}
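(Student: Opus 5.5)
The plan is to reuse the proof of Proposition \ref{the: global consistency} unchanged, with the perturbed moments in place of the unperturbed ones. That proof needs only three ingredients. The first is a max-norm bound on $\hSigma^{\mathrm{priv}}-\Sigma$ of order $h_n$ with probability at least $1-4/p$; this follows from the third display of Proposition \ref{the: perturbed moment concentration}, since $\min_{j,t}N_{jt}^{xx}=N_{\Sigma,n}$. The second is a bound on $\lambda=2\norm{\tmbC^{\mathrm{priv}}-\hSigma^{\mathrm{priv}}\bbeta^*}_\infty$. The third is a restricted eigenvalue condition for $\hSigma^{\mathrm{priv}}$ inherited from Condition \ref{con: RE}.

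\textbf{Step 1: the penalty level.} Since $\mbC=\Sigma\bbeta^*$, write
\[
\tmbC^{\mathrm{priv}}-\hSigma^{\mathrm{priv}}\bbeta^*=(\tmbC^{\mathrm{priv}}-\mbC)-(\hSigma^{\mathrm{priv}}-\Sigma)\bbeta^*.
\]
Then bound
\[
\lambda\le 2C_C\sqrt{\frac{\log p}{N_{y,n}}}+2C_H\norm{\bbeta^*}_1\sqrt{\frac{\log p}{N_{\Sigma,n}}}\lesssim \norm{\bbeta^*}_1\sqrt{\frac{\log p}{N_n}}.
\]
The last step absorbs the first term, treating $\norm{\bbeta^*}_1$ as bounded below by a constant as in the unperturbed proof.

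\textbf{Step 2: the cone and the restricted eigenvalue.} The basic inequality for the strongly convex objective \eqref{private objective} gives the following for $\bdelta=\cbbeta^{\mathrm{priv}}-\bbeta^*$:
\[
\tfrac12\bdelta^\top\hSigma^{\mathrm{priv}}\bdelta\le \norm{\tmbC^{\mathrm{priv}}-\hSigma^{\mathrm{priv}}\bbeta^*}_\infty\norm{\bdelta}_1+\lambda(\norm{\bbeta^*}_1-\norm{\cbbeta^{\mathrm{priv}}}_1).
\]
This places $\bdelta$ in the cone $\norm{\bdelta_{S^c}}_1\le 3\norm{\bdelta_S}_1$, which lies inside the cone of Condition \ref{con: RE}.

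On that cone,
\[
|\bdelta^\top(\hSigma^{\mathrm{priv}}-\Sigma)\bdelta|\le\norm{\hSigma^{\mathrm{priv}}-\Sigma}_{\max}\norm{\bdelta}_1^2\le 16 s\,C_H h_n\norm{\bdelta}_2^2.
\]
The assumption $s h_n=o(1)$ then yields $\bdelta^\top\hSigma^{\mathrm{priv}}\bdelta\ge (m/2)\norm{\bdelta}_2^2$ with probability tending to one. The standard Lasso algebra gives $\norm{\bdelta}_2\lesssim\sqrt s\lambda/m$ and $\norm{\bdelta}_1\lesssim s\lambda/m$. Substituting Step 1 gives both stated rates.

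\textbf{Step 3: the algorithmic statements.} The proof of Theorem \ref{the: algorithmic convergence} uses only three facts. The objective is a positive definite quadratic plus a separable $\ell_1$ penalty. Each block subproblem is solved exactly. The eigenvalues of the quadratic form enter $q_n^{(1)}$ and $q_n^{(2)}$.

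The perturbed ALB evaluates the cross terms exactly through $\bu^{\mathrm{priv}}_{l\to k}$, because $\Xi^l$ is drawn once and reused. Its block updates are therefore exactly cyclic block minimization of $\cQ_n^{\mathrm{priv}}$. With $\hSigma^{\mathrm{priv}}$ positive definite, parts (i) and (iii) transfer verbatim. Part (ii) follows by combining part (i) with the perturbed rate from Step 2 through the triangle inequality.

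The main obstacle is the bookkeeping in Step 1: $\hSigma^{\mathrm{priv}}$ and $\tmbC^{\mathrm{priv}}$ carry different effective sample sizes. The remaining care is to confirm that the events from Proposition \ref{the: perturbed moment concentration} hold jointly with probability $1-O(1/p)$, conditionally on $R$ and after integrating over the perturbations; a union bound handles this.
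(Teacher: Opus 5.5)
Your proposal is correct and follows the paper's own argument. The paper's proof observes that Proposition~\ref{the: perturbed moment concentration} gives the same max-norm orders for $\hSigma^{\mathrm{priv}}$ and $\tmbC^{\mathrm{priv}}$, with unchanged targets $\Sigma$ and $\mbC$, so the basic-inequality and restricted-eigenvalue argument of Proposition~\ref{the: global consistency} carries over; it then transfers Theorem~\ref{the: algorithmic convergence} because that result is deterministic given a positive-definite quadratic and exact block minimization. You write out the Lasso algebra that the paper cites rather than spells out, and you rightly make explicit the implicit requirement $\norm{\bbeta^*}_1\gtrsim1$, which is needed to absorb the $\sqrt{(\log p)/N_{y,n}}$ term into the stated rate.
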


For coordinatewise inference, we replace the moments and estimators in Section \ref{subsec: debiased estimator} with their perturbed counterparts. Under the corresponding sparsity and rate conditions of Theorem \ref{the: oracle property}, together with a positive lower bound on the perturbed influence variance, the same remainder, Lyapunov, and variance-consistency arguments yield a studentized Gaussian limit.

\textcolor{blue}{To complement the utility analysis with a formal privacy guarantee, Section \ref{app: release level ldp} in the Supplement gives an optional $(\epsilon,\delta)$-local differential privacy calibration for clipped one-time identity-level releases.}

\section{Simulations}
\label{sec: simulation}

\subsection{\textcolor{blue}{Simulation Setup}}
\label{subsec: MSE simulation}

Following the simulation designs of \citet{yu2020optimal} and
\citet{van2014asymptotically}, we evaluate the finite-sample
estimation and inference performance of ALB and the competing
methods described below.

For each observation, we generate
$y_i=\bx_i^\top\bbeta^*+\epsilon_i$, where
$\bx_i\sim\mathcal N(\mathbf0,\Sigma)$ and
$\epsilon_i\sim\mathcal N(0,1)$. The dimension is fixed at $p=900$, and the
covariates are partitioned equally among three users, each holding
300 coordinates. We consider two covariance structures. The first is autoregressive,
$ \Sigma=\left(0.6^{|j-t|}\right)_{j,t=1}^{900},$ whereas the second is block diagonal with 180 independent
$5\times5$ blocks having unit diagonal entries and off-diagonal entries equal to 0.15.

We combine these covariance structures with two signal patterns. Under the first
pattern,
$\bbeta^*
=0.5(\mathbf1_5^\top,\mathbf0_{295}^\top,
\mathbf1_5^\top,\mathbf0_{295}^\top,
\mathbf1_5^\top,\mathbf0_{295}^\top)^\top$,
so that the active coefficients are clustered within each user block. Under the second pattern, five coordinates are selected uniformly without replacement from each 300-dimensional user block and assigned a coefficient of $0.5$, while all remaining coefficients are set to zero. The selected locations are sampled once and held fixed across replications. \textcolor{red}{These combinations define Settings 1--4, all using the missingness pattern in Figure~\ref{fig: simulation settings}.}

\begin{figure}[!t]
    \centering
    \includegraphics[width=1\linewidth]{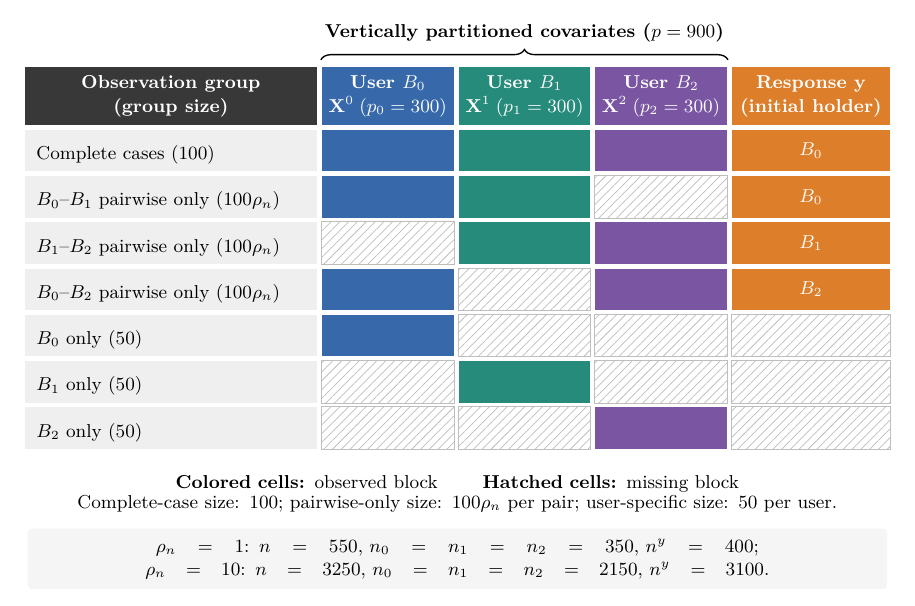}
    \caption{\textcolor{red}{Block-missing pattern used in the simulations.}}
    \label{fig: simulation settings}
\end{figure}

Specifically, each training data set contains 100 complete cases, three disjoint
pairwise-only groups of size $100\rho_n$, and three user-specific groups of
size 50, where $\rho_n$ is the ratio of the pairwise-only group size to the
complete-case sample size. Each pairwise-only observation contains the response
and the two corresponding covariate blocks, whereas each user-specific
observation contains only one covariate block. The three pairwise response
blocks are held by $B_0$, $B_1$, and $B_2$, respectively, and $B_0$ also holds
the complete-case responses. We set $\rho_n\in\{1,10\}$, yielding
$n=250+300\rho_n\in\{550,3250\}$ and pairwise overlap sizes
$100(1+\rho_n)$, while the number of complete cases remains fixed. For every setting and every value of $\rho_n$, the experiment is replicated $M=200$ times.

\subsection{Prediction and Estimation Performance}

We first evaluate the performance of ALB in prediction and
coefficient estimation. We compare it with four alternatives. (1) ALB-P is the
one-time perturbed version of ALB; the standard deviation of the Gaussian
perturbation for each user is independently drawn from
$\operatorname{Unif}[0.3,0.5]$. (2) DISCOM \citep{yu2020optimal} centrally
assembles the available-case moments and minimizes the same regularized
objective as ALB, thereby providing a direct benchmark for the optimization
error. (3) CC-Lasso fits the Lasso using only
the 100 complete observations. (4) Oracle Lasso reveals \textcolor{blue}{all masked covariate blocks and responses for all records} and is therefore infeasible. ALB, ALB-P, and DISCOM use the tuning procedure described in Section \ref{app: covariance regularization} in the Supplement, whereas CC-Lasso and Oracle Lasso select $\lambda$ by cross-validation. ALB and ALB-P are run for 10 sweeps. An independent test sample of size $n_{\mathrm{test}}=500$ is generated in each replication. For the $m$th replication, we evaluate the prediction mean squared error and the coefficient estimation error by \textcolor{blue}{$\operatorname{MSE}^{[m]}
    =n_{\mathrm{test}}^{-1}\sum_{i=1}^{n_{\mathrm{test}}}
      \left(y_{i}^{\mathrm{test},[m]}-\hat y_{i}^{\mathrm{test},[m]}\right)^2,$ and $
    \operatorname{Err}^{[m]}
    =\norm{\hbbeta^{[m]}-\bbeta^*}_2$.} Table \ref{tab: MSE_12} reports the results for Settings 1 and 2, while Table \ref{tab: MSE_34} in the Supplement reports the corresponding results for Settings 3 and 4.

\begin{table}[!t]
\centering
\caption{Prediction and coefficient-estimation performance in Settings 1 and 2. Values are means (SDs) over 200 replications.}
\label{tab: MSE_12}

\setlength{\tabcolsep}{2.8pt}
\renewcommand{\arraystretch}{0.75}
\begin{tabular}{@{}clcccc@{}}
\toprule
$\rho_n$ $(n)$
& Method
& \multicolumn{2}{c}{Setting 1}
& \multicolumn{2}{c}{Setting 2} \\
\cmidrule(lr){3-4}\cmidrule(lr){5-6}
& & MSE & Err & MSE & Err \\
\midrule

\multirow{5}{*}{$1\ (550)$}
& ALB
& 1.374 (0.174) & 0.637 (0.102)
& 1.497 (0.198) & 0.644 (0.096) \\
& ALB-P
& 1.438 (0.209) & 0.690 (0.114)
& 1.771 (0.314) & 0.777 (0.119) \\
& DISCOM
& 1.374 (0.174) & 0.637 (0.102)
& 1.497 (0.198) & 0.644 (0.096) \\
& CC-Lasso
& 1.984 (0.402) & 0.975 (0.166)
& 2.766 (0.775) & 1.151 (0.210) \\
& Oracle Lasso
& 1.106 (0.073) & 0.317 (0.045)
& 1.138 (0.077) & 0.323 (0.038) \\

\addlinespace[0.15em]
\midrule
\multirow{5}{*}{$10\ (3250)$}
& ALB
& 1.075 (0.074) & 0.339 (0.055)
& 1.034 (0.065) & 0.169 (0.031) \\
& ALB-P
& 1.109 (0.077) & 0.412 (0.061)
& 1.039 (0.066) & 0.185 (0.035) \\
& DISCOM
& 1.075 (0.074) & 0.339 (0.055)
& 1.034 (0.065) & 0.169 (0.031) \\
& CC-Lasso
& 1.984 (0.374) & 0.978 (0.182)
& 2.667 (0.619) & 1.136 (0.188) \\
& Oracle Lasso
& 1.019 (0.066) & 0.122 (0.017)
& 1.023 (0.064) & 0.124 (0.013) \\
\bottomrule
\end{tabular}
\end{table}

Tables \ref{tab: MSE_12} and \ref{tab: MSE_34} reveal three main patterns. First, ALB and DISCOM are identical up to the reported digits in all four settings. Thus, after 10 sweeps, the optimization error is numerically negligible, consistent with the geometric convergence in Theorem \ref{the: algorithmic convergence}. Second, increasing $\rho_n$ from 1 to 10 markedly reduces both the MSE and the coefficient estimation error of ALB, whereas CC-Lasso changes little because it always uses the same 100 complete observations. In Settings 1 and 2, the MSE reductions of ALB relative to CC-Lasso are 30.8\% and 45.9\% when $\rho_n=1$, and 45.8\% and 61.2\% when $\rho_n=10$, respectively. The gap between ALB and Oracle Lasso also narrows as the pairwise overlap increases. This improvement is consistent with Proposition \ref{the: global consistency}, whose rates depend on the available overlaps rather than only on complete cases. Third, ALB-P has moderately larger errors than ALB because of the added perturbation, but it remains more accurate than CC-Lasso for both metrics in every setting and continues to improve as $\rho_n$ increases.

\subsection{Confidence-Interval Performance}
\label{subsec: CI simulation}

We next evaluate the finite-sample performance of the 95\%
coordinatewise confidence intervals in \eqref{distributed CI}. In addition to
ALB and its perturbed version ALB-P, we consider four benchmarks:
(1) ALB-CP computes the regularized precision columns centrally while retaining
the ALB point estimator and available-case score, thereby isolating the numerical
effect of distributed precision-column optimization; (2) CC-DL
\citep{van2014asymptotically} applies debiased Lasso using only the 100 complete
cases; (3) \textcolor{blue}{Oracle-DL reveals the masked covariate blocks and \textcolor{red}{responses} for all records before applying debiased Lasso and is therefore infeasible}; and (4) ALB-OP replaces the estimated precision \textcolor{blue}{column} for coordinate $j$ with the population \textcolor{blue}{column} $\Sigma^{-1}\be_j$, thereby isolating the effect of precision-\textcolor{blue}{column} estimation. The precision-column tuning parameters for ALB, ALB-P, and ALB-CP are selected as described in Section \ref{app: precision covariance selection} in the Supplement, whereas CC-DL and Oracle-DL use cross-validation and nodewise regression.

Let $\cS=\operatorname{supp}(\bbeta^*)$, and let $\operatorname{CI}_{m,j}$
denote the confidence interval for $\beta_j^*$ in replication $m$. For
$\mathcal A\in\{\cS,\cS^c_{100}\}$, define
$
\operatorname{AvgCov}(\mathcal A)
=
(M|\mathcal A|)^{-1}
\sum_{j\in\mathcal A}\sum_{m=1}^{M}
\mathbf 1\{\beta_j^*\in \operatorname{CI}_{m,j}\},
\operatorname{AvgLen}(\mathcal A)
=
(M|\mathcal A|)^{-1}
\sum_{j\in\mathcal A}\sum_{m=1}^{M}
\operatorname{length}(\operatorname{CI}_{m,j}),$ and $
\operatorname{TPR}=
(M|\cS|)^{-1}
\sum_{j\in\cS}\sum_{m=1}^{M}
\mathbf 1\{0\notin \operatorname{CI}_{m,j}\},
$ where $\cS^c_{100}$ consists of 100 inactive coordinates
sampled without replacement from $\cS^c$. Table \ref{tab: CI_12} reports the results for Settings 1 and 2, while the
corresponding results for Settings 3 and 4 are reported in Table
\ref{tab: CI_34_small} in the Supplement.

\begin{table}[!t]
\centering
\caption{Performance of the 95\% coordinatewise confidence intervals in Settings 1 and 2. Results are based on 200 replications.}
\label{tab: CI_12}

{\small
\setlength{\tabcolsep}{3pt}
\renewcommand{\arraystretch}{0.8}
\resizebox{\textwidth}{!}{
\begin{tabular}{lcccccccccc}
\toprule
& \multicolumn{2}{c}{AvgCov$(\cS)$}
& \multicolumn{2}{c}{AvgLen$(\cS)$}
& \multicolumn{2}{c}{AvgCov$(\cS^c_{100})$}
& \multicolumn{2}{c}{AvgLen$(\cS^c_{100})$}
& \multicolumn{2}{c}{TPR} \\
\cmidrule(lr){2-3}
\cmidrule(lr){4-5}
\cmidrule(lr){6-7}
\cmidrule(lr){8-9}
\cmidrule(lr){10-11}
Method
& $\rho_n=1$ & $\rho_n=10$
& $\rho_n=1$ & $\rho_n=10$
& $\rho_n=1$ & $\rho_n=10$
& $\rho_n=1$ & $\rho_n=10$
& $\rho_n=1$ & $\rho_n=10$ \\
\midrule

\multicolumn{11}{c}{\textit{Setting 1}} \\
\addlinespace[0.15em]
ALB
& 0.909 & 0.925
& 0.514 & 0.293
& 0.948 & 0.951
& 0.518 & 0.300
& 0.930 & 1.000 \\
ALB-P
& 0.919 & 0.927
& 0.749 & 0.411
& 0.951 & 0.950
& 0.767 & 0.428
& 0.680 & 0.995 \\
ALB-CP
& 0.909 & 0.925
& 0.514 & 0.293
& 0.948 & 0.951
& 0.518 & 0.300
& 0.930 & 1.000 \\
CC-DL
& 0.563 & 0.553
& 0.318 & 0.308
& 0.942 & 0.936
& 0.320 & 0.310
& 0.948 & 0.949 \\
Oracle-DL
& 0.885 & 0.938
& 0.208 & 0.094
& 0.952 & 0.951
& 0.209 & 0.094
& 1.000 & 1.000 \\
ALB-OP
& 0.944 & 0.951
& 0.734 & 0.337
& 0.948 & 0.952
& 0.717 & 0.328
& 0.731 & 0.999 \\

\addlinespace[0.60em]
\multicolumn{11}{c}{\textit{Setting 2}} \\
\addlinespace[0.15em]
ALB
& 0.900 & 0.939
& 0.354 & 0.179
& 0.947 & 0.953
& 0.350 & 0.179
& 0.998 & 1.000 \\
ALB-P
& 0.899 & 0.947
& 0.454 & 0.221
& 0.950 & 0.950
& 0.449 & 0.221
& 0.965 & 1.000 \\
ALB-CP
& 0.900 & 0.939
& 0.354 & 0.179
& 0.947 & 0.953
& 0.350 & 0.179
& 0.998 & 1.000 \\
CC-DL
& 0.533 & 0.512
& 0.281 & 0.260
& 0.942 & 0.942
& 0.280 & 0.259
& 0.959 & 0.960 \\
Oracle-DL
& 0.891 & 0.938
& 0.162 & 0.070
& 0.949 & 0.951
& 0.162 & 0.070
& 1.000 & 1.000 \\
ALB-OP
& 0.909 & 0.946
& 0.387 & 0.191
& 0.948 & 0.953
& 0.386 & 0.191
& 0.992 & 1.000 \\
\bottomrule
\end{tabular}
}}
\end{table}

We obtain three main findings. First, ALB and ALB-CP are identical or nearly identical throughout the four settings. Hence, distributing the precision-column calculation introduces negligible numerical error after the prescribed sweeps, as predicted by Theorem \ref{the: algorithmic convergence}. Second, as $\rho_n$ increases, the proposed procedures approach the two oracle benchmarks, Oracle-DL and ALB-OP. When $\rho_n=10$, the active-coordinate coverage of ALB and ALB-P ranges from 0.925 to 0.949 across the four settings, compared with 0.938 to 0.951 for the two oracle benchmarks. Their inactive-coordinate coverage ranges from 0.947 to 0.953, and their TPRs are essentially one. The ALB intervals also become substantially shorter as $\rho_n$ increases. These findings are in line with the studentized asymptotic normality in Theorem \ref{the: oracle property}. Third, ALB and ALB-P achieve substantially higher active-coordinate coverage than CC-DL while maintaining inactive-coordinate coverage close to the nominal level. Although ALB-P produces wider intervals because of the additional perturbation, its coverage remains comparable to that of ALB, consistent with the perturbed studentized result in Section \ref{sec: perturbed estimator}.

\section{ADNI Data Analysis}
\label{sec: ADNI}

We analyze multimodal data from the Alzheimer's Disease Neuroimaging Initiative (ADNI) \citep{mueller2005alzheimer}. CSF biomarkers, structural MRI, PET, and cognitive outcomes are processed by different pipelines and are not jointly observed at every visit, yielding a block-missing structure. Moreover, the sensitivity of these patient-level data motivates an analysis that avoids exchanging raw records. Section \ref{app: ADNI modalities} in the Supplement describes the modalities, user allocation, sample construction, and preprocessing.

\begin{table}[!t]
\centering
\caption{ADNI prediction and feature-selection performance over 200 train-test splits. Values are means (SDs); paired labels list the prediction method first.}
\label{tab: ADNI}

{\footnotesize
\setlength{\tabcolsep}{7pt}
\renewcommand{\arraystretch}{0.82}
\begin{tabular}{@{}lcc@{}}
\toprule
Method & Test MSE & Number selected \\
\midrule
ALB
    & 0.561 (0.139)
    & 18.540 (2.729) \\
ALB-P
    & 0.567 (0.149)
    & 16.185 (3.839) \\
DISCOM / ALB-CP
    & 0.561 (0.139)
    & 18.555 (2.717) \\
CC-Lasso / CC-DL
    & 0.861 (0.282)
    & 21.325 (10.197) \\
\bottomrule
\end{tabular}
}
\end{table}

\begin{figure}[!t]
    \centering
    \includegraphics[width=1\linewidth]{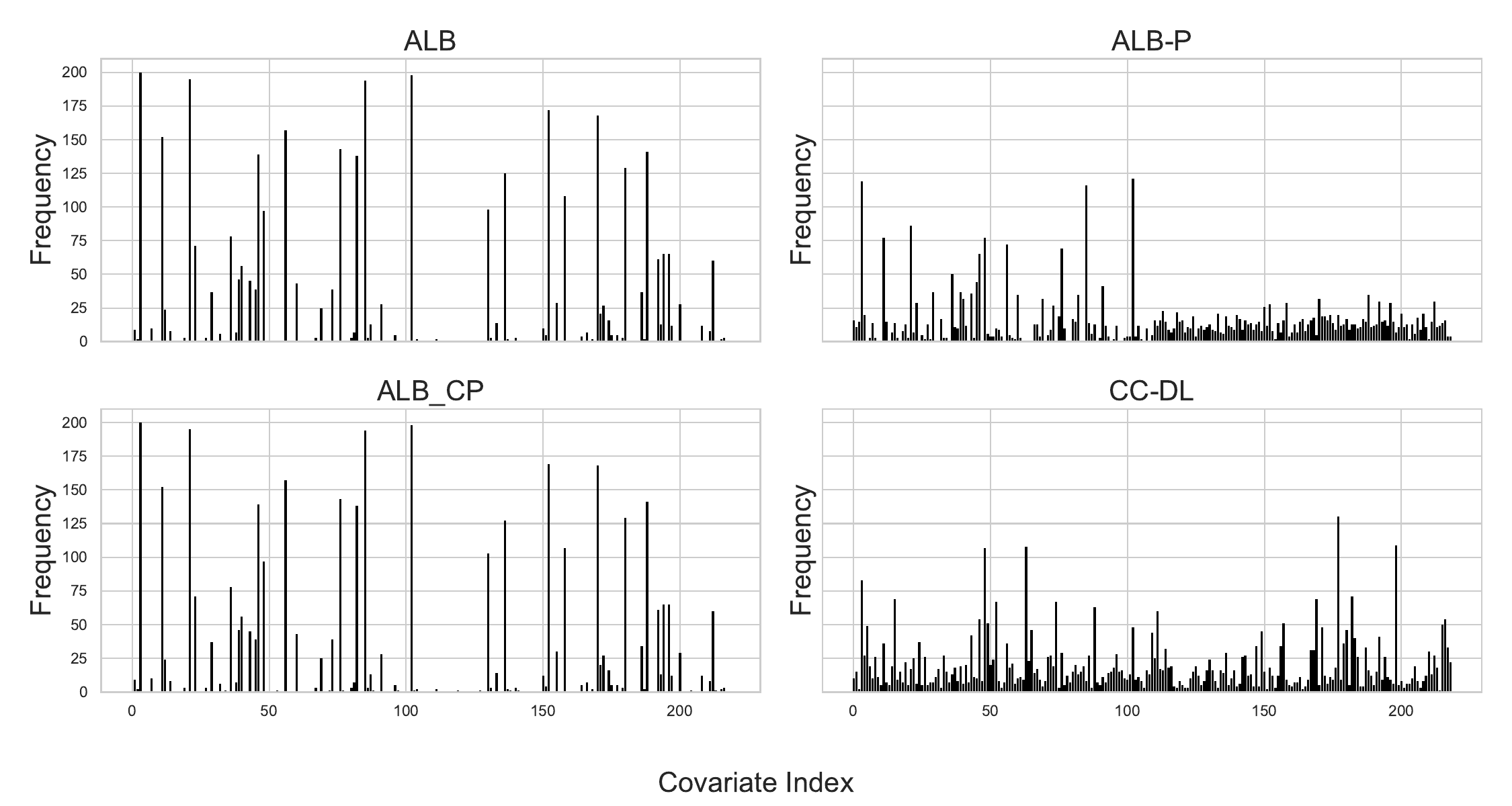}
    \caption{Selection frequencies of the 219 covariates across 200 train-test
    splits for predicting MMSE.}
    \label{fig: ADNI}
\end{figure}

Table \ref{tab: ADNI} summarizes the prediction and feature-selection results over 200 train-test splits. A feature is selected when its confidence interval excludes zero. For prediction, ALB and DISCOM yield the same mean test MSE of 0.561, representing a 34.9\% reduction relative to the CC-Lasso MSE of 0.861. ALB-P retains comparable prediction accuracy, with an MSE of 0.567, and selects 16.185 features on average, compared with 18.540 for ALB. Turning to feature selection, Figure \ref{fig: ADNI} shows that ALB and ALB-CP produce nearly identical selection frequencies. CC-DL has a more diffuse selection profile, whereas ALB-P selects fewer features and exhibits somewhat greater variation across splits. ALB selects ST104TA, ST96SV, ST121TA, and ST74TA in more than 190 of the 200 splits. These variables measure right pars-opercularis
cortical thickness, right lateral-ventricle volume, right transverse-temporal cortical thickness, and right caudal-middle-frontal cortical thickness, respectively. The three most frequently selected variables under ALB-P are ST96SV, ST104TA, and ST74TA. These recurrent selections are consistent with previous findings linking cortical thinning and ventricular enlargement to Alzheimer's disease and cognitive impairment
\citep{yang2022associations,nestor2008ventricular,li2025voxel,
rizvi2021association}.

\section{Discussion}
\label{sec: conclusion}
We introduced ALB for high-dimensional estimation and inference with vertically distributed, block-missing data. For a fixed number of users, ALB solves the centralized DISCOM objective without pooling raw records, communicates $O(n)$ scalars per sweep, and supports distributed debiasing. Theory establishes optimization, estimation, and inference guarantees. Simulations and the ADNI analysis show performance close to centralized benchmarks, and one-time perturbation reduces direct disclosure while retaining the established convergence orders under the stated noise scaling.

Future work may extend ALB to generalized linear, survival, and longitudinal models and accommodate data-dependent or nonignorable missingness. A second direction is to develop end-to-end privacy guarantees for the full iterative protocol, together with adaptive perturbation schemes that improve the privacy-utility tradeoff. Finally, communication-efficient variants for a growing number of users, asynchronous updates, and partial user participation would broaden the method's use in large and heterogeneous federated systems.


\textbf{Disclosure Statement.} The authors report that they have no competing interests to declare.

\textbf{Data Availability Statement.} The ADNI data used in this article are available from the ADNI website: \href{https://adni.loni.usc.edu/data-samples/adni-data/}{https://adni.loni.usc.edu/data-samples/adni-data/}. 




\begingroup
\small
\setlength{\bibsep}{0pt}
\renewcommand{\baselinestretch}{1.75}\selectfont
\putbib[citation]
\endgroup
\end{bibunit}

\clearpage
\begin{bibunit}[abbrvnat]
\spacingset{1.8}
\makeatletter
\def\hyper@natlinkstart#1{%
  \Hy@backout{supp.#1}%
  \hyper@linkstart{cite}{cite.supp.#1}%
  \def\hyper@nat@current{supp.#1}%
}
\def\hyper@natlinkbreak#1#2{%
  \hyper@linkend#1\hyper@linkstart{cite}{cite.supp.#2}%
}
\def\hyper@natanchorstart#1{%
  \Hy@raisedlink{\hyper@anchorstart{cite.supp.#1}}%
}
\makeatother
\setcounter{section}{0}
\renewcommand{\thesection}{S.\arabic{section}}
\renewcommand{\theHsection}{supp.\arabic{section}}
\renewcommand{\theHsubsection}{\theHsection.\arabic{subsection}}
\setcounter{figure}{0}
\renewcommand{\thefigure}{S.\arabic{figure}}
\renewcommand{\theHfigure}{supp.figure.\arabic{figure}}
\setcounter{table}{0}
\renewcommand{\thetable}{S.\arabic{table}}
\renewcommand{\theHtable}{supp.table.\arabic{table}}
\begin{center}
{\Large\bfseries Supplementary Material to\\[0.5em]
``High-Dimensional Assisted Learning for Vertically Distributed Data with Blockwise Missingness''}
\end{center}
\vspace{2\baselineskip}

\noindent\textbf{Abstract.} This supplementary material contains proofs, an optional release-level local differential privacy calibration, additional simulation results, and ADNI data details. It also provides implementation details for parameter tuning, positive-definiteness screening, distributed eigenvalue estimation, coordinate-descent optimization, and coordinatewise inference.

\section{Proofs}
\label{app: proofs}

\subsection{Proofs of Propositions}

\subsubsection{Proof of Proposition \ref{the: sample cov}}
\begin{proof}
    The elementwise covariance bound is Theorem 1 of \citet{yu2020optimal}; applying a union bound over the $p^2$ entries gives the stated max-norm bound. The same concentration argument applied to the available response-covariate pairs gives the two bounds for $\tmbC$, with $N_j^{xy}$ replacing the covariate-pair sample size. It remains to verify the bound for the positive-definite modification $\hSigma$, which is proved in Appendix of \citet{yu2020optimal}. The argument for $\hSigma^{(j)}$ follows the same routine and is therefore omitted.
\end{proof}


    

\subsubsection{Proof of Proposition \ref{the: global consistency}}
\begin{proof}
    The proof follows Theorem 2 of \citet{yu2020optimal}, with the response-covariate concentration rate replaced by the bound in Proposition \ref{the: sample cov}. Specifically, their Lasso basic-inequality argument and the restricted eigenvalue condition yield
    $$
        \norm{\cbbeta-\bbeta^*}_2
        =O_p\left(\norm{\bbeta^*}_1
        \sqrt{\frac{s\log p}{N_n}}\right).
    $$
    This is the first assertion. For completeness, let $\bdelta=\cbbeta-\bbeta^*$ and $S=\operatorname{supp}(\bbeta^*)$. The same basic inequality gives the cone bound used in \citet{yu2020optimal}, and in particular
    $$
        \norm{\bdelta}_1
        \leq8\norm{\bdelta_S}_1
        \leq8\sqrt{s}\norm{\bdelta}_2.
    $$
    Combining the two displays gives
    $$
        \norm{\cbbeta-\bbeta^*}_1
        =O_p\left(s\norm{\bbeta^*}_1
        \sqrt{\frac{\log p}{N_n}}\right),
    $$
    as required.
\end{proof}
        
\subsection{Proofs of Main Theoretical Results}

\subsubsection{Proof of Lemma \ref{le: normality}}
\begin{proof}
    For a scalar random variable $Z$, write
$$
    \norm{Z}_{\psi_1}\coloneqq
    \inf\{t>0:\E\exp(\abs{Z}/t)\leq2\}.
$$
The variable $Z$ is sub-exponential when this Orlicz norm is finite. We first derive the variance of the score.
    \begin{lemma}
    \label{le: score variance}
        Conditional on $R$, the influence vectors $\bphi_1,\ldots,\bphi_n$ in \eqref{population influence contribution} are independent, satisfy $\E(\bphi_i\mid R)=0$, and
        \begin{equation}
        \label{exact score variance}
            \Omega_n=n\Var\{S_n(\bbeta^*)\mid R\}
            =\frac1n\sum_{i=1}^n\Var(\bphi_i\mid R).
        \end{equation}
        Under Conditions \ref{con: conditional variance of y} and \ref{con: overlap covariance},
        \begin{equation}
        \label{score variance lower bound}
            \Omega_n\succeq c(\mathbf P_n\circ\Sigma)
            \succeq c\kappa_P\mathrm I_p.
        \end{equation}
    \end{lemma}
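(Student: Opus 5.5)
The plan is to work entirely conditionally on the missingness pattern $R$ and exploit the fact that, given $R$, each $\bphi_i$ is a function of $(\bx_i,y_i)$ and the deterministic indicators $R_i$ only. Throughout I use the expression $\bphi_i=A_i\bbeta^*-q_i$ from \eqref{population influence contribution}, and I regard the ratios $n/n_{kl}$ and $n/n_k^y$ as constants given $R$.

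\textbf{Independence, centering, and the variance identity.} By Condition \ref{con: missing pattern}, the triples $(\bx_i,y_i,R_i)$ are i.i.d. and $R_i$ is independent of $(\bx_i,y_i)$. Hence, conditional on $R$, the pairs $(\bx_i,y_i)$ remain i.i.d. with their unconditional law, and the $\bphi_i$ are independent. For centering, $\E(\bx_i^k\bx_i^{l\top}\mid R)=\Sigma_{kl}$ and $\E(y_i\bx_i^k\mid R)=\mbC_k$, so every block of $A_i$ and $q_i$ has conditional mean zero, and therefore $\E(\bphi_i\mid R)=\mathbf0$. Sub-Gaussianity (Condition \ref{con: subgaussian}) guarantees that all required second moments are finite. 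Because $S_n(\bbeta^*)=n^{-1}\sum_i\bphi_i$ is a sum of conditionally independent, centered terms, its conditional covariance is $n^{-2}\sum_i\Var(\bphi_i\mid R)$. Multiplying by $n$ gives \eqref{exact score variance}.

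\textbf{Lower bound.} I would apply the law of total variance with the additional conditioning on $\bx_i$, using the PSD ordering
\[
\Var(\bphi_i\mid R)\succeq\E\{\Var(\bphi_i\mid\bx_i,R)\mid R\}.
\]
Given $(\bx_i,R)$, the term $A_i\bbeta^*$ is deterministic, and $q_i$ depends on $y_i$ only through $y_i\,\bw_i$, where the $k$th block of $\bw_i$ is $(n/n_k^y)r_i^kr_i^y\bx_i^k$. Thus $\bw_i=\bd_i\circ\bx_i$ with $\bd_i$ as in Remark \ref{remark: conditional variance and overlap covariance}, and
\[
\Var(\bphi_i\mid\bx_i,R)=\Var(y_i\mid\bx_i)\,\bw_i\bw_i^\top\succeq c\,(\bd_i\bd_i^\top)\circ(\bx_i\bx_i^\top)
\]
by Condition \ref{con: conditional variance of y}. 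Here I use that $R_i$ is independent of $(\bx_i,y_i)$, so that $\Var(y_i\mid\bx_i,R)=\Var(y_i\mid\bx_i)$. Taking the conditional expectation over $\bx_i$, which is linear in the Hadamard product with the fixed matrix $\bd_i\bd_i^\top$, yields $c(\bd_i\bd_i^\top)\circ\Sigma$.

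Averaging over $i$ then gives $\Omega_n\succeq c\,(n^{-1}\sum_i\bd_i\bd_i^\top)\circ\Sigma$. It remains to check that $n^{-1}\sum_i\bd_i\bd_i^\top=\mathbf P_n$. The $(k,l)$ block of this average equals
\[
n^{-1}\sum_i\frac{n^2}{n_k^yn_l^y}r_i^yr_i^kr_i^l\,\mathbf1_{p_k}\mathbf1_{p_l}^\top=\frac{nn_{kl}^y}{n_k^yn_l^y}\,\mathbf1_{p_k}\mathbf1_{p_l}^\top,
\]
and for $k=l$ this reduces to the convention $n_{kk}^y=n_k^y$. Condition \ref{con: overlap covariance} then gives the final inequality $\succeq c\kappa_P\mathrm I_p$.

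\textbf{Main obstacle.} The only delicate step is justifying the conditional total-variance inequality in the matrix ordering and the identification $\Var(\bphi_i\mid\bx_i,R)=\Var(y_i\mid\bx_i)\bw_i\bw_i^\top$. The first needs only that the between-group term $\Var\{\E(\bphi_i\mid\bx_i,R)\mid R\}$ is PSD. The second needs the independence of $R_i$ from $(\bx_i,y_i)$ to drop $R$ from the conditional variance of $y_i$.
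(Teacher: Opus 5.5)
Your proposal is correct and follows the paper's proof essentially step for step. The variance identity comes from MCAR-based conditional independence and centering. The lower bound uses the law of total variance conditioning on $\bx_i$, the bound $\Var(y_i\mid\bx_i)\geq c$, and the block computation identifying $n^{-1}\sum_i\bd_i\bd_i^\top$ with $\mathbf P_n$. Your Hadamard form $(\bd_i\bd_i^\top)\circ(\bx_i\bx_i^\top)$ is the paper's $r_i^yD_i\bx_i\bx_i^\top D_i$ written differently, and you additionally make explicit the independence of $(\bx_i,y_i)$ from $R$ and the finiteness of the needed moments.
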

    \begin{proof}
        Conditional on an MCAR pattern $R$, the complete-data observations remain independent and identically distributed. Since $\E(\bx_i^k\bx_i^{l\top})=\Sigma_{kl}$ and $\E(y_i\bx_i^k)=\mbC_k$, we have $\E(A_i\mid R)=0$, $\E(q_i\mid R)=0$, and hence $\E(\bphi_i\mid R)=0$. The exact representation $S_n(\bbeta^*)=n^{-1}\sum_i\bphi_i$ and conditional independence prove \eqref{exact score variance}.

        It remains to establish the lower bound. Define
        $$
            D_i=\operatorname{diag}\left(
            r_i^0\frac n{n_0^y}\mathrm I_{p_0},\ldots,
            r_i^K\frac n{n_K^y}\mathrm I_{p_K}\right).
        $$
        Conditional on $(\bx_i,R)$, the only randomness in $\bphi_i=A_i\bbeta^*-q_i$ that is needed for a lower bound comes from $y_i$ in $q_i$. By the law of total variance,
        \begin{align*}
            \Var(\bphi_i\mid R)
            &\succeq\E\{\Var(\bphi_i\mid\bx_i,R)\mid R\}\\
            &=r_i^yD_i\E\{\Var(y_i\mid\bx_i)\bx_i\bx_i^\top\}D_i\\
            &\succeq c r_i^yD_i\Sigma D_i.
        \end{align*}
        The $(k,l)$ block of $n^{-1}\sum_i c r_i^yD_i\Sigma D_i$ equals
        $$
            c\frac{n n_{kl}^y}{n_k^yn_l^y}\Sigma_{kl}.
        $$
        Therefore this matrix is exactly $c(\mathbf P_n\circ\Sigma)$, and Condition \ref{con: overlap covariance} completes the proof.
    \end{proof}

        Set $Z_{i,j}=\bthe_j^{*\top}\bphi_i$. By Lemma \ref{le: score variance}, conditional on $R$ these variables are independent, centered, and
        $$
            \frac1n\sum_{i=1}^n\Var(Z_{i,j}\mid R)=v_{n,j}.
        $$
        Moreover, \eqref{score variance lower bound} and $\Sigma_{jj}=1$ give
        $$
            v_{n,j}\geq c\kappa_P\norm{\bthe_j^*}_2^2
            \geq c\kappa_P.
        $$
        For the last inequality, the Schur-complement bound gives $\theta_{jj}^*=(\Sigma^{-1})_{jj}\geq1/\Sigma_{jj}=1$, \textcolor{blue}{where $\theta_{jj}^*$ is the $j$th entry of $\bthe_j^*$}, and hence $\norm{\bthe_j^*}_2\geq1$.

        Under Condition \ref{con: ratios}, all weights $n/n_{kl}$ and $n/n_k^y$ in \eqref{population influence contribution} are uniformly bounded. Products of sub-Gaussian variables are sub-exponential. Hence, uniformly in $i$,
        $$
            \norm{Z_{i,j}}_{\psi_1}
            \leq C\norm{\bthe_j^*}_1(1+\norm{\bbeta^*}_1).
        $$
        The standard moment bound for a sub-exponential variable therefore yields
        $$
            \max_{1\leq i\leq n}\E(|Z_{i,j}|^4\mid R)
            \leq C\norm{\bthe_j^*}_1^4(1+\norm{\bbeta^*}_1)^4.
        $$
        Consequently the Lyapunov ratio with exponent four satisfies
        $$
            \frac{\sum_{i=1}^n\E(|Z_{i,j}|^4\mid R)}
            {\left\{\sum_{i=1}^n\Var(Z_{i,j}\mid R)\right\}^2}
            \leq C\frac{\norm{\bthe_j^*}_1^4(1+\norm{\bbeta^*}_1)^4}
            {n v_{n,j}^2}\longrightarrow0.
        $$
        Lyapunov's central limit theorem now gives
        $$
            \frac{n^{-1/2}\sum_{i=1}^nZ_{i,j}}{\sqrt{v_{n,j}}}
            =\frac{\sqrt n\,\bthe_j^{*\top}S_n(\bbeta^*)}{\sqrt{v_{n,j}}}
            \xrightarrow{d}\mathcal N(0,1).
        $$
    \end{proof}

\subsubsection{Proof of Theorem \ref{the: algorithmic convergence}}
\begin{proof}
    In addition to the sweep index $e$ used in the theorem, let $e'$ count individual block updates. One complete sweep contains $K+1$ block updates, so $e=\lfloor e'/(K+1)\rfloor$. Throughout the proof, $\lambda_n^{(min)}=\lambda_{\min}(\hSigma)$ and $\lambda_n^{(max)}=\lambda_{\max}(\hSigma)$, consistent with the theorem statement.
    
    \begin{lemma}
    \label{le: strongly convex}
        The objective $\cQ_n$ is $\lambda_n^{(min)}$-strongly convex. In particular,
        \begin{equation}
        \label{strongly convex}
            \cQ_n(\bbeta^{(1)})\geq \cQ_n(\bbeta^{(2)}) + g_2^\top (\bbeta^{(1)}-\bbeta^{(2)}) + \frac{\lambda_n^{(min)}}{2}\norm{\bbeta^{(1)}-\bbeta^{(2)}}_2^2,
        \end{equation}
        for every $g_2\in\partial\cQ_n(\bbeta^{(2)})$.
    \end{lemma}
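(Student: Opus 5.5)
The plan is to split $\cQ_n=\cL_n+\lambda\norm{\cdot}_1$ and treat the two parts separately. The key fact is that $\cL_n(\bbeta)=\bbeta^\top\hSigma\bbeta/2-\tmbC^\top\bbeta$ is an exact quadratic whose Hessian is $\hSigma$, and $\hSigma$ is positive definite by Condition \ref{con: PD}. So for any $\bbeta^{(1)},\bbeta^{(2)}$, the second-order Taylor expansion is exact:
\[
\cL_n(\bbeta^{(1)})=\cL_n(\bbeta^{(2)})+\nabla\cL_n(\bbeta^{(2)})^\top(\bbeta^{(1)}-\bbeta^{(2)})+\tfrac12(\bbeta^{(1)}-\bbeta^{(2)})^\top\hSigma(\bbeta^{(1)}-\bbeta^{(2)}),
\]
where $\nabla\cL_n(\bbeta^{(2)})=\hSigma\bbeta^{(2)}-\tmbC$. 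Applying the Rayleigh quotient bound $\bdelta^\top\hSigma\bdelta\geq\lambda_n^{(min)}\norm{\bdelta}_2^2$ to the quadratic remainder gives the smooth part of \eqref{strongly convex}, with $\nabla\cL_n(\bbeta^{(2)})$ in place of $g_2$.

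Next, handle the penalty through its convexity. The function $\lambda\norm{\cdot}_1$ is convex with $\lambda\geq0$. For any subgradient $h_2\in\partial(\lambda\norm{\cdot}_1)(\bbeta^{(2)})$, the subgradient inequality gives
\[
\lambda\norm{\bbeta^{(1)}}_1\geq\lambda\norm{\bbeta^{(2)}}_1+h_2^\top(\bbeta^{(1)}-\bbeta^{(2)}).
\]

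To combine the two parts, use the sum rule for subdifferentials. It applies because $\cL_n$ is differentiable and both functions are finite and convex on all of $\mathbb R^p$. It gives $\partial\cQ_n(\bbeta^{(2)})=\nabla\cL_n(\bbeta^{(2)})+\partial(\lambda\norm{\cdot}_1)(\bbeta^{(2)})$, so every $g_2\in\partial\cQ_n(\bbeta^{(2)})$ can be written as $g_2=\nabla\cL_n(\bbeta^{(2)})+h_2$ for some admissible $h_2$. Adding the two inequalities then yields \eqref{strongly convex} for every such $g_2$.

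Finally, \eqref{strongly convex} holding for all points and all subgradients is the standard first-order characterization of $\lambda_n^{(min)}$-strong convexity. Equivalently, $\cQ_n-\tfrac{\lambda_n^{(min)}}2\norm{\cdot}_2^2$ is convex, because its quadratic part has Hessian $\hSigma-\lambda_n^{(min)}\mathrm I_p\succeq0$. This proves the "in particular" claim. No step is a real obstacle. The only point that needs care is justifying that every subgradient of the sum decomposes as above, and the sum rule settles this because both functions are finite everywhere.
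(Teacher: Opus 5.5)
Your argument is correct and is essentially the paper's proof. Both use the same facts: $\cL_n$ is $\lambda_n^{(min)}$-strongly convex because its Hessian is $\hSigma\succeq\lambda_n^{(min)}\mathrm I_p$, and adding the convex penalty $\lambda\norm{\cdot}_1$ preserves this; the only difference is that you derive the subgradient inequality directly from the exact quadratic expansion and the subdifferential sum rule, whereas the paper cites Lemma 2 of Zhou's note on Fenchel duality and strong convexity for it.
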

    \begin{proof}
        The quadratic loss $\cL_n$ has Hessian $\hSigma$ and is therefore $\lambda_n^{(min)}$-strongly convex. Adding the convex function $\lambda\norm{\cdot}_1$ preserves the same strong convexity. Inequality \eqref{strongly convex} is the corresponding subgradient inequality; see Lemma 2 of \citet{zhou2018fencheldualitystrongconvexity}.
    \end{proof}

    \begin{lemma}
    \label{le: diff of params bounded by diff of loss}
        $\cQ_n(\hbbeta^{(e')})-\cQ_n(\cbbeta)\geq \frac{\lambda_n^{(min)}}{2}\norm{\hbbeta^{(e')}-\cbbeta}_2^2$.
    \end{lemma}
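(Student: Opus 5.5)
We apply Lemma \ref{le: strongly convex} with $\bbeta^{(1)}=\hbbeta^{(e')}$ and $\bbeta^{(2)}=\cbbeta$. The point is to pick the right subgradient at the minimizer. Since $\cbbeta$ minimizes the convex function $\cQ_n$, the first-order optimality condition gives $\mathbf 0\in\partial\cQ_n(\cbbeta)$. Existence and uniqueness of $\cbbeta$ follow from the $\lambda_n^{(min)}$-strong convexity in Lemma \ref{le: strongly convex}, because $\hSigma$ is positive definite by Condition \ref{con: PD}. We may therefore take $g_2=\mathbf 0$ in \eqref{strongly convex}.

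With this choice, the linear term $g_2^\top(\hbbeta^{(e')}-\cbbeta)$ vanishes, and \eqref{strongly convex} reads
\[
\cQ_n(\hbbeta^{(e')})\geq\cQ_n(\cbbeta)+\frac{\lambda_n^{(min)}}{2}\norm{\hbbeta^{(e')}-\cbbeta}_2^2 .
\]
Rearranging gives the claim. The inequality holds for any point of $\mathbb R^p$, so nothing about the iterate structure (the index $e'$ or the cyclic updates) is used.

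The only step needing care is justifying $\mathbf 0\in\partial\cQ_n(\cbbeta)$ and the existence of the minimizer. Both are standard: $\cQ_n$ is continuous and strongly convex, hence coercive, so a minimizer exists and is unique. Fermat's rule for convex functions then supplies the zero subgradient.

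Alternatively, one can argue directly. Write $\cQ_n(\bbeta)-\cQ_n(\cbbeta)$ as
\[
\nabla\cL_n(\cbbeta)^\top\bdelta+\tfrac12\bdelta^\top\hSigma\bdelta+\lambda\left(\norm{\cbbeta+\bdelta}_1-\norm{\cbbeta}_1\right),
\]
where $\bdelta=\bbeta-\cbbeta$. The KKT condition $\nabla\cL_n(\cbbeta)+\lambda\bz=\mathbf 0$ holds for some $\bz\in\partial\norm{\cbbeta}_1$, and $\norm{\cbbeta+\bdelta}_1-\norm{\cbbeta}_1\geq\bz^\top\bdelta$. Together these show that the first and third terms sum to something nonnegative. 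The quadratic term is at least $\lambda_n^{(min)}\norm{\bdelta}_2^2/2$, which gives the same conclusion.
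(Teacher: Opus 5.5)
Your proof is correct and matches the paper's: Fermat's rule gives $\mathbf 0\in\partial\cQ_n(\cbbeta)$, and substituting $g_2=\mathbf 0$ into the strong-convexity inequality of Lemma~\ref{le: strongly convex} with $\bbeta^{(1)}=\hbbeta^{(e')}$ and $\bbeta^{(2)}=\cbbeta$ gives the claim. Your alternative KKT expansion is also valid, since $\cL_n$ is exactly quadratic and $\lambda\geq0$.
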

    \begin{proof}
        Since $\cbbeta$ minimizes $\cQ_n$, the Fermat condition gives $\mathbf0\in\partial\cQ_n(\cbbeta)$. Applying \eqref{strongly convex} with $\bbeta^{(1)}=\hbbeta^{(e')}$ and $\bbeta^{(2)}=\cbbeta$ proves the claim.
    \end{proof}

    \begin{lemma}
    \label{le: diff of loss bounded by sub-gradient}
        For any $\bbeta\in\mathbb R^p$ and $g\in\partial\cQ_n(\bbeta)$,
        \begin{equation}
        \label{diff of loss bounded by sub-gradient}
            \cQ_n(\bbeta)-\cQ_n(\cbbeta)\leq \frac{1}{2\lambda_n^{(min)}}\norm{g}_2^2.
        \end{equation}
    \end{lemma}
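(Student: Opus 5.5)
The plan is to use the strong-convexity subgradient inequality of Lemma \ref{le: strongly convex} once more, this time at the pair $(\cbbeta,\bbeta)$, and then minimize the right-hand side over the displacement. Fix $\bbeta\in\mathbb R^p$ and $g\in\partial\cQ_n(\bbeta)$. Applying \eqref{strongly convex} with $\bbeta^{(1)}=\cbbeta$, $\bbeta^{(2)}=\bbeta$ and $g_2=g$ gives
\[
    \cQ_n(\cbbeta)\geq\cQ_n(\bbeta)+g^\top(\cbbeta-\bbeta)+\frac{\lambda_n^{(min)}}{2}\norm{\cbbeta-\bbeta}_2^2 .
\]

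The key step is to drop the specific point $\cbbeta$ and lower bound the right-hand side by its infimum over an arbitrary displacement $\bdelta$. The function $\bdelta\mapsto g^\top\bdelta+\frac{\lambda_n^{(min)}}{2}\norm{\bdelta}_2^2$ is a strictly convex quadratic, since $\lambda_n^{(min)}>0$ by Condition \ref{con: PD} applied to $\hSigma$. It is minimized at $\bdelta=-g/\lambda_n^{(min)}$, where it takes the value $-\norm{g}_2^2/(2\lambda_n^{(min)})$. Hence
\[
    \cQ_n(\cbbeta)\geq\cQ_n(\bbeta)-\frac{1}{2\lambda_n^{(min)}}\norm{g}_2^2 ,
\]
and rearranging yields \eqref{diff of loss bounded by sub-gradient}.

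There is essentially no obstacle. The only point needing care is that the inequality in Lemma \ref{le: strongly convex} holds for every subgradient at the base point $\bbeta$, not only at a minimizer. This follows because $\cQ_n$ is the sum of a $\lambda_n^{(min)}$-strongly convex quadratic and the convex function $\lambda\norm{\cdot}_1$. The subdifferential of the sum splits as $\nabla\cL_n(\bbeta)+\lambda\,\partial\norm{\bbeta}_1$, and each part satisfies its own (strongly) convex subgradient inequality. I would note this explicitly, since the lemma will later be applied at the intermediate block iterates with $g$ built blockwise from the optimality conditions of the subproblems.
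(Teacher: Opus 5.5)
Your proof is correct and is essentially the paper's argument. The paper applies the strong-convexity inequality at an arbitrary point $z$ and minimizes both sides over $z$, using $\min_z\cQ_n(z)=\cQ_n(\cbbeta)$; you substitute $\cbbeta$ first and then minimize the quadratic lower bound over the displacement, which is the same computation.
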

    \begin{proof}
        By \eqref{strongly convex}, for any $\bbeta,z\in\mathbb R^p$ and $g\in\partial\cQ_n(\bbeta)$,
        \begin{equation*}
            \cQ_n(z)\geq \cQ_n(\bbeta) + g^\top (z-\bbeta)+\frac{\lambda_n^{(min)}}{2}\norm{z-\bbeta}_2^2.
        \end{equation*}
        Denote the right-hand side by $H(z)$ and minimize both sides over $z$:
        \begin{equation*}
            \min_z\cQ_n(z)\geq \min_z H(z).
        \end{equation*}
        The first-order condition for the quadratic function $H$ is
        \begin{equation*}
            \nabla H(z) = g+\lambda_n^{(min)}(z-\bbeta)=0.
        \end{equation*}
        Hence, $z^*=\bbeta-g/\lambda_n^{(min)}$. Substitution gives
        \begin{equation*}
        \begin{aligned}
            \min_z\cQ_n(z) &\geq \cQ_n(\bbeta)-\frac{1}{\lambda_n^{(min)}}\norm{g}_2^2+\frac{1}{2\lambda_n^{(min)}}\norm{g}_2^2\\
            &= \cQ_n(\bbeta)-\frac{1}{2\lambda_n^{(min)}}\norm{g}_2^2.
        \end{aligned}
        \end{equation*}
        Since $\cbbeta$ minimizes $\cQ_n$,
        \begin{equation*}
            \cQ_n(\cbbeta)\geq \cQ_n(\bbeta)-\frac{1}{2\lambda_n^{(min)}}\norm{g}_2^2.
        \end{equation*}
        Rearranging proves the result.
    \end{proof}

    \begin{lemma}
    \label{le: linear convergence}
        \begin{equation*}
            \cQ_n(\hbbeta^{(e'+1)})-\cQ_n(\cbbeta)\leq \eta_n[\cQ_n(\hbbeta^{(e'-K)})-\cQ_n(\cbbeta)],
        \end{equation*}
        where $0< \eta_n\coloneqq1-\frac{(\lambda_n^{(min)})^2}{(\lambda_n^{(min)})^2+(K+1)(\lambda_n^{(max)})^2}<1$. 
    \end{lemma}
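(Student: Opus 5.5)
The plan is the standard linear-convergence argument for exact block coordinate minimization, as in Beck--Tetruashvili: lower-bound the per-sweep decrease in $\cQ_n$ and upper-bound the optimality gap at the end of the sweep. Set up notation for one complete sweep starting from $\hbbeta^{(e'-K)}$, the iterate at the start of the sweep, and ending at $\hbbeta^{(e'+1)}$ after the $K+1$ block updates. Let $\bz^{(0)}=\hbbeta^{(e'-K)}$ and let $\bz^{(m)}$ denote the iterate after the $m$th block update, so $\bz^{(K+1)}=\hbbeta^{(e'+1)}$. Let $k_m$ be the block updated at step $m$.

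\textbf{Step 1 (sufficient decrease).} Each block subproblem $\bbeta_{k}\mapsto\cQ_n$ with the other blocks fixed is $\lambda_{\min}(\hSigma_{kk})\geq\lambda_n^{(min)}$-strongly convex. Since the update is its exact minimizer, applying the analogue of Lemma \ref{le: diff of params bounded by diff of loss} to that block gives
\[
\cQ_n(\bz^{(m-1)})-\cQ_n(\bz^{(m)})\geq\tfrac{\lambda_n^{(min)}}{2}\norm{\bz^{(m-1)}-\bz^{(m)}}_2^2 .
\]
Summing over $m$ yields
\[
\cQ_n(\bz^{(0)})-\cQ_n(\bz^{(K+1)})\geq\tfrac{\lambda_n^{(min)}}{2}\sum_m\norm{\bz^{(m-1)}-\bz^{(m)}}_2^2 .
\]

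\textbf{Step 2 (subgradient at the end of the sweep).} Optimality of block $k_m$ at step $m$ gives $\mathbf0\in\hSigma_{k_m\cdot}\bz^{(m)}-\tmbC_{k_m}+\lambda\partial\norm{\bz^{(m)}_{k_m}}_1$. Block $k_m$ is not changed afterwards, so $\bz^{(K+1)}_{k_m}=\bz^{(m)}_{k_m}$ and the same $\ell_1$ subgradient is valid at $\bz^{(K+1)}$. This produces $g\in\partial\cQ_n(\bz^{(K+1)})$ whose $k_m$ block equals $\hSigma_{k_m\cdot}(\bz^{(K+1)}-\bz^{(m)})$. Hence
\[
\norm{g}_2^2\leq\sum_m(\lambda_n^{(max)})^2\norm{\bz^{(K+1)}-\bz^{(m)}}_2^2\leq(K+1)(\lambda_n^{(max)})^2\sum_m\norm{\bz^{(m-1)}-\bz^{(m)}}_2^2 .
\]
The second inequality uses Cauchy--Schwarz on the telescoping sum $\bz^{(K+1)}-\bz^{(m)}=\sum_{r>m}(\bz^{(r)}-\bz^{(r-1)})$ together with the fact that consecutive differences are supported on distinct blocks. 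Getting this constant to be exactly $K+1$, so that it matches $\eta_n$, is the delicate bookkeeping step. A cruder bound would give $(K+1)^2$, so I will keep the orthogonality of block increments to avoid that.

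\textbf{Step 3 (combine).} Apply Lemma \ref{le: diff of loss bounded by sub-gradient} at $\bz^{(K+1)}$ and then Step 2 and Step 1. Writing $a=\cQ_n(\bz^{(0)})-\cQ_n(\cbbeta)$ and $b=\cQ_n(\bz^{(K+1)})-\cQ_n(\cbbeta)$, this gives
\[
b\leq\frac{(K+1)(\lambda_n^{(max)})^2}{(\lambda_n^{(min)})^2}(a-b).
\]
Solving for $b$ gives $b\leq\eta_n a$ with $\eta_n=1-(\lambda_n^{(min)})^2/\{(\lambda_n^{(min)})^2+(K+1)(\lambda_n^{(max)})^2\}\in(0,1)$, which is the claim.
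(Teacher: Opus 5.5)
Your proposal is correct and follows the paper's proof essentially step for step. It uses the same per-block sufficient decrease from strong convexity and exact block minimization, and builds the end-of-window subgradient the same way, by reusing each block's optimality certificate so that its block is a difference of gradients of the quadratic loss; it then applies the same subgradient bound on the optimality gap and the same rearrangement to $\eta_n$. One small correction to Step 2: the constant $K+1$ comes from the block increments being mutually orthogonal (Pythagoras, which even gives the sharper factor $K$). Cauchy--Schwarz alone does not give it; orthogonality is what the paper's remark that each update appears in at most $K+1$ of the differences encodes.
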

    \begin{proof}
        Without loss of generality, suppose $B_K$ performs block update $e'+1$. During the most recent complete sweep, $B_k$ was updated from iterate $e'+k-K$ to iterate $e'+1+k-K$, for $k=0,\ldots,K$. Exact minimization of that block subproblem and the Fermat condition give $\mathbf0\in\partial_{\bbeta_k}\cQ_n(\hbbeta^{(e'+1+k-K)})$. The two consecutive iterates differ only in block $k$, so \eqref{strongly convex} yields
        \begin{equation}
        \label{next step diff of params bounded by next step diff of loss}
            \cQ_n(\hbbeta^{(e'+k-K)})-\cQ_n(\hbbeta^{(e'+1+k-K)})\geq \frac{\lambda_n^{(min)}}{2}\norm{\hbbeta^{(e'+k-K)}-\hbbeta^{(e'+1+k-K)}}_2^2, \quad k=0,\cdots,K.
        \end{equation}
        Summing over the $K+1$ block updates gives
        \begin{equation}
        \label{telescope of next step diff}
            \cQ_n(\hbbeta^{(e'-K)})-\cQ_n(\hbbeta^{(e'+1)})\geq \frac{\lambda_n^{(min)}}{2}\sum_{k=0}^K \norm{\hbbeta^{(e'+k-K)}-\hbbeta^{(e'+1+k-K)}}_2^2.
        \end{equation}

    For each $k$, choose $\bv_k\in\lambda\partial_{\bbeta_k}\norm{\hbbeta^{(e'+1+k-K)}}_1$ such that $\bv_k=-\nabla_{\bbeta_k}\cL_n(\hbbeta^{(e'+1+k-K)})$. Because the $\ell_1$ penalty is block-separable and block $k$ remains unchanged after its update, $\bv_k$ is also a valid block subgradient at the final iterate $\hbbeta^{(e'+1)}$. Therefore,
    \begin{equation}
    \label{invariant sub-gradient}
        \xi_k^{(e'+1)}\coloneqq \nabla_{\bbeta_k}\cL_n(\hbbeta^{(e'+1)})+\bv_k \in \partial_{\bbeta_k} \cQ_n(\hbbeta^{(e'+1)}).
    \end{equation}
    By the definition of $\bv_k$,
    \begin{equation}
    \label{invariant sub-gradient expansion}
         \xi_k^{(e'+1)}= \nabla_{\bbeta_k}\cL_n(\hbbeta^{(e'+1)})-\nabla_{\bbeta_k}\cL_n(\hbbeta^{(e'+1+k-K)}).
    \end{equation}
    Since $\nabla\cL_n(\bbeta)=\hSigma\bbeta-\tmbC$ and $\norm{\hSigma}_2=\lambda_n^{(max)}$,
    \begin{equation}
    \label{bound of each sub-gradient}
        \norm{\xi_k^{(e'+1)}}_2^2\leq \norm{\nabla\cL_n(\hbbeta^{(e'+1)})-\nabla\cL_n(\hbbeta^{(e'+1+k-K)})}_2^2\leq (\lambda_n^{(max)})^2\norm{\hbbeta^{(e'+1)}-\hbbeta^{(e'+1+k-K)}}_2^2,
    \end{equation}
    Let $\xi^{(e'+1)}\coloneqq(\xi_0^{(e'+1)\top},\ldots,\xi_K^{(e'+1)\top})^\top$. Summing the preceding bounds and noting that any block update appears in at most $K+1$ of the resulting differences gives
    \begin{equation}
    \begin{aligned}
    \label{bound of all sub-gradient}
        \norm{\xi^{(e'+1)}}_2^2&\leq (\lambda_n^{(max)})^2\sum_{k=0}^K \norm{\hbbeta^{(e'+1)}-\hbbeta^{(e'+1+k-K)}}_2^2\\
        &\leq (K+1)(\lambda_n^{(max)})^2\sum_{k=0}^K \norm{\hbbeta^{(e'+1+k-K)}_k - \hbbeta^{(e'+k-K)}_k}_2^2,
    \end{aligned}
    \end{equation}
    Moreover, $\xi^{(e'+1)}\in\partial\cQ_n(\hbbeta^{(e'+1)})$. Lemma \ref{le: diff of loss bounded by sub-gradient} therefore gives
    \begin{equation}
    \label{diff of loss between global minimizer bounded by diff of loss between next epoch}
    \begin{aligned}
        \cQ_n(\hbbeta^{(e'+1)})-\cQ_n(\cbbeta)&\leq \frac{1}{2\lambda_n^{(min)}}\norm{\xi^{(e'+1)}}_2^2\\
        &\leq \frac{(K+1)(\lambda_n^{(max)})^2}{2\lambda_n^{(min)}}\sum_{k=0}^K \norm{\hbbeta^{(e'+1+k-K)}_k - \hbbeta^{(e'+k-K)}_k}_2^2 \\
        &\leq \frac{(K+1)(\lambda_n^{(max)})^2}{2\lambda_n^{(min)}}\sum_{k=0}^K \norm{\hbbeta^{(e'+1+k-K)} - \hbbeta^{(e'+k-K)}}_2^2 \\
        &\leq \frac{(K+1)(\lambda_n^{(max)})^2}{(\lambda_n^{(min)})^2}(\cQ_n(\hbbeta^{(e'-K)})-\cQ_n(\hbbeta^{(e'+1)}),
    \end{aligned}
    \end{equation}
    where the last inequality uses \eqref{telescope of next step diff}. Let $D_n\coloneqq (K+1)(\lambda_n^{(max)})^2/(\lambda_n^{(min)})^2$. Rearranging \eqref{diff of loss between global minimizer bounded by diff of loss between next epoch} gives
    \begin{equation*}
    \begin{aligned}
        (1+D_n)(\cQ_n(\hbbeta^{(e'+1)})-\cQ_n(\cbbeta))&\leq D_n(\cQ_n(\hbbeta^{(e'-K)})-\cQ_n(\cbbeta))\\
        \cQ_n(\hbbeta^{(e'+1)})-\cQ_n(\cbbeta)&\leq \frac{D_n}{1+D_n}(\cQ_n(\hbbeta^{(e'-K)})-\cQ_n(\cbbeta)).
    \end{aligned}
    \end{equation*}
    Since
    $$
        \frac{D_n}{1+D_n}
        =1-\frac{(\lambda_n^{(min)})^2}
        {(\lambda_n^{(min)})^2+(K+1)(\lambda_n^{(max)})^2}
        =\eta_n,
    $$
    and $\lambda_n^{(min)}>0$ under Condition \ref{con: PD}, the proof is complete.
    \end{proof}
    
    We now prove the three assertions of Theorem \ref{the: algorithmic convergence}.
        \emph{Part (i).} Combining Lemmas \ref{le: diff of params bounded by diff of loss} and \ref{le: linear convergence} gives
        \begin{equation*}
            \norm{\hbbeta^{(e'+1)}-\cbbeta}_2^2\leq \frac{2}{\lambda_n^{(min)}}\eta_n (\cQ_n(\hbbeta^{(e'-K)})-\cQ_n(\cbbeta)).
        \end{equation*}
        Iterating this inequality over $e$ complete sweeps and using $c_n=\cQ_n(\hbbeta^{(0)})-\cQ_n(\cbbeta)$ yields
        \begin{equation*}
            \norm{\hbbeta^{(e)}-\cbbeta}_2^2 \leq \frac{2}{\lambda_n^{(min)}}\eta_n^e c_n.
        \end{equation*}
        Taking square roots and using $q_n^{(1)}=\sqrt{\eta_n}$ and $q_n^{(2)}=\sqrt{2c_n/\lambda_n^{(min)}}$ proves \eqref{algorithmic l2 bound}. Condition \ref{con: PD} ensures $q_n^{(1)}\in(0,1)$, while $q_n^{(2)}\geq0$; the latter equals zero exactly when the initial iterate is already $\cbbeta$.

        \emph{Part (ii).} The triangle inequality, Proposition \ref{the: global consistency}, and part (i) imply
        \begin{equation*}
        \begin{aligned}
            \norm{\hbbeta^{(e)}-\bbeta^*}_2
            &\leq \norm{\hbbeta^{(e)}-\cbbeta}_2+\norm{\cbbeta-\bbeta^*}_2\\
            &\leq q_n^{(2)}(q_n^{(1)})^e
            +O_p\left(\norm{\bbeta^*}_1\sqrt{\frac{s\log p}{N_n}}\right).
        \end{aligned}
        \end{equation*}

        \emph{Part (iii).} If $q_n^{(2)}=0$, the conclusion is immediate. Otherwise, part (i) and $\norm{\bv}_1\leq\sqrt p\norm{\bv}_2$ give
        $$
            \norm{\hbbeta^{(e)}-\cbbeta}_1
            \leq\sqrt p\,q_n^{(2)}(q_n^{(1)})^e.
        $$
        The right-hand side is at most $\sqrt{(\log p)/n}$ whenever
        $$
            e\log\{1/q_n^{(1)}\}
            \geq\log\left\{q_n^{(2)}\sqrt{\frac{np}{\log p}}\right\}.
        $$
        Taking the nonnegative integer ceiling gives the stopping rule in the theorem.
        
    \end{proof}

\subsubsection{Proof of Theorem \ref{the: oracle property}}
\begin{proof}
Under Condition \ref{con: subgaussian}, the products appearing in the influence vectors are sub-exponential, which provides the moment and maximal inequalities used below.

    We first record the rates supplied by Section \ref{subsec: algorithmic convergence}. Proposition \ref{the: global consistency} and part (iii) of Theorem \ref{the: algorithmic convergence} give
    \begin{equation}
    \label{beta rate used in debiasing}
        \norm{\hbbeta-\bbeta^*}_1=O_p(r_{\beta,n}).
    \end{equation}
    Apply the same two theorems to the inverse-column loss in \eqref{global inverse column}. Its population target is $\bthe_j^*$ because $\Sigma\bthe_j^*=\be_j$, and its first-order stochastic error is
    $$
        \norm{({\hSigma^{(j)}}-\Sigma)\bthe_j^*}_\infty
        \leq\norm{{\hSigma^{(j)}}-\Sigma}_{\max}\norm{\bthe_j^*}_1.
    $$
    The stated choice of $\lambda_j$ therefore yields
    \begin{equation}
    \label{theta rate used in debiasing}
        \norm{\hbthe_j-\bthe_j^*}_1=O_p(r_{\theta,n}),
        \qquad
        \norm{\hbthe_j}_1=O_p(\norm{\bthe_j^*}_1+r_{\theta,n}).
    \end{equation}
    If $\cbthe_j$ denotes the exact minimizer in \eqref{global inverse column}, its KKT condition is
    $$
        {\hSigma^{(j)}}\cbthe_j-\be_j+\lambda_j z_j=0,
        \qquad z_j\in\partial\norm{\cbthe_j}_1,\quad\norm{z_j}_\infty\leq1.
    $$
    Hence, using the distributed optimization rate and $\norm{{\hSigma^{(j)}}}_{\max}=O_p(1)$,
    \begin{equation}
    \label{distributed inverse KKT rate}
    \begin{aligned}
        \norm{\be_j-{\hSigma^{(j)}}\hbthe_j}_\infty
        &\leq\lambda_j+\norm{{\hSigma^{(j)}}}_{\max}
        \norm{\hbthe_j-\cbthe_j}_1\\
        &=O_p(\mu_{j,n}).
    \end{aligned}
    \end{equation}
    Proposition \ref{the: sample cov} also implies
    \begin{equation}
    \label{score and surrogate rates}
    \begin{aligned}
        \norm{S_n(\bbeta^*)}_\infty
        &\leq\norm{\tSigma-\Sigma}_{\max}\norm{\bbeta^*}_1
          +\norm{\tmbC-\mbC}_\infty=O_p(a_n),\\
        \norm{{\hSigma^{(j)}}-\tSigma}_{\max}&=O_p(h_n).
    \end{aligned}
    \end{equation}
    The second line follows directly from the definition of {$\hSigma^{(j)}$}, the rates of {$1-a_{1,j}$ and $1-a_{2,j}$}, and $c_{tr}/p=1$ under the standardization in Condition \ref{con: PD}.

    We next bound the three exact remainders in \eqref{debiased error decomposition}. By H\"older's inequality and \eqref{beta rate used in debiasing}--\eqref{score and surrogate rates},
    \begin{align*}
        \sqrt n|\Delta_1|
        &\leq\sqrt n\norm{\hbthe_j-\bthe_j^*}_1
                 \norm{S_n(\bbeta^*)}_\infty
          =O_p(\sqrt n\,r_{\theta,n}a_n)=o_p(1),\\
        \sqrt n|\Delta_2|
        &\leq\sqrt n\norm{\hbthe_j}_1
                 \norm{\tSigma-{\hSigma^{(j)}}}_{\max}
                 \norm{\hbbeta-\bbeta^*}_1\\
        &=O_p\{\sqrt n(\norm{\bthe_j^*}_1+r_{\theta,n})h_nr_{\beta,n}\}=o_p(1),\\
        \sqrt n|\Delta_3|
        &\leq\sqrt n\norm{{\hSigma^{(j)}}\hbthe_j-\be_j}_\infty
                 \norm{\hbbeta-\bbeta^*}_1
          =O_p(\sqrt n\,\mu_{j,n}r_{\beta,n})=o_p(1).
    \end{align*}
    Thus the exact decomposition in Section \ref{subsec: debiased estimator} gives
    \begin{equation}
    \label{asymptotic linear representation}
        \sqrt n(\hb_j-\beta_j^*)
        =-\frac1{\sqrt n}\sum_{i=1}^n\bthe_j^{*\top}\bphi_i+o_p(1).
    \end{equation}

    It remains to prove consistency of the feasible variance estimator. Write
    $$
        Z_{i,j}=\bthe_j^{*\top}\bphi_i,\quad
        \widehat Z_{i,j}=\hbthe_j^\top\hbphi_i.
    $$
    Standard maximal moment inequalities for sub-exponential variables, Condition \ref{con: ratios}, and the bounded weights in \eqref{population influence contribution} imply
    \begin{equation}
    \label{influence envelope rates}
        \frac1n\sum_{i=1}^n\norm{\bphi_i}_\infty^2
        =O_p(B_{\beta,n}^2\log^2p),
        \qquad
        \frac1n\sum_{i=1}^n\norm{\hA_i}_{\max}^2
        =O_p(\log^2p).
    \end{equation}
    Indeed, because $K$ is fixed, each coordinate of $\bphi_i$ is a finite sum of centered products of sub-Gaussian variables with sub-exponential norm bounded by $CB_{\beta,n}$. The expected squared maximum over $p$ coordinates is therefore at most $CB_{\beta,n}^2\log^2p$, and Markov's inequality gives the first rate. Applying the same argument to the $p^2$ entries of $\hA_i$ gives the second rate.

    Let $\bdelta_\theta=\hbthe_j-\bthe_j^*$ and $\bdelta_\beta=\hbbeta-\bbeta^*$. Direct expansion gives
    \begin{align*}
        \widehat Z_{i,j}-Z_{i,j}
        ={}&\bdelta_\theta^\top\bphi_i
        +\hbthe_j^\top\hA_i\bdelta_\beta\\
        &+\hbthe_j^\top\{(\hA_i-A_i)\bbeta^*-(\hq_i-q_i)\}.
    \end{align*}
    Uniformly in $i$, the regular observation proportions give
    $$
        \norm{\hA_i-A_i}_{\max}\leq C\norm{\tSigma-\Sigma}_{\max},
        \qquad
        \norm{\hq_i-q_i}_\infty\leq C\norm{\tmbC-\mbC}_\infty.
    $$
    Combining these inequalities with \eqref{beta rate used in debiasing}, \eqref{theta rate used in debiasing}, and \eqref{influence envelope rates}, we obtain
    \begin{align}
    \label{mean square influence stability}
        \left\{\frac1n\sum_{i=1}^n(\widehat Z_{i,j}-Z_{i,j})^2\right\}^{1/2}
        =O_p\big[&r_{\theta,n}B_{\beta,n}\log p
        +(\norm{\bthe_j^*}_1+r_{\theta,n})r_{\beta,n}\log p\nonumber\\
        &+(\norm{\bthe_j^*}_1+r_{\theta,n})a_n\big].
    \end{align}
    By \eqref{variance plugin rate conditions}, the right-hand side of \eqref{mean square influence stability} is $o_p(\norm{\bthe_j^*}_2)$. The lower bound \eqref{score variance lower bound} gives $\sqrt{v_{n,j}}\geq\sqrt{c\kappa_P}\norm{\bthe_j^*}_2$, and hence
    \begin{equation}
    \label{relative mean square influence stability}
        \left\{\frac1n\sum_{i=1}^n(\widehat Z_{i,j}-Z_{i,j})^2\right\}^{1/2}
        =o_p(\sqrt{v_{n,j}}).
    \end{equation}

    The fourth-moment calculation in the proof of Lemma \ref{le: normality} and conditional independence yield
    \begin{align*}
        &\Var\left\{\frac1{n v_{n,j}}\sum_{i=1}^n
        [Z_{i,j}^2-\E(Z_{i,j}^2\mid R)]\,\middle|\,R\right\}\\
        &\qquad\leq\frac1{n^2v_{n,j}^2}
        \sum_{i=1}^n\E(Z_{i,j}^4\mid R)
        \leq C\frac{\norm{\bthe_j^*}_1^4B_{\beta,n}^4}
        {n\norm{\bthe_j^*}_2^4}=o(1).
    \end{align*}
    Because $n^{-1}\sum_i\E(Z_{i,j}^2\mid R)=v_{n,j}$, it follows that
    \begin{equation}
    \label{relative oracle second moment}
        \frac{n^{-1}\sum_{i=1}^nZ_{i,j}^2}{v_{n,j}}
        \xrightarrow{p}1.
    \end{equation}

    To account for centering and the $n-1$ divisor in \eqref{estimator of variance}, define $\bar Z_j=n^{-1}\sum_iZ_{i,j}$ and $\overline{\widehat Z}_j=n^{-1}\sum_i\widehat Z_{i,j}$. Conditional on $R$, $\E(\bar Z_j\mid R)=0$ and $\Var(\bar Z_j\mid R)=v_{n,j}/n$, so $\bar Z_j^2/v_{n,j}=O_p(n^{-1})=o_p(1)$. Therefore \eqref{relative oracle second moment} implies
    \begin{equation}
    \label{relative centered oracle variance}
        \frac{1}{(n-1)v_{n,j}}\sum_{i=1}^n(Z_{i,j}-\bar Z_j)^2
        \xrightarrow{p}1.
    \end{equation}

    Centering is an orthogonal projection in $\mathbb R^n$, so \eqref{relative mean square influence stability} also gives
    \begin{equation*}
        \left\{\frac1n\sum_{i=1}^n
        \big[(\widehat Z_{i,j}-\overline{\widehat Z}_j)
             -(Z_{i,j}-\bar Z_j)\big]^2\right\}^{1/2}
        =o_p(\sqrt{v_{n,j}}).
    \end{equation*}
    By \eqref{relative centered oracle variance}, the oracle centered empirical norm is $O_p(\sqrt{v_{n,j}})$, and the triangle inequality gives the same order for its feasible counterpart. Applying Cauchy--Schwarz to the difference of their squared centered norms, dividing by $v_{n,j}$, and using $n/(n-1)\to1$ gives
    \begin{equation*}
        \frac{\hv_{n,j}}{v_{n,j}}-
        \frac{1}{(n-1)v_{n,j}}\sum_{i=1}^n(Z_{i,j}-\bar Z_j)^2
        =o_p(1).
    \end{equation*}
    Together with \eqref{relative centered oracle variance}, this proves $\hv_{n,j}/v_{n,j}\to1$ in probability. Combining this result, Lemma \ref{le: normality}, and \eqref{asymptotic linear representation} with Slutsky's theorem gives the desired result.
\end{proof}

\subsection{Proofs of the Perturbed-Moment and Estimation Results}

We first establish the response-covariate concentration bound using the argument in the original proof. As in that proof, the response is centered before constructing the empirical moments, so $\E Y=0$. Write $\sigma_y^2=\Var(Y)>0$ and, for a response initially held by $B_r$, define its standardized noise variance by
$$
    \eta_r^2\coloneqq\frac{\tau_{Y,r}^2}{\sigma_y^2}.
$$
Let $L'>0$ be a common sub-Gaussian parameter for $X_j/\sqrt{\sigma_{jj}}$, $(Y-\E Y)/\sigma_y$, and $\xi_i^r/\sigma_y$, uniformly over $j$ and $r$. Such a constant exists by Conditions \ref{con: subgaussian} and \ref{con: noises}. Define
$$
    C_r\coloneqq2\{72L'^2+8+2\eta_r^2\},
    \qquad C_{\max}\coloneqq\max_{0\leq r\leq K}C_r.
$$

\begin{lemma}[Extension of Lemma 1 in \citet{ravikumar2011high}]
\label{le: convergence of perturbed individual cross-covariance}
Let $c_j=\E(YX_j)$, and let $\widetilde c_j^{\mathrm{priv}}$ be the $j$th entry of $\tmbC^{\mathrm{priv}}$. If coordinate $j$ is held by $B_k$, then, conditional on a regular missingness pattern $R$,
$$
    \bbP\left(
    \left.\abs{\widetilde c_j^{\mathrm{priv}}-c_j}>\delta\,\right|R
    \right)
    \leq4\exp\left\{-\frac{n_k^y\delta^2}
    {\sigma_y^2\sigma_{jj}C_{\max}^2}\right\}
$$
for $0\leq\delta\leq\sigma_y\sqrt{\sigma_{jj}}C_{\max}/2$.
\end{lemma}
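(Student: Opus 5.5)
We would follow the polarization argument behind Lemma 1 of \citet{ravikumar2011high}, adapted to the perturbed response and to noise that is independent but not identically distributed. Fix $j$ held by $B_k$ and condition on $R$. By Condition \ref{con: missing pattern}, the summands indexed by $\mathcal I_k^y$ are then independent.

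\emph{Reduction to standardized variables.} For $i\in\mathcal I_k^y$, let $r(i)$ denote the unique user with $i\in\mathcal R_{r(i)}$. Define
$$
U_i=\frac{x_{ij}}{\sqrt{\sigma_{jj}}},\qquad W_i=\frac{y_i^{\mathrm{priv}}}{\sigma_y}=\frac{y_i+\xi_i^{r(i)}}{\sigma_y}.
$$
Since $\xi_i^{r}$ is centered and independent of $(\bx_i,y_i,R)$, we have $\E(U_iW_i\mid R)=c_j/(\sigma_y\sqrt{\sigma_{jj}})\eqqcolon\rho_j$. Moreover,
$$
\widetilde c_j^{\mathrm{priv}}-c_j=\sigma_y\sqrt{\sigma_{jj}}\,\frac1{n_k^y}\sum_{i\in\mathcal I_k^y}(U_iW_i-\rho_j).
$$
Hence it suffices to show that
$$
\bbP\left(\Bigl|\frac1{n_k^y}\sum_i(U_iW_i-\rho_j)\Bigr|>t\right)\leq4\exp\{-n_k^yt^2/C_{\max}^2\}
\qquad\text{for } t\leq C_{\max}/2,
$$
and then substitute $t=\delta/(\sigma_y\sqrt{\sigma_{jj}})$.

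\emph{Polarization.} We use the identity
$$
4(U_iW_i-\rho_j)=\{(U_i+W_i)^2-\E(U_i+W_i)^2\}-\{(U_i-W_i)^2-\E(U_i-W_i)^2\}.
$$
A union bound then splits the target event into two events, each at level $t/2$ for a centered average of squares. Each $U_i\pm W_i$ is a sum of three independent or dependent sub-Gaussian pieces: $U_i$, $(y_i-\E Y)/\sigma_y$, and $\xi_i^r/\sigma_y$. All three have parameter $L'$, so $U_i\pm W_i$ is sub-Gaussian with parameter at most $3L'$. Its variance is at most $2(1+1+\eta_r^2)$, where $\E W_i^2=1+\eta_r^2$. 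The centered squares are therefore sub-exponential, with $\psi_1$-type constants bounded by a multiple of $L'^2$ plus the variance term.

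\emph{Concentration and the main obstacle.} Apply a Bernstein inequality for independent, non-identically distributed sub-exponential variables (the chi-square-type bound used by Ravikumar et al.), using for every summand the worst-case constant over the holder index $r$. Each of the two squared averages has tail at most $2\exp\{-n_k^y(t/2)^2\cdot4/C_r^2\}$ in the small-deviation regime; this regime is where the restriction $t\le C_{\max}/2$ comes from. Taking $C_r\le C_{\max}$ uniformly and adding the two tails gives the factor $4$.

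The main obstacle is the bookkeeping that shows the Bernstein constant can be taken as $C_r=2\{72L'^2+8+2\eta_r^2\}$. Here $72L'^2$ arises from the $(3L')^2$-scale sub-Gaussian parameter of $U\pm W$ together with the numerical constant in the squared-sub-Gaussian moment bound. The term $8+2\eta_r^2$ arises from twice the second moment $\E(U\pm W)^2\le 2(2+\eta_r^2)$. We would first check this with the moment-generating-function form of the chi-square argument. If the exact constant does not match, the fallback is to enlarge $C_r$, which only weakens the stated bound.
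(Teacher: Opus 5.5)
Your proposal is correct and follows essentially the same route as the paper. You standardize, polarize $U_iW_i-\rho_j$ through $(U_i\pm W_i)^2$ so that the noise variance $\eta_r^2$ cancels, bound the sub-Gaussian parameter of $U_i\pm W_i$ by $3L'$, and apply Bernstein's inequality with the worst-case constant over holders $r$, adding the two tails to get the factor $4$.

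The paper obtains the constant you flag without an MGF computation. It checks the moment condition $\sup_{m\geq2}\{\E|Z_i|^m/m!\}^{1/m}\leq C_{\max}/2$ for $Z_i=(U_i\pm W_i)^2-\E(U_i\pm W_i)^2$, using Lemma 1.4 of Bickel and Levina (2008), namely $\E|U_i\pm W_i|^{2m}/m!\leq2^{m+1}(3L')^{2m}$. This gives $36\sqrt2L'^2+\sqrt2\,\E(U_i\pm W_i)^2\leq72L'^2+8+2\eta_r^2$, so your fallback of enlarging $C_r$ is not needed.

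One small correction to the union bound: the unscaled centered averages of squares must exceed $2t$, not $t/2$. Your final exponent $n_k^yt^2/C_r^2$ and the range $t\leq C_{\max}/2$ already correspond to level $2t$.
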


\subsubsection{Proof of Lemma \ref{le: convergence of perturbed individual cross-covariance}}

\begin{proof}
To prove the lemma, define
$$
    \widetilde{\mathcal A}_j(\delta)
    \coloneqq
    \left\{
    \left|\frac1{n_k^y}\sum_{i\in\mathcal I_k^y}
    y_i^{\mathrm{priv}}X_{ij}-c_j\right|>\delta
    \right\},
    \qquad
    \rho_j\coloneqq\frac{c_j}{\sigma_y\sqrt{\sigma_{jj}}}.
$$
Cauchy--Schwarz gives $|\rho_j|\leq1$. Set
$$
    \overline y_i^{\mathrm{priv}}=\frac{y_i^{\mathrm{priv}}-\E Y}{\sigma_y},
    \qquad
    \overline X_{ij}=\frac{X_{ij}}{\sqrt{\sigma_{jj}}},
    \qquad
    U_{ij}=\overline y_i^{\mathrm{priv}}+\overline X_{ij},
    \qquad
    V_{ij}=\overline y_i^{\mathrm{priv}}-\overline X_{ij}.
$$

\begin{lemma}
\label{le: perturbed decoupling}
For $i\in\mathcal I_k^y\cap\mathcal R_r$, the variables $U_{ij}$ and $V_{ij}$ are sub-Gaussian with parameter $3L'$. Moreover, for every $\delta>0$,
\begin{align*}
\bbP\{\widetilde{\mathcal A}_j(\delta)\mid R\}
\leq{}&
\bbP\left\{
\left.\left|\sum_{r=0}^K\sum_{i\in\mathcal I_k^y\cap\mathcal R_r}
(U_{ij}^2-\mu_{U,r})\right|
>\frac{2n_k^y\delta}{\sigma_y\sqrt{\sigma_{jj}}}
\,\right|R\right\}\\
&+\bbP\left\{
\left.\left|\sum_{r=0}^K\sum_{i\in\mathcal I_k^y\cap\mathcal R_r}
(V_{ij}^2-\mu_{V,r})\right|
>\frac{2n_k^y\delta}{\sigma_y\sqrt{\sigma_{jj}}}
\,\right|R\right\},
\end{align*}
where
$$
    \mu_{U,r}=2(1+\rho_j)+\eta_r^2,
    \qquad
    \mu_{V,r}=2(1-\rho_j)+\eta_r^2.
$$
\end{lemma}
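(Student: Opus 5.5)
The plan is to use the polarization identity $4ab=(a+b)^2-(a-b)^2$, exactly as in Lemma~1 of \citet{ravikumar2011high}. The only new ingredient is the independent response noise $\xi_i^r$, which shifts the second moments of $U_{ij}$ and $V_{ij}$ by $\eta_r^2$. Throughout I condition on $R$. By Conditions~\ref{con: missing pattern} and \ref{con: noises}, $(\bx_i,y_i)$ and $\xi_i^r$ keep their unconditional laws and remain mutually independent given $R$, so the conditioning can be carried along without change.

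\emph{Sub-Gaussianity.} For $i\in\mathcal R_r$, write $\overline y_i^{\mathrm{priv}}=(y_i-\E Y)/\sigma_y+\xi_i^r/\sigma_y$. By the choice of $L'$, each of $(y_i-\E Y)/\sigma_y$, $\xi_i^r/\sigma_y$ and $\overline X_{ij}$ is sub-Gaussian with parameter $L'$. For possibly dependent summands the sub-Gaussian parameter is subadditive. This follows from H\"older's inequality applied to the moment generating function: $\E e^{t(A+B)}\le(\E e^{ptA})^{1/p}(\E e^{qtB})^{1/q}$, and optimizing $p$ gives parameter $L_A+L_B$. Hence $U_{ij}$ and $V_{ij}$, each a signed sum of three such terms, are sub-Gaussian with parameter $3L'$.

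\emph{Moments.} Using $\E Y=0$, the independence and zero mean of $\xi_i^r$, and $\E\overline X_{ij}^2=1$, I will compute
$$\E(\overline y_i^{\mathrm{priv}})^2=1+\eta_r^2,\qquad \E(\overline y_i^{\mathrm{priv}}\overline X_{ij})=\rho_j .$$
Consequently $\E U_{ij}^2=2(1+\rho_j)+\eta_r^2=\mu_{U,r}$ and $\E V_{ij}^2=2(1-\rho_j)+\eta_r^2=\mu_{V,r}$. In particular the noise term $\eta_r^2$ appears in both means and cancels in their difference: $\mu_{U,r}-\mu_{V,r}=4\rho_j$.

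\emph{Decoupling.} Since $y_i^{\mathrm{priv}}X_{ij}=\sigma_y\sqrt{\sigma_{jj}}\,\overline y_i^{\mathrm{priv}}\overline X_{ij}$ and $U_{ij}^2-V_{ij}^2=4\overline y_i^{\mathrm{priv}}\overline X_{ij}$, subtracting $c_j=\sigma_y\sqrt{\sigma_{jj}}\rho_j$ and using $\mu_{U,r}-\mu_{V,r}=4\rho_j$ gives the identity
$$\frac1{n_k^y}\sum_{i\in\mathcal I_k^y}y_i^{\mathrm{priv}}X_{ij}-c_j=\frac{\sigma_y\sqrt{\sigma_{jj}}}{4n_k^y}\Big\{\sum_{r}\sum_{i\in\mathcal I_k^y\cap\mathcal R_r}(U_{ij}^2-\mu_{U,r})-\sum_{r}\sum_{i\in\mathcal I_k^y\cap\mathcal R_r}(V_{ij}^2-\mu_{V,r})\Big\}.$$
The double sums are legitimate because the sets $\mathcal R_r$ partition $\mathcal I^y\supseteq\mathcal I_k^y$.

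On $\widetilde{\mathcal A}_j(\delta)$ the left side exceeds $\delta$ in absolute value. Then at least one of the two centered sums must exceed $2n_k^y\delta/(\sigma_y\sqrt{\sigma_{jj}})$ in absolute value, since otherwise the bracket would be at most $4n_k^y\delta/(\sigma_y\sqrt{\sigma_{jj}})$. A union bound then yields the stated inequality.

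The only delicate step is the moment bookkeeping. One must check that the noise contributes $\eta_r^2$ to both $\E U_{ij}^2$ and $\E V_{ij}^2$ but no cross term, which is what makes the heterogeneous, $r$-dependent centering cancel in the difference. This relies on the independence of $\xi_i^r$ from $(\bx_i,y_i,R)$ in Condition~\ref{con: noises}.
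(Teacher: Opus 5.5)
Your proposal is correct and follows the paper's proof. Both use the polarization identity $\overline y_i^{\mathrm{priv}}\overline X_{ij}-\rho_j=\tfrac14(U_{ij}^2-\mu_{U,r})-\tfrac14(V_{ij}^2-\mu_{V,r})$, in which the noise shift $\eta_r^2$ cancels from $\mu_{U,r}-\mu_{V,r}=4\rho_j$, sum it over the partition by $\mathcal R_0,\ldots,\mathcal R_K$, and finish with a union bound. The only difference is in the sub-Gaussian step: the paper uses independence of $Y$ and $\xi_i^r$ to give $\overline y_i^{\mathrm{priv}}$ parameter $\sqrt2L'$ and then Cauchy--Schwarz to reach $\sqrt6L'\le3L'$, whereas your H\"older subadditivity bound gives $3L'$ directly without needing that independence.
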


\begin{proof}
For $i\in\mathcal R_r$, write
$\overline y_i^{\mathrm{priv}}=(Y_i-\E Y)/\sigma_y+\xi_i^r/\sigma_y$.
Independence of the response and perturbation implies that
$\overline y_i^{\mathrm{priv}}$ is sub-Gaussian with parameter
$\sqrt2L'$. By Cauchy--Schwarz,
\begin{align*}
    \E\{\exp(tU_{ij})\}
    &\leq
    \E\{\exp(2t\overline y_i^{\mathrm{priv}})\}^{1/2}
    \E\{\exp(2t\overline X_{ij})\}^{1/2}
    \leq\exp\left(\frac{9L'^2t^2}{2}\right).
\end{align*}
Thus, $U_{ij}$ is sub-Gaussian with parameter $3L'$; the same argument applies to $V_{ij}$.

For $i\in\mathcal I_k^y\cap\mathcal R_r$, direct calculation gives
$$
    \E(U_{ij}^2)=\mu_{U,r},
    \qquad
    \E(V_{ij}^2)=\mu_{V,r}.
$$
The standardized noise variance cancels from their difference, and hence
$$
    \overline y_i^{\mathrm{priv}}\overline X_{ij}-\rho_j
    =\frac14(U_{ij}^2-\mu_{U,r})
     -\frac14(V_{ij}^2-\mu_{V,r}).
$$
Summing this identity over $\mathcal I_k^y$, partitioning the indices by
$\mathcal R_0,\ldots,\mathcal R_K$, and applying the union bound prove the result.
\end{proof}

For $i\in\mathcal I_k^y\cap\mathcal R_r$, define
$Z_{ij}^U=U_{ij}^2-\mu_{U,r}$. We verify the Bernstein moment condition
\begin{equation}
\label{perturbed moment condition}
    \sup_{m\geq2}
    \left\{\frac{\E|Z_{ij}^U|^m}{m!}\right\}^{1/m}
    \leq\frac{C_{\max}}2.
\end{equation}
Using $(a+b)^m\leq2^m(a^m+b^m)$ gives
\begin{equation}
\label{perturbed initial moment bound}
    \left\{\frac{\E|Z_{ij}^U|^m}{m!}\right\}^{1/m}
    \leq2\left[
    \left\{\frac{\E|U_{ij}|^{2m}}{m!}\right\}^{1/m}
    +\frac{\mu_{U,r}}{(m!)^{1/m}}
    \right].
\end{equation}
Because $U_{ij}$ is sub-Gaussian with parameter $3L'$, Lemma 1.4 of
\citet{bickel2008regularized} and $m!\geq(m/e)^m$ give
$$
    \frac{\E|U_{ij}|^{2m}}{m!}
    \leq2^{m+1}(3L')^{2m}.
$$
Substitution into \eqref{perturbed initial moment bound} yields
\begin{align*}
    \sup_{m\geq2}
    \left\{\frac{\E|Z_{ij}^U|^m}{m!}\right\}^{1/m}
    &\leq36\sqrt2L'^2+\sqrt2\mu_{U,r}\\
    &\leq72L'^2+8+2\eta_r^2
    \leq\frac{C_{\max}}2,
\end{align*}
where $|\rho_j|\leq1$ is used in the second inequality. This verifies
\eqref{perturbed moment condition}. Bernstein's inequality therefore gives,
for $0<t\leq C_{\max}$,
$$
    \bbP\left\{
    \left.\left|\sum_{r=0}^K\sum_{i\in\mathcal I_k^y\cap\mathcal R_r}
    Z_{ij}^U\right|>n_k^yt\,\right|R\right\}
    \leq2\exp\left\{-\frac{n_k^yt^2}{4C_{\max}^2}\right\}.
$$
Taking $t=2\delta/(\sigma_y\sqrt{\sigma_{jj}})$ gives
$$
    \bbP\left\{
    \left.\left|\sum_{r=0}^K\sum_{i\in\mathcal I_k^y\cap\mathcal R_r}
    (U_{ij}^2-\mu_{U,r})\right|
    >\frac{2n_k^y\delta}{\sigma_y\sqrt{\sigma_{jj}}}
    \,\right|R\right\}
    \leq2\exp\left\{-\frac{n_k^y\delta^2}
    {\sigma_y^2\sigma_{jj}C_{\max}^2}\right\}.
$$
The same bound holds with $V_{ij}$ and $\mu_{V,r}$ in place of
$U_{ij}$ and $\mu_{U,r}$. Lemma \ref{le: perturbed decoupling} then proves the stated inequality.
\end{proof}

\subsubsection{Proof of Proposition \ref{the: perturbed moment concentration}}

\begin{proof}
For the covariance bound, within-user entries are unchanged, and the proof is the same as the corresponding part of Proposition \ref{the: sample cov}. For a cross-user entry, the two perturbed covariates remain sub-Gaussian under Condition \ref{con: noises} and all available samples have identical distribution. Therefore the results directly follow from Lemma 1 in \citet{ravikumar2011high} and Theorem 1 of \citet{yu2020optimal}.

It remains to prove the response-covariate bound. By Condition \ref{con: PD}, $\sigma_{jj}=1$. Set
$$
    C_C=\sqrt2\,\sigma_y C_{\max}.
$$
If $N_j^{xy}=n_k^y\geq8\log p$, then
$C_C\sqrt{(\log p)/N_j^{xy}}\leq\sigma_yC_{\max}/2$.
Lemma \ref{le: convergence of perturbed individual cross-covariance} consequently gives
$$
    \max_j\bbP\left(
    \left.|\widetilde c_j^{\mathrm{priv}}-c_j|
    \geq C_C\sqrt{\frac{\log p}{N_j^{xy}}}\,\right|R
    \right)
    \leq\frac4{p^2}.
$$
A union bound over $j=1,\ldots,p$ yields
$$
    \bbP\left(
    \left.\norm{\tmbC^{\mathrm{priv}}-\mbC}_\infty
    \geq C_C\sqrt{\frac{\log p}{\min_jN_j^{xy}}}\,\right|R
    \right)
    \leq\frac4p.
$$
Finally, the concentration bound for $\hSigma^{\mathrm{priv}}$ follows from the covariance bound and the same positive-definiteness adjustment used in the proof of Proposition \ref{the: sample cov}, after enlarging the constant to $C_H$.
\end{proof}

\subsubsection{Proof of Theorem \ref{cor: private global consistency}}

\begin{proof}
Theorem \ref{the: perturbed moment concentration} gives the same max-norm orders for the assembled covariance and response-covariate moments as those used in the proof of Proposition \ref{the: global consistency}. Their population targets remain $\Sigma$ and $\mbC$, respectively, so the Lasso basic inequality and restricted-eigenvalue argument apply with $(\hSigma,\tmbC,\cbbeta)$ replaced by $(\hSigma^{\mathrm{priv}},\tmbC^{\mathrm{priv}},\cbbeta^{\mathrm{priv}})$. This proves the two estimation rates. The final statement follows because Theorem \ref{the: algorithmic convergence} is deterministic conditional on a positive-definite quadratic matrix and exact block minimization.
\end{proof}

\section{Optional Release-Level Local Differential Privacy Calibration}
\label{app: release level ldp}

\textcolor{blue}{In this supplementary section, we complement the one-time perturbation in Section \ref{subsec: perturbation mechanism}, which reduces direct disclosure, with a formal privacy budget. \textcolor{red}{To obtain this formal privacy guarantee, we introduce an optional calibration for clipped one-time identity releases.}} 

Local differential privacy requires each owner to randomize a record before transmission and therefore does not rely on a trusted aggregator.

\begin{definition}[Local Differential Privacy \citep{kasiviswanathan2011can, duchi2013local}]
    A randomized mechanism $\mathcal{M}$ satisfies $(\epsilon,\delta)$-local differential privacy if, for any two possible individual records $v,v'$ and every measurable output set $S$,
    $$
        \mathbb{P}\{\mathcal{M}(v)\in S\}
        \leq e^\epsilon\mathbb{P}\{\mathcal{M}(v')\in S\}+\delta.
    $$
\end{definition}

We calibrate Gaussian perturbations to the $\ell_2$-sensitivity of each party-local identity release.

\begin{assumption}
\label{con: sensitivity}
    Before perturbation, each user clips its local records so that $\norm{\bx_i^k}_2\leq\Delta_{X,k}$ for every $i\in\mathcal I_k$, and each initially held response satisfies $|y_i|\leq\Delta_Y$.
\end{assumption}
Under Condition \ref{con: sensitivity}, the identity releases of $\bx_i^k$ and $y_i$ have sensitivities at most $2\Delta_{X,k}$ and $2\Delta_Y$, respectively.

\begin{theorem}[Release-Level Privacy of One-Time Perturbed Identity Releases]
\label{the: ldp}
    Suppose Condition \ref{con: sensitivity} holds. For user $B_k$, choose budgets $0<\epsilon_{X,k},\epsilon_{Y,k}<1$ and $0<\delta_{X,k},\delta_{Y,k}<1$ satisfying
    $$
        \epsilon_{X,k}+\epsilon_{Y,k}\leq\epsilon,
        \qquad \delta_{X,k}+\delta_{Y,k}\leq\delta,
    $$
    where the response budget is needed only for a local record whose response is initially held by $B_k$. Let the entries of $\Xi^k$ and the corresponding response perturbations be independent Gaussian variables with standard deviations $\tau_{X,k}$ and $\tau_{Y,k}$ satisfying
    $$
        \tau_{X,k}\geq
        \frac{2\Delta_{X,k}\sqrt{2\log(1.25/\delta_{X,k})}}
             {\epsilon_{X,k}},
        \qquad
        \tau_{Y,k}\geq
        \frac{2\Delta_Y\sqrt{2\log(1.25/\delta_{Y,k})}}
             {\epsilon_{Y,k}}.
    $$
    Then the joint one-time release of the perturbed covariate block and, when applicable, the perturbed response is $(\epsilon,\delta)$-locally differentially private with respect to each party-local record. Any subsequent computation based only on these releases, public information, and independent randomness inherits the same guarantee by post-processing. An equal allocation for a local record containing both a covariate block and a response uses $(\epsilon_{X,k},\delta_{X,k})=(\epsilon_{Y,k},\delta_{Y,k})=(\epsilon/2,\delta/2)$.
\end{theorem}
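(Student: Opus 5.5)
The plan is to apply the classical Gaussian mechanism to each identity release separately, combine the two releases by basic composition, and then obtain the statement about downstream messages from post-processing. Fix a user $B_k$ and one party-local record $i\in\mathcal I_k$. The released objects for this record are the row $\bx_i^k+\Xi^k_{i\cdot}\in\mathbb R^{p_k}$ and, when $i\in\mathcal R_k$, the scalar $y_i+\xi_i^k$. Under Condition \ref{con: sensitivity}, any two admissible inputs $v,v'$ for the covariate row satisfy $\norm{v-v'}_2\leq\norm{v}_2+\norm{v'}_2\leq2\Delta_{X,k}$. Likewise, any two admissible responses satisfy $|y-y'|\leq2\Delta_Y$. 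These are the $\ell_2$-sensitivities of the identity maps.

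The first step is the Gaussian mechanism theorem (Dwork and Roth, Theorem A.1). It says that adding i.i.d.\ $\mathcal N(0,\tau^2)$ noise to a query with $\ell_2$-sensitivity $\Delta_2$ is $(\epsilon',\delta')$-DP whenever $\epsilon'\in(0,1)$ and $\tau\geq\Delta_2\sqrt{2\log(1.25/\delta')}/\epsilon'$. I apply it with neighboring relation ``any two possible records'', which is exactly the local DP definition: a worst-case sensitivity over all pairs is all that theorem needs.

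\begin{itemize}
\item For the covariate row, take $\Delta_2=2\Delta_{X,k}$ and $(\epsilon',\delta')=(\epsilon_{X,k},\delta_{X,k})$. The stated lower bound on $\tau_{X,k}$ then gives $(\epsilon_{X,k},\delta_{X,k})$-LDP. The entries of $\Xi^k$ are independent with common variance, so the noise vector is isotropic, as the theorem requires.
\item For the response, take $\Delta_2=2\Delta_Y$ and obtain $(\epsilon_{Y,k},\delta_{Y,k})$-LDP.
\end{itemize}

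The restriction $\epsilon<1$ in that theorem is why the budgets are assumed to lie in $(0,1)$. This is also where I would be most careful, since it is the step that actually has a hypothesis to check. If one prefers to avoid the $\epsilon<1$ restriction, the analytic Gaussian mechanism could be cited instead, but the stated constants match the classical version.

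Next I combine the two releases. The covariate and response noises are independent of each other and of the data, and the release is taken once, not repeated across sweeps. Basic sequential composition therefore gives that the joint release $(\bx_i^k+\Xi^k_{i\cdot},\,y_i+\xi_i^k)$ is $(\epsilon_{X,k}+\epsilon_{Y,k},\,\delta_{X,k}+\delta_{Y,k})$-LDP with respect to the whole local record. By the budget constraints this is at most $(\epsilon,\delta)$. When no response is held, only the covariate part is released and the bound holds trivially.

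A minor point to check is that composition applies even though a neighboring pair may change both components simultaneously. It does, because basic composition for adaptive or independent mechanisms operating on the same record handles precisely this case.

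Finally, I argue post-processing. Every later message is a measurable function of the one-time releases, public information, and independent randomness. This covers the vectors $\bu^{\mathrm{priv}}_{k\to l}(\bbeta_k)=\bX_{\mathrm{priv}}^{k,l}\bbeta_k$ across all sweeps and the products $\bX^{l\top}\by_l^{\mathrm{priv}}$ computed by other users. Such a message does not touch record $i$ of $B_k$ except through its release, although it may depend on other users' data, which is independent of record $i$'s randomization.

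The post-processing lemma, that $\mathbb P\{f(\mathcal M(v))\in S\}=\mathbb P\{\mathcal M(v)\in f^{-1}(S)\}$, then transfers the inequality. The equal-allocation remark is immediate by substituting $\epsilon/2,\delta/2$, which lie in $(0,1)$ whenever $\epsilon<2$ and $\delta<2$.
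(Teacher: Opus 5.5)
Your argument is the paper's argument and proves the theorem as stated. You apply the Gaussian mechanism to each identity release with $\ell_2$-sensitivity $2\Delta_{X,k}$ or $2\Delta_Y$, use sequential composition to get $(\epsilon_{X,k}+\epsilon_{Y,k},\delta_{X,k}+\delta_{Y,k})$, pass to $(\epsilon,\delta)$ by the budget constraints, and finish with post-processing.

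One sentence in your post-processing paragraph is false and should be deleted. It is the claim that post-processing covers the ALB-P messages $\bu^{\mathrm{priv}}_{k\to l}(\bbeta_k)=\bX^{k,l}_{\mathrm{priv}}\bbeta_k$ across all sweeps.

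\begin{itemize}
\item The block $\bbeta_k$ minimizes an objective built from quantities computed on the raw rows $\bx_i^k$, not on the releases:
  \begin{itemize}
  \item the unperturbed local covariance $\tSigma_k=n_k^{-1}\bX^{k\top}\bX^k$;
  \item the linear term $\tmbC_k^{\mathrm{priv}}=(n_k^y)^{-1}\bX^{k\top}\by_k^{\mathrm{priv}}$;
  \item the scale $c_{tr}$.
  \end{itemize}
  Hence $\bbeta_k$, and every message formed from it, depends on record $i$ of $B_k$ beyond its one-time release.
\item The same problem affects anything built from $\bX^{l\top}\by_l^{\mathrm{priv}}$. That product is post-processing with respect to $B_k$'s response release, but it uses $B_l$'s raw rows. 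The guarantee must hold for every party-local record, including $B_l$'s.
\end{itemize}

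The theorem's post-processing clause therefore does not extend to the iterative protocol. This is why the statement restricts to computations based only on the releases, public information, and independent randomness. It is also why the paper lists end-to-end guarantees for the full iterative protocol as future work.
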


\textcolor{blue}{Theorem \ref{the: ldp} establishes an $(\epsilon,\delta)$-local differential privacy guarantee for each clipped party-local record under the stated Gaussian-noise calibration. The guarantee applies jointly to the released covariate block and response whenever both are present, and it extends by post-processing to every subsequent output computed from these releases, public information, and independent randomness. If an individual contributes multiple released blocks across parties, the corresponding person-level guarantee follows by composing the privacy budgets of those blocks.}

\begin{proof}
For a covariate block held by $B_k$, consider the one-time identity mechanism
$$
    \mathcal M_{X,k}(\bx_i^k)=\bx_i^k+\Xi_i^k.
$$
Its $\ell_2$-sensitivity is at most $2\Delta_{X,k}$. The Gaussian calibration in the theorem therefore makes $\mathcal M_{X,k}$ an $(\epsilon_{X,k},\delta_{X,k})$-locally differentially private mechanism \citep{dwork2014algorithmic}. Similarly,
$$
    \mathcal M_{Y,k}(y_i)=y_i+\xi_i^k,
    \qquad i\in\mathcal R_k,
$$
has sensitivity at most $2\Delta_Y$ and is $(\epsilon_{Y,k},\delta_{Y,k})$-locally differentially private under the stated calibration.

For a party-local record containing both components, sequential composition gives
$$
    (\epsilon_{X,k}+\epsilon_{Y,k},
     \delta_{X,k}+\delta_{Y,k})\text{-LDP}.
$$
If the party holds no response for that record, only the covariate budget is incurred. The budget inequalities in the theorem and monotonicity of differential privacy yield the claimed $(\epsilon,\delta)$ guarantee for the joint one-time release. The post-processing statement follows directly for any subsequent output that depends only on these perturbed releases, public information, and independent randomness.
\end{proof}

\section{Additional Simulation Results}
\label{app: additional simulations}

Figure \ref{fig:loss} shows that the ALB and ALB-P objectives stabilize within 10 sweeps in all four settings, \textcolor{red}{consistent with the geometric convergence result in Theorem} \ref{the: algorithmic convergence}.

Tables \ref{tab: MSE_34} and \ref{tab: CI_34_small} provide the corresponding results for Settings 3 and 4. At $\rho_n=1$, active-coordinate coverage is below the nominal level for all feasible methods under random signal locations. One explanation is that the small-sample regularization parameters $a_{1,j}$ and $a_{2,j}$ may remain far from one and induce bias. Even so, relative to CC-DL, ALB raises active-coordinate coverage from 0.459 to 0.753 in Setting 3 and from 0.597 to 0.814 in Setting 4, while also improving TPR. At $\rho_n=10$, ALB attains active-coordinate coverage of 0.942 and 0.932, inactive-coordinate coverage of 0.950 and 0.948, and TPR equal to one. These results support the asymptotic theory while illustrating the difficulty of the smaller-sample random-support setting.

\begin{figure}[!t]
    \centering
    \includegraphics[width=1\linewidth]{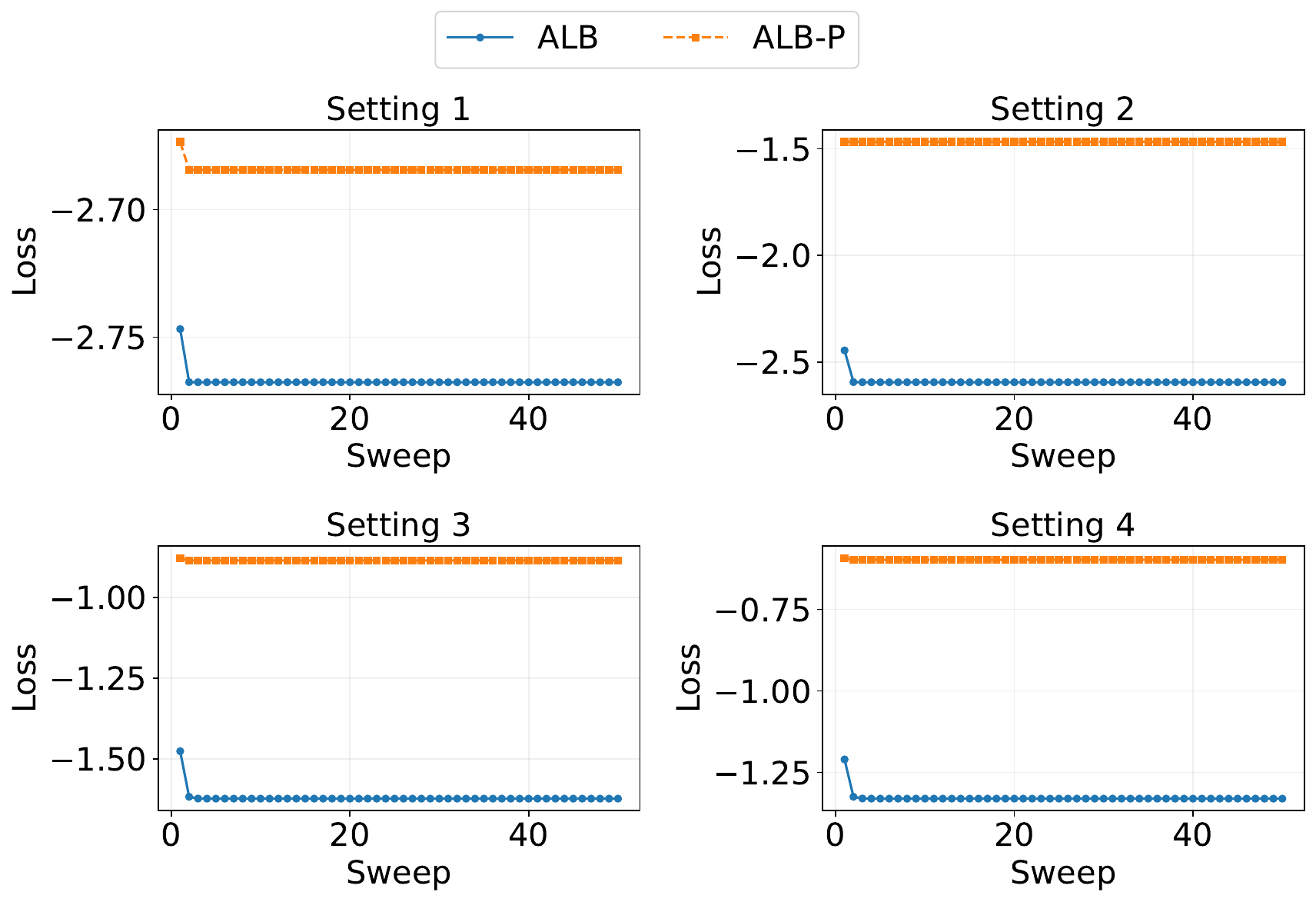}
    \caption{Training loss across ALB sweeps in Settings 1--4.}
    \label{fig:loss}
\end{figure}

\begin{table}[!t]
\centering
\caption{Prediction and coefficient-estimation performance in Settings 3 and 4. Values are means (SDs) over 200 replications.}
\label{tab: MSE_34}
\setlength{\tabcolsep}{2.8pt}
\renewcommand{\arraystretch}{0.75}
\begin{tabular}{@{}clcccc@{}}
\toprule
$\rho_n$ $(n)$ & Method
& \multicolumn{2}{c}{Setting 3}
& \multicolumn{2}{c}{Setting 4} \\
\cmidrule(lr){3-4}\cmidrule(lr){5-6}
& & MSE & Err & MSE & Err \\
\midrule
\multirow{5}{*}{$1\ (550)$}
& ALB & 2.007 (0.234) & 1.066 (0.111) & 2.024 (0.242) & 1.003 (0.107) \\
& ALB-P & 2.131 (0.247) & 1.153 (0.110) & 2.143 (0.267) & 1.062 (0.111) \\
& DISCOM & 2.007 (0.234) & 1.066 (0.111) & 2.024 (0.242) & 1.003 (0.107) \\
& CC-Lasso & 3.531 (0.643) & 1.651 (0.174) & 3.719 (0.643) & 1.643 (0.187) \\
& Oracle Lasso & 1.182 (0.089) & 0.449 (0.049) & 1.186 (0.082) & 0.424 (0.044) \\
\addlinespace[0.15em]
\midrule
\multirow{5}{*}{$10\ (3250)$}
& ALB & 1.125 (0.083) & 0.362 (0.071) & 1.134 (0.080) & 0.359 (0.051) \\
& ALB-P & 1.179 (0.084) & 0.452 (0.076) & 1.169 (0.081) & 0.406 (0.041) \\
& DISCOM & 1.125 (0.083) & 0.362 (0.071) & 1.134 (0.080) & 0.359 (0.051) \\
& CC-Lasso & 3.577 (0.624) & 1.652 (0.167) & 3.664 (0.653) & 1.626 (0.194) \\
& Oracle Lasso & 1.031 (0.067) & 0.171 (0.017) & 1.028 (0.065) & 0.162 (0.015) \\
\bottomrule
\end{tabular}
\end{table}

\begin{table}[!t]
\centering
\caption{Performance of the 95\% coordinatewise confidence intervals in Settings 3 and 4. \textcolor{blue}{Results are based on 200 replications.}}
\label{tab: CI_34_small}

{\small
\setlength{\tabcolsep}{3pt}
\renewcommand{\arraystretch}{0.8}
\resizebox{\textwidth}{!}{
\begin{tabular}{lcccccccccc}
\toprule
& \multicolumn{2}{c}{AvgCov$(\cS)$}
& \multicolumn{2}{c}{AvgLen$(\cS)$}
& \multicolumn{2}{c}{AvgCov$(\cS^c_{100})$}
& \multicolumn{2}{c}{AvgLen$(\cS^c_{100})$}
& \multicolumn{2}{c}{TPR} \\
\cmidrule(lr){2-3}
\cmidrule(lr){4-5}
\cmidrule(lr){6-7}
\cmidrule(lr){8-9}
\cmidrule(lr){10-11}
Method
& $\rho_n=1$ & $\rho_n=10$
& $\rho_n=1$ & $\rho_n=10$
& $\rho_n=1$ & $\rho_n=10$
& $\rho_n=1$ & $\rho_n=10$
& $\rho_n=1$ & $\rho_n=10$ \\
\midrule

\multicolumn{11}{c}{\textit{Setting 3}} \\
\addlinespace[0.15em]
ALB
& 0.753 & 0.942
& 0.331 & 0.190
& 0.942 & 0.950
& 0.327 & 0.189
& 0.988 & 1.000 \\
ALB-P
& 0.811 & 0.949
& 0.431 & 0.258
& 0.945 & 0.950
& 0.428 & 0.258
& 0.915 & 1.000 \\
ALB-CP
& 0.753 & 0.942
& 0.331 & 0.190
& 0.942 & 0.950
& 0.327 & 0.189
& 0.988 & 1.000 \\
CC-DL
& 0.459 & 0.529
& 0.410 & 0.444
& 0.935 & 0.937
& 0.409 & 0.443
& 0.692 & 0.670 \\
Oracle-DL
& 0.906 & 0.942
& 0.203 & 0.094
& 0.947 & 0.950
& 0.203 & 0.094
& 1.000 & 1.000 \\
ALB-OP
& 0.885 & 0.950
& 0.458 & 0.207
& 0.948 & 0.950
& 0.455 & 0.208
& 0.957 & 1.000 \\

\addlinespace[0.60em]
\multicolumn{11}{c}{\textit{Setting 4}} \\
\addlinespace[0.15em]
ALB
& 0.814 & 0.932
& 0.298 & 0.136
& 0.948 & 0.948
& 0.295 & 0.136
& 1.000 & 1.000 \\
ALB-P
& 0.825 & 0.942
& 0.344 & 0.165
& 0.949 & 0.953
& 0.342 & 0.164
& 0.995 & 1.000 \\
ALB-CP
& 0.814 & 0.932
& 0.298 & 0.136
& 0.948 & 0.948
& 0.295 & 0.136
& 1.000 & 1.000 \\
CC-DL
& 0.597 & 0.577
& 0.455 & 0.434
& 0.943 & 0.944
& 0.455 & 0.433
& 0.736 & 0.739 \\
Oracle-DL
& 0.902 & 0.950
& 0.160 & 0.069
& 0.948 & 0.948
& 0.160 & 0.069
& 1.000 & 1.000 \\
ALB-OP
& 0.861 & 0.943
& 0.329 & 0.145
& 0.949 & 0.948
& 0.325 & 0.145
& 1.000 & 1.000 \\
\bottomrule
\end{tabular}
}}
\end{table}

\clearpage
We further examine a design in which no record contains all
three covariate blocks. Except for the missingness pattern, the covariance structures, coefficient vectors,
dimensions, tuning procedures, and test-sample construction remain as in Settings 1--4 in Section \ref{subsec: MSE simulation}. The 100 complete records in the original design are removed. For each pair of users, there are $n_{\mathrm{pair}}\in\{100,1000\}$ records for which the response and exactly those two users' covariate blocks are observed. In addition, each user holds 50 single-user records without a response. Hence, the total sample sizes are $n=3n_{\mathrm{pair}}+150\in\{450,3150\}$. In particular, every cross-block covariance can be estimated from its pairwise overlap, and every response--covariate moment remains estimable despite the absence of complete cases. Consequently, CC-Lasso and CC-DL are omitted. All results below are averaged over $M=200$ replications.

\begin{table}[H]
\centering
\caption{Prediction and coefficient-estimation performance without complete cases in Settings 1 and 2. Values are means (SDs) over 200 replications.}
\label{tab: MSE_nocomp_12}
{
\setlength{\tabcolsep}{2.8pt}
\renewcommand{\arraystretch}{0.75}
\begin{tabular}{@{}clcccc@{}}
\toprule
$n_{\mathrm{pair}}$ $(n)$ & Method
& \multicolumn{2}{c}{Setting 1}
& \multicolumn{2}{c}{Setting 2} \\
\cmidrule(lr){3-4}\cmidrule(lr){5-6}
& & MSE & Err & MSE & Err \\
\midrule
\multirow{4}{*}{$100\ (450)$}
& ALB & 1.706 (0.324) & 0.842 (0.130) & 2.321 (0.532) & 0.988 (0.168) \\
& ALB-P & 1.888 (0.402) & 0.932 (0.130) & 2.463 (0.635) & 1.023 (0.175) \\
& DISCOM & 1.706 (0.324) & 0.842 (0.130) & 2.321 (0.532) & 0.988 (0.168) \\
& Oracle Lasso & 1.131 (0.081) & 0.357 (0.058) & 1.172 (0.090) & 0.358 (0.043) \\
\addlinespace[0.15em]
\midrule
\multirow{4}{*}{$1000\ (3150)$}
& ALB & 1.084 (0.072) & 0.345 (0.064) & 1.041 (0.064) & 0.178 (0.031) \\
& ALB-P & 1.120 (0.076) & 0.421 (0.074) & 1.048 (0.064) & 0.196 (0.034) \\
& DISCOM & 1.084 (0.072) & 0.345 (0.064) & 1.041 (0.064) & 0.178 (0.031) \\
& Oracle Lasso & 1.024 (0.062) & 0.128 (0.019) & 1.030 (0.064) & 0.128 (0.014) \\
\bottomrule
\end{tabular}
}
\end{table}

\begin{table}[H]
\centering
\caption{Prediction and coefficient-estimation performance without complete cases in Settings 3 and 4. Values are means (SDs) over 200 replications.}
\label{tab: MSE_nocomp_34}
{
\setlength{\tabcolsep}{2.8pt}
\renewcommand{\arraystretch}{0.75}
\begin{tabular}{@{}clcccc@{}}
\toprule
$n_{\mathrm{pair}}$ $(n)$ & Method
& \multicolumn{2}{c}{Setting 3}
& \multicolumn{2}{c}{Setting 4} \\
\cmidrule(lr){3-4}\cmidrule(lr){5-6}
& & MSE & Err & MSE & Err \\
\midrule
\multirow{4}{*}{$100\ (450)$}
& ALB & 2.499 (0.388) & 1.303 (0.139) & 2.599 (0.551) & 1.252 (0.185) \\
& ALB-P & 2.927 (0.570) & 1.455 (0.171) & 3.089 (0.685) & 1.429 (0.213) \\
& DISCOM & 2.499 (0.388) & 1.303 (0.139) & 2.599 (0.551) & 1.252 (0.185) \\
& Oracle Lasso & 1.227 (0.094) & 0.501 (0.053) & 1.228 (0.085) & 0.478 (0.050) \\
\addlinespace[0.15em]
\midrule
\multirow{4}{*}{$1000\ (3150)$}
& ALB & 1.147 (0.081) & 0.391 (0.061) & 1.164 (0.081) & 0.392 (0.045) \\
& ALB-P & 1.188 (0.084) & 0.459 (0.073) & 1.185 (0.080) & 0.417 (0.043) \\
& DISCOM & 1.147 (0.081) & 0.391 (0.061) & 1.164 (0.081) & 0.392 (0.045) \\
& Oracle Lasso & 1.036 (0.064) & 0.175 (0.020) & 1.037 (0.064) & 0.167 (0.017) \\
\bottomrule
\end{tabular}
}
\end{table}

Tables \ref{tab: MSE_nocomp_12} and
\ref{tab: MSE_nocomp_34} show that ALB remains well defined in the absence of complete records and matches DISCOM to the reported precision. Increasing each pairwise overlap from 100 to 1000 substantially reduces both errors across all four settings.
At the larger overlap, the MSEs of ALB range from 1.041 to 1.164, compared with 1.024 to 1.037 for Oracle Lasso. Although ALB-P incurs a moderate perturbation cost, its performance also improves substantially as the overlap increases. Together, these results show that the information needed for estimation can be assembled from pairwise overlaps rather than complete observations.

Tables \ref{tab: CI_nocomp_12} and \ref{tab: CI_nocomp_34} show that inference also remains feasible without complete cases. ALB and ALB-CP agree to the reported digits, confirming that the distributed precision-column computation closely reproduces its centralized counterpart. When $n_{\mathrm{pair}}=1000$, the active-set coverage of ALB is 0.917, 0.946, 0.940, and 0.932 in Settings 1--4, respectively; inactive-set coverage ranges from 0.948 to 0.950, and TPR equals one throughout. The active coverage is lower and the intervals are wider when $n_{\mathrm{pair}}=100$, particularly for random signal locations, because small overlaps make cross-block and precision-column quantities more difficult to estimate. ALB-P generally produces wider intervals and somewhat higher active coverage, while its TPR can be lower at the smaller overlap. Overall, the larger-overlap results support the pairwise-overlap asymptotic theory and show that complete records are not required for either point estimation or coordinatewise inference.

\begin{table}[H]
\centering
\caption{Performance of the 95\% coordinatewise confidence intervals without complete cases in Settings 1 and 2. Values are based on 200 replications.}
\label{tab: CI_nocomp_12}
{\small
\setlength{\tabcolsep}{3pt}
\renewcommand{\arraystretch}{0.8}
\resizebox{\textwidth}{!}{
\begin{tabular}{lcccccccccc}
\toprule
& \multicolumn{2}{c}{AvgCov$(\cS)$}
& \multicolumn{2}{c}{AvgLen$(\cS)$}
& \multicolumn{2}{c}{AvgCov$(\cS^c_{100})$}
& \multicolumn{2}{c}{AvgLen$(\cS^c_{100})$}
& \multicolumn{2}{c}{TPR} \\
\cmidrule(lr){2-3}\cmidrule(lr){4-5}\cmidrule(lr){6-7}
\cmidrule(lr){8-9}\cmidrule(lr){10-11}
Method
& $100$ & $1000$ & $100$ & $1000$ & $100$ & $1000$
& $100$ & $1000$ & $100$ & $1000$ \\
\midrule
\multicolumn{11}{c}{\textit{Setting 1}} \\
\addlinespace[0.15em]
ALB & 0.906 & 0.917 & 0.764 & 0.311 & 0.948 & 0.950 & 0.770 & 0.320 & 0.640 & 1.000 \\
ALB-P & 0.917 & 0.924 & 1.061 & 0.437 & 0.946 & 0.945 & 1.094 & 0.454 & 0.386 & 0.984 \\
ALB-CP & 0.906 & 0.917 & 0.764 & 0.311 & 0.948 & 0.950 & 0.770 & 0.320 & 0.640 & 1.000 \\
Oracle-DL & 0.872 & 0.928 & 0.226 & 0.095 & 0.951 & 0.952 & 0.227 & 0.096 & 1.000 & 1.000 \\
ALB-OP & 0.943 & 0.944 & 1.151 & 0.359 & 0.946 & 0.950 & 1.115 & 0.349 & 0.391 & 0.998 \\
\addlinespace[0.60em]
\multicolumn{11}{c}{\textit{Setting 2}} \\
\addlinespace[0.15em]
ALB & 0.894 & 0.946 & 0.520 & 0.191 & 0.946 & 0.950 & 0.510 & 0.190 & 0.942 & 1.000 \\
ALB-P & 0.918 & 0.950 & 0.613 & 0.235 & 0.947 & 0.949 & 0.603 & 0.235 & 0.892 & 1.000 \\
ALB-CP & 0.894 & 0.946 & 0.520 & 0.191 & 0.946 & 0.950 & 0.510 & 0.190 & 0.942 & 1.000 \\
Oracle-DL & 0.891 & 0.937 & 0.177 & 0.071 & 0.949 & 0.952 & 0.177 & 0.071 & 1.000 & 1.000 \\
ALB-OP & 0.910 & 0.954 & 0.571 & 0.203 & 0.949 & 0.949 & 0.567 & 0.203 & 0.849 & 1.000 \\
\bottomrule
\end{tabular}
}}
\end{table}

\begin{table}[H]
\centering
\caption{Performance of the 95\% coordinatewise confidence intervals without complete cases in Settings 3 and 4. Values are based on 200 replications.}
\label{tab: CI_nocomp_34}
{\small
\setlength{\tabcolsep}{3pt}
\renewcommand{\arraystretch}{0.8}
\resizebox{\textwidth}{!}{
\begin{tabular}{lcccccccccc}
\toprule
& \multicolumn{2}{c}{AvgCov$(\cS)$}
& \multicolumn{2}{c}{AvgLen$(\cS)$}
& \multicolumn{2}{c}{AvgCov$(\cS^c_{100})$}
& \multicolumn{2}{c}{AvgLen$(\cS^c_{100})$}
& \multicolumn{2}{c}{TPR} \\
\cmidrule(lr){2-3}\cmidrule(lr){4-5}\cmidrule(lr){6-7}
\cmidrule(lr){8-9}\cmidrule(lr){10-11}
Method
& $100$ & $1000$ & $100$ & $1000$ & $100$ & $1000$
& $100$ & $1000$ & $100$ & $1000$ \\
\midrule
\multicolumn{11}{c}{\textit{Setting 3}} \\
\addlinespace[0.15em]
ALB & 0.646 & 0.940 & 0.407 & 0.199 & 0.939 & 0.949 & 0.405 & 0.198 & 0.862 & 1.000 \\
ALB-P & 0.756 & 0.953 & 0.571 & 0.271 & 0.945 & 0.950 & 0.568 & 0.271 & 0.630 & 1.000 \\
ALB-CP & 0.646 & 0.940 & 0.407 & 0.199 & 0.939 & 0.949 & 0.405 & 0.198 & 0.862 & 1.000 \\
Oracle-DL & 0.894 & 0.951 & 0.220 & 0.095 & 0.946 & 0.946 & 0.220 & 0.095 & 1.000 & 1.000 \\
ALB-OP & 0.848 & 0.949 & 0.589 & 0.217 & 0.946 & 0.950 & 0.590 & 0.218 & 0.721 & 1.000 \\
\addlinespace[0.60em]
\multicolumn{11}{c}{\textit{Setting 4}} \\
\addlinespace[0.15em]
ALB & 0.699 & 0.932 & 0.371 & 0.143 & 0.943 & 0.948 & 0.367 & 0.142 & 0.947 & 1.000 \\
ALB-P & 0.733 & 0.936 & 0.464 & 0.173 & 0.943 & 0.951 & 0.461 & 0.173 & 0.841 & 1.000 \\
ALB-CP & 0.699 & 0.932 & 0.371 & 0.143 & 0.943 & 0.948 & 0.367 & 0.142 & 0.947 & 1.000 \\
Oracle-DL & 0.883 & 0.945 & 0.174 & 0.071 & 0.948 & 0.949 & 0.174 & 0.071 & 1.000 & 1.000 \\
ALB-OP & 0.775 & 0.944 & 0.414 & 0.152 & 0.947 & 0.949 & 0.411 & 0.152 & 0.935 & 1.000 \\
\bottomrule
\end{tabular}
}}
\end{table}

\par\vspace*{0.5\baselineskip}\kern0pt
\section{ADNI Data and Preprocessing}
\label{app: ADNI modalities}

For the analysis, we assign the modalities to three users. User $B_0$ holds five UPENN CSF biomarkers: amyloid beta ($\text{A}\beta_{42}$), total tau, phosphorylated tau, and two associated ratios. User $B_1$ holds 116 cortical-thickness and subcortical-volume measures from the UCSF cross-sectional FreeSurfer pipeline, whereas $B_2$ holds 98 standardized uptake value ratios from the UC Berkeley PET pipeline. Matching participants and visits yields 219 covariates. The response, held by ADNI clinical sites, is the 30-point Mini-Mental State Examination (MMSE) score \citep{mf1975mini,tombaugh1992mini}.

We use month-48 observations from the second phase of ADNI (ADNI-2). After preprocessing, 92 records are complete. The incomplete sample contains 15 CSF--MRI, 65 CSF--PET, and 99 MRI--PET records with MMSE, together with one CSF-only, two MRI-only, and three PET-only records without MMSE. In each of 200 train-test splits, half of the complete records form the test set; the remaining complete and all incomplete records form the training set. Tuning and perturbation follow Section \ref{subsec: MSE simulation}, and normalization uses training-set statistics to prevent information leakage.

\section{Implementation Details}
\label{app: implementation details}

\subsection[Point-Estimation Tuning]{Point-Estimation Tuning}
\label{app: covariance regularization}

For point estimation, the two covariance-regularization parameters are selected jointly over a finite grid $\mathcal A\subset[0,1]$. Define
\begin{equation*}
    \hSigma(a_1,a_2)
    =a_1\tSigma_I+a_2\tSigma_C
      +(1-a_1)\frac{c_{tr}}{p}\mathrm I_p,
    \qquad (a_1,a_2)\in\mathcal A^2.
\end{equation*}
The numerical studies use 11 equally spaced values in $[0,1]$ for each parameter. Because the grid contains $(0,0)$, it includes the positive-definite diagonal-ridge endpoint $\hSigma(0,0)=(c_{tr}/p)\mathrm I_p$ whenever $c_{tr}>0$.

We form pattern-stratified folds so that each training and validation split contains observations from every realized block-missing pattern. Before cross-validation, each pair $(a_1,a_2)\in\mathcal A^2$ is screened for positive definiteness on the full sample and every training fold. Since no user assembles the covariance matrix, the screening uses the distributed shifted Lanczos procedure in Section \ref{app: shifted Lanczos iteration}. A pair is retained only if every estimated smallest eigenvalue exceeds $10^{-8}$. ALB and ALB-P are screened separately because their cross-user moments differ.

For each retained pair and coefficient penalty $\lambda\in\Lambda$, ALB is fitted without validation fold $v$, yielding $\hbbeta_{-v}(a_1,a_2,\lambda)$. Its held-out quadratic loss is
\begin{equation*}
    L_v(a_1,a_2,\lambda)
    =\frac12\hbbeta_{-v}(a_1,a_2,\lambda)^{\top}
       \tSigma_v\hbbeta_{-v}(a_1,a_2,\lambda)
      -\tmbC_v^{\top}\hbbeta_{-v}(a_1,a_2,\lambda),
\end{equation*}
where $\tSigma_v$ and $\tmbC_v$ are the unregularized available-case moments from fold $v$. We select
\begin{equation*}
    (\widehat a_1,\widehat a_2,\widehat\lambda)
    =\argmin_{(a_1,a_2)\in\mathcal F,\,\lambda\in\Lambda}
      \frac{1}{V}\sum_{v=1}^{V}L_v(a_1,a_2,\lambda),
\end{equation*}
where $\mathcal F$ is the Lanczos-screened set and $V$ is the number of folds. ALB is then refitted on all available observations using the selected triple. ALB-P follows the same procedure with perturbed cross-user and response moments and unperturbed within-user covariance blocks.

Table \ref{tab: eigenvalue} reports the smallest eigenvalues of the covariance matrices selected for precision-column estimation. The resulting matrices are positive definite in all simulation settings.

\begin{table}[!t]
    \centering
    \caption{Smallest eigenvalues of the covariance matrices selected for precision-column estimation.}
    \label{tab: eigenvalue}
    {\small
    \renewcommand{\arraystretch}{0.9}
    \begin{tabular}{ccrrrrrr}
    \toprule
    & & \multicolumn{3}{c}{ALB} & \multicolumn{3}{c}{ALB-P} \\
    \cmidrule(lr){3-5}\cmidrule(lr){6-8}
    Setting & $\rho_n$ & Minimum & Mean & SD & Minimum & Mean & SD \\
    \midrule
    \multirow{2}{*}{1} & 1  & 0.01 & 0.101 & 0.001 & 0.061 & 0.078 & 0.006 \\
                       & 10 & 0.087 & 0.120 & 0.012 & 0.063 & 0.111 & 0.019 \\
    \midrule
    \multirow{2}{*}{2} & 1  & 0.36 & 0.713 & 0.091 & 0.284 & 0.634 & 0.111 \\
                       & 10 & 0.134 & 0.285 & 0.043 & 0.145 & 0.272 & 0.053 \\
    \midrule
    \multirow{2}{*}{3} & 1  & 0.099 & 0.101 & 0.001 & 0.061 & 0.078 & 0.006 \\
                       & 10 & 0.086 & 0.120 & 0.012 & 0.065 & 0.107 & 0.022 \\
    \midrule
    \multirow{2}{*}{4} & 1  & 0.501 & 0.716 & 0.088 & 0.299 & 0.637 & 0.122 \\
                       & 10 & 0.137 & 0.277 & 0.044 & 0.165 & 0.265 & 0.056 \\
    \bottomrule
    \end{tabular}
    }
\end{table}

\subsection{Precision-Column Tuning}
\label{app: precision covariance selection}

Precision-column estimation uses inference-specific covariance regularization, which may differ from that used for point estimation. We consider a computationally efficient strategy that shares one parameter pair across coordinates. Let $\mathcal F_{\theta}$ contain the pairs retained by the full-sample and training-fold Lanczos checks in Section \ref{app: covariance regularization}. For shared tuning, $\mathcal J_{\mathrm{tun}}$ is formed by sampling the same prescribed number of coordinates without replacement from each covariate block; the simulations use five per block.

For $j\in\mathcal J_{\mathrm{tun}}$ and validation fold $v$, define the training-fold estimate
\begin{equation*}
{
    \hbthe_{j,-v}(a_1,a_2,\lambda_j)
    =\argmin_{\bthe\in\mathbb R^p}
    \left\{
      \frac12\bthe^\top\hSigma_{-v}(a_1,a_2)\bthe
      -\be_j^\top\bthe+\lambda_j\norm{\bthe}_1
    \right\},
}
\end{equation*}
where $\hSigma_{-v}(a_1,a_2)$ is constructed from the training-fold available-case moments. The corresponding held-out loss is
\begin{equation*}
{
    L_{j,v}^{\mathrm{prec}}(a_1,a_2,\lambda_j)
    =\frac12\hbthe_{j,-v}(a_1,a_2,\lambda_j)^\top
      \tSigma_v\hbthe_{j,-v}(a_1,a_2,\lambda_j)
      -\be_j^\top\hbthe_{j,-v}(a_1,a_2,\lambda_j),
}
\end{equation*}
where $\tSigma_v$ is the unregularized available-case covariance from validation fold $v$. For each covariance pair, the precision penalty is profiled over $\Lambda_{\theta}$ as
\begin{equation*}
{
    \widehat\lambda_j(a_1,a_2)
    =\argmin_{\lambda_j\in\Lambda_{\theta}}
      \frac1V\sum_{v=1}^V
      L_{j,v}^{\mathrm{prec}}(a_1,a_2,\lambda_j).
}
\end{equation*}
We then average the optimized losses over the tuning coordinates and select
\begin{equation*}
{
    (\widehat a_{1,\theta},\widehat a_{2,\theta})
    =\argmin_{(a_1,a_2)\in\mathcal F_{\theta}}
      \frac{1}{|\mathcal J_{\mathrm{tun}}|}
      \sum_{j\in\mathcal J_{\mathrm{tun}}}
      \left\{
      \frac1V\sum_{v=1}^V
      L_{j,v}^{\mathrm{prec}}
      \bigl(a_1,a_2,\widehat\lambda_j(a_1,a_2)\bigr)
      \right\}.
}
\end{equation*}
The selected pair $(\widehat a_{1,\theta},\widehat a_{2,\theta})$ is shared across precision columns. For each reported coordinate $j$, we hold this pair fixed, select $\widehat\lambda_j$ by the same foldwise criterion, and refit \eqref{global inverse column} on all available observations. Ties between covariance pairs are resolved in favor of the larger $a_1$ and then the larger $a_2$. ALB-P performs this selection separately using perturbed cross-user moments, while ALB-CP uses the corresponding centralized foldwise loss. Algorithm \ref{alg: shared precision selection} summarizes the procedure.

\begin{algorithm}[!t]
\caption{Shared Tuning for Precision Estimation}
\label{alg: shared precision selection}{\renewcommand{\baselinestretch}{0.93}\selectfont
\begin{algorithmic}[1]
\Statex \textbf{Input:} Pattern-stratified folds $\{\mathcal I_v\}_{v=1}^V$, feasible set $\mathcal F_{\theta}$, penalty grid $\Lambda_{\theta}$, tuning coordinates $\mathcal J_{\mathrm{tun}}$, and reported coordinates $\mathcal J_{\mathrm{re}}$.
\Statex \textbf{Output:} Shared pair $(\widehat a_{1,\theta},\widehat a_{2,\theta})$ and estimates $\{\hbthe_j:j\in\mathcal J_{\mathrm{re}}\}$.
\For{each $(a_1,a_2)\in\mathcal F_{\theta}$}
    \For{each $j\in\mathcal J_{\mathrm{tun}}$}
        \State Evaluate $L_{j,v}^{\mathrm{prec}}(a_1,a_2,\lambda)$ over all folds and $\lambda\in\Lambda_{\theta}$.
        \State Choose $\widehat\lambda_j(a_1,a_2)$ by the smallest mean held-out loss.
    \EndFor
    \State Average the optimized losses over $\mathcal J_{\mathrm{tun}}$.
\EndFor
\State Select the pair with the smallest coordinate-averaged loss.
\For{each $j\in\mathcal J_{\mathrm{re}}$}
    \State Fix the selected pair and choose $\widehat\lambda_j$ by mean held-out loss.
    \State Refit \eqref{global inverse column} on all available observations to obtain $\hbthe_j$.
\EndFor
\State \Return $(\widehat a_{1,\theta},\widehat a_{2,\theta})$ and $\{\hbthe_j:j\in\mathcal J_{\mathrm{re}}\}$.
\end{algorithmic}}
\end{algorithm}

\subsection{Distributed Positive-Definiteness Screening}
\label{app: shifted Lanczos iteration}
{\color{blue}
The screening in Section \ref{app: covariance regularization} requires the smallest eigenvalue of each candidate covariance matrix on the full sample and on every training fold, although no user assembles any such matrix. Fix one candidate $(a_1,a_2)$ and one of these data splits, and let $\tSigma_{C,\times}$ denote the cross-user matrix whose $(k,l)$ block, for $k\ne l$, is
$$
    (\tSigma_{C,\times})_{kl}
    =\frac{1}{n_{kl}}
      \bX_{\times}^{k,l\top}\bX_{\times}^{l,k},
$$
with zero diagonal blocks. Here $\bX_{\times}^k=\bX^k$ for ALB and $\bX_{\times}^k=\bX_{\mathrm{priv}}^k$ for ALB-P, with all matrices and denominators restricted to the fixed split. Thus, define the generic matrix screened by Algorithm \ref{alg:shifted_lanczos} as
$$
    \hSigma_{\times}(a_1,a_2)
    =a_1\tSigma_I+a_2\tSigma_{C,\times}+d_0\mathrm I_p,
    \qquad d_0=\frac{(1-a_1)c_{tr}}{p}.
$$
It equals $\hSigma(a_1,a_2)$ for ALB and $\hSigma^{\mathrm{priv}}(a_1,a_2)$ for ALB-P. The same construction is repeated for every candidate and split.

For a block vector $\bv=(\bv_0^\top,\ldots,\bv_K^\top)^\top$, user $B_k$ computes the $k$th block of $\hSigma_{\times}\bv$ as
\begin{equation}
\label{distributed hSigma matvec}
    \mathcal H_{\times,k}(\bv)
    =\frac{a_1}{n_k}\bX^{k\top}\bX^k\bv_k
    +a_2\sum_{l\ne k}\frac{1}{n_{kl}}
      \bX_{\times}^{k,l\top}\bX_{\times}^{l,k}\bv_l
    +d_0\bv_k\textcolor{red}{.}
\end{equation}
The within-user product in \eqref{distributed hSigma matvec} always uses the unperturbed local matrix $\bX^k$; only the cross-user products use $\bX_{\mathrm{priv}}^k$ in ALB-P. To evaluate the latter products, $B_l$ sends $\boldsymbol\zeta_{l\to k}=\bX_{\times}^{l,k}\bv_l\in\mathbb R^{n_{kl}}$ to $B_k$, which then uses $\bX_{\times}^{k,l\top}\boldsymbol\zeta_{l\to k}$. Block inner products and norms are computed as $\langle\bx,\by\rangle_{\mathrm{blk}}=\sum_{k=0}^K\bx_k^\top\by_k$ and $\norm{\bx}_{\mathrm{blk}}=\langle\bx,\bx\rangle_{\mathrm{blk}}^{1/2}$ by aggregating user-specific scalars.

The algorithm applies Lanczos iteration to the shifted matrix
$$
    \mathbf B=\gamma\mathrm I_p-\hSigma_{\times}.
$$
Its largest eigenvalue is $\lambda_{\max}(\mathbf B)=\gamma-\lambda_{\min}(\hSigma_{\times})$, so a largest Ritz value $\vartheta$ of $\mathbf B$ gives the estimate $\gamma-\vartheta$ of $\lambda_{\min}(\hSigma_{\times})$. The shift is computed without first estimating $\lambda_{\max}(\hSigma_{\times})$. Specifically, define the nonnegative $(K+1)\times(K+1)$ matrix $\mathbf G$ by
\begin{align*}
    G_{kk}&=\frac{|a_1|}{n_k}\norm{\bX^k}_{\mathrm F}^2+|d_0|,\\
    G_{kl}&=\frac{|a_2|}{n_{kl}}
       \norm{\bX_{\times}^{k,l}}_{\mathrm F}
       \norm{\bX_{\times}^{l,k}}_{\mathrm F},\qquad k\ne l.
\end{align*}
The block-norm inequality gives $\norm{\hSigma_{\times}}_2\le\norm{\mathbf G}_2$. We therefore set
$$
    \gamma=(1+\delta)\norm{\mathbf G}_2+\epsilon_\gamma,
$$
where the implementation uses $\delta=10^{-3}$ and $\epsilon_\gamma=10^{-6}$. Only scalar Frobenius-norm summaries are needed to form $\mathbf G$.

For each random start, Algorithm \ref{alg:shifted_lanczos} draws and jointly normalizes a distributed vector $\mathbf q$. At step $t$, it evaluates $\bu_k=\mathcal H_{\times,k}(\mathbf q)$ and forms the $k$th block of the Lanczos residual as
$$
    \bw_k=\gamma\mathbf q_k-\bu_k
           -\omega_{\mathrm{prev}}\mathbf q_{\mathrm{prev},k}.
$$
It then sets $\alpha_t=\langle\mathbf q,\bw\rangle_{\mathrm{blk}}$, removes $\alpha_t\mathbf q$, performs two passes of full reorthogonalization against the distributed basis $\mathcal V$, and sets $\omega_{t+1}=\norm{\bw}_{\mathrm{blk}}$. The resulting projection is
$$
    \mathbf T_t
    =\operatorname{tridiag}(\omega_2,\ldots,\omega_t;
                             \alpha_1,\ldots,\alpha_t).
$$
If $(\vartheta_t,\bz_t)$ is the largest eigenpair of $\mathbf T_t$, then $r_t=\omega_{t+1}|(\bz_t)_t|$ is its Ritz residual norm and $\hat\lambda_{\min}^{(t)}=\gamma-\vartheta_t$ is the current smallest-eigenvalue estimate. A start terminates when $r_t\leq\tau\max\{1,|\hat\lambda_{\min}^{(t)}|\}$; otherwise, $\mathbf q_{\mathrm{prev}}$, $\mathbf q$, and $\omega_{\mathrm{prev}}$ are updated exactly as in Algorithm \ref{alg:shifted_lanczos}. Across all $R$ starts and at most $m=\min\{M,p\}$ steps per start, the algorithm retains $\vartheta_{\mathrm{best}}$, the largest computed Ritz value, and returns $\hat\lambda_{\min}=\gamma-\vartheta_{\mathrm{best}}$. It uses $O(RM)$ distributed matrix-vector products; for fixed $K$, each such product communicates $O(\sum_{k<l}n_{kl})=O(n)$ numbers. The implementation uses $M=50$, $\tau=10^{-6}$, $R=2$, and $\delta=10^{-3}$.
}

\begin{table}[!t]
    \centering
    \caption{Absolute error of distributed shifted Lanczos estimation over 200 replications.}
    \label{tab:lanczos_error}
    \vspace{0.1cm}
    \small
    \renewcommand{\arraystretch}{0.9}
    \begin{tabular}{clcc}
        \toprule
        Setting & Method & Mean absolute error & Standard deviation\\
        \midrule
        \multirow{2}{*}{1} & ALB        & 0.000096 & 0.000484  \\
        & ALB-P & 0.00006 & 0.000235 \\
        \multirow{2}{*}{2} & ALB        & 0.000426 & 0.001468 \\
         & ALB-P & 0.000134 & 0.000593 \\
         \multirow{2}{*}{3}& ALB        & 0.000096 & 0.000484  \\
         & ALB-P & 0.00006 & 0.000235  \\
          \multirow{2}{*}{4}& ALB        & 0.000426 & 0.001468 \\
         & ALB-P & 0.000134 & 0.000593 \\
        \bottomrule
    \end{tabular}
\end{table}

\textcolor{blue}{We also assess the numerical accuracy of Algorithm \ref{alg:shifted_lanczos} with the above implementation configuration under the four simulation settings in Section \ref{subsec: MSE simulation}. For each replication, we set $a_1=a_2=1$ and report
$\abs{\hat\lambda_{\min}-\lambda_{\min}\{\hSigma_{\times}(1,1)\}}$, where $\hat\lambda_{\min}$ is returned by Algorithm \ref{alg:shifted_lanczos}, whereas $\lambda_{\min}\{\hSigma_{\times}(1,1)\}$ is obtained by direct eigendecomposition of the assembled matrix solely for evaluation. Thus, $\hSigma_{\times}(1,1)=\tSigma$ for ALB and $\hSigma_{\times}(1,1)=\tSigma^{\mathrm{priv}}$ for ALB-P.} Table \ref{tab:lanczos_error} shows that the mean absolute error is at most $4.26\times10^{-4}$ for ALB and $1.34\times10^{-4}$ for ALB-P. The distributed estimates therefore closely match the exact eigenvalues across all settings, confirming the numerical accuracy of the shifted Lanczos step used to select the covariance-regularization parameters. \textcolor{blue}{Settings 1 and 3 share the same covariance structure, as do Settings 2 and 4; consequently, their corresponding Lanczos error summaries coincide.}

\begin{breakablealgorithm}{Distributed Shifted Lanczos Estimation}{alg:shifted_lanczos}
{\renewcommand{\baselinestretch}{0.93}\selectfont
\begin{algorithmic}[1]
\Statex \textbf{Input:} $(a_1,a_2,c_{tr})$, local designs $\{\bX^k\}_{k=0}^K$, cross-user designs $\{\bX_{\times}^{k,l}\}_{k\ne l}$, denominators $\{n_k,n_{kl}\}$, maximum Lanczos steps $M$, tolerance $\tau$, shift margin $\delta$, number of random starts $R$.
\Statex \textbf{Output:} \textcolor{blue}{Estimate $\hat\lambda_{\min}$ of $\lambda_{\min}(\hSigma_{\times})$.}
\State Define the block inner product $\langle\bx,\by\rangle_{\mathrm{blk}}\gets\sum_{k=0}^K\bx_k^\top\by_k$ and norm $\norm{\bx}_{\mathrm{blk}}\gets\langle\bx,\bx\rangle_{\mathrm{blk}}^{1/2}$.
\State Set $d_0\gets(1-a_1)c_{tr}/p$ and form $\mathbf G$ from the scalar block norms above.
\State $\gamma\gets(1+\delta)\norm{\mathbf G}_2+10^{-6}$, $m\gets\min\{M,p\}$, and $\vartheta_{\mathrm{best}}\gets-\infty$.
\For{$s=1,\ldots,R$}
    \State Each $B_k$ draws $\mathbf q_k\sim\mathcal N(\mathbf0,\mathrm I_{p_k})$; jointly normalize $\mathbf q\gets\mathbf q/\norm{\mathbf q}_{\mathrm{blk}}$.
    \State Set $\mathbf q_{\mathrm{prev}}\gets\mathbf0$, $\omega_{\mathrm{prev}}\gets0$, $\mathcal V\gets\varnothing$, and initialize empty sequences $\{\alpha_t\}$ and $\{\omega_{t+1}\}$.
    \For{$t=1,\ldots,m$}
        \State Append $\mathbf q$ to the distributed basis $\mathcal V$.
        \For{$l=0,\ldots,K$}
            \State $B_l$ sends $\boldsymbol\zeta_{l\to k}=\bX_{\times}^{l,k}\mathbf q_l$ to every $B_k$, $k\ne l$.
        \EndFor
        \For{$k=0,\ldots,K$}
            \State $B_k$ computes
            $\displaystyle \bu_k\gets\frac{a_1}{n_k}\bX^{k\top}\bX^k\mathbf q_k
            +a_2\sum_{l\ne k}\frac{\bX_{\times}^{k,l\top}\boldsymbol\zeta_{l\to k}}{n_{kl}}+d_0\mathbf q_k$.
            \State $\bw_k\gets\gamma\mathbf q_k-\bu_k-\omega_{\mathrm{prev}}\mathbf q_{\mathrm{prev},k}$.
        \EndFor
        \State $\alpha_t\gets\langle\mathbf q,\bw\rangle_{\mathrm{blk}}$ and $\bw\gets\bw-\alpha_t\mathbf q$.
        \For{$r=1,2$} \Comment{two-pass full reorthogonalization}
            \For{each $\bv\in\mathcal V$}
                \State $c\gets\langle\bv,\bw\rangle_{\mathrm{blk}}$ and $\bw\gets\bw-c\bv$.
            \EndFor
        \EndFor
        \State $\omega_{t+1}\gets\norm{\bw}_{\mathrm{blk}}$.
        \State Form $\mathbf T_t=\operatorname{tridiag}(\omega_2,\ldots,\omega_t;\alpha_1,\ldots,\alpha_t)$.
        \State Compute the largest eigenpair $(\vartheta_t,\bz_t)$ of $\mathbf T_t$ and set $\vartheta_{\mathrm{best}}\gets\max\{\vartheta_{\mathrm{best}},\vartheta_t\}$.
        \State $r_t\gets\omega_{t+1}|(\bz_t)_t|$ and $\hat\lambda_{\min}^{(t)}\gets\gamma-\vartheta_t$.
        \If{$r_t\le\tau\max\{1,|\hat\lambda_{\min}^{(t)}|\}$}
            \State \textbf{break}
        \EndIf
        \State $\mathbf q_{\mathrm{prev}}\gets\mathbf q$, $\mathbf q\gets\bw/\omega_{t+1}$, and $\omega_{\mathrm{prev}}\gets\omega_{t+1}$.
    \EndFor
\EndFor
\State \Return $\hat\lambda_{\min}\gets\gamma-\vartheta_{\mathrm{best}}$.
\end{algorithmic}}
\end{breakablealgorithm}

\subsection{Coordinate Descent and the ALB Stopping Rule}
\label{app: proximal gradient descent}
The Lasso subproblems arising in the implementation can be written in the generic quadratic form
\begin{equation*}
    Q(\bbeta)=\frac{1}{2}\bbeta^{\top}\mathbf{A}\bbeta-
    \mathbf{b}^{\top}\bbeta+\lambda\lVert\bbeta\rVert_1,
    \qquad \bbeta\in\mathbb{R}^{d},
\end{equation*}
where $\mathbf A$ is symmetric positive definite and $\mathbf b\in\mathbb R^d$. With the Cholesky factorization $\mathbf A=\mathbf L\mathbf L^\top$, define
\begin{equation*}
    \widetilde{\mathbf X}=\sqrt d\,\mathbf L^{\top},
    \qquad
    \widetilde{\by}=\sqrt d\,\mathbf L^{-1}\mathbf b.
\end{equation*}
These quantities satisfy $\widetilde{\mathbf X}^{\top}\widetilde{\mathbf X}/d=\mathbf A$ and $\widetilde{\mathbf X}^{\top}\widetilde{\by}/d=\mathbf b$. Hence, the corresponding least-squares Lasso objective equals $Q(\bbeta)$ up to an additive constant. We solve it with \texttt{scikit-learn}'s coordinate-descent routine \texttt{lasso\_path}, using the current block estimate as a warm start.

\begin{algorithm}[!t]
\caption{Coordinate-Descent Solver and ALB Stopping Rule}
\label{alg:lasso_path}
\begin{algorithmic}[1]
\Statex \textbf{Input:} $\mathbf A$, $\mathbf b$, $\lambda\geq0$, optional warm start $\bbeta^{(0)}$, coordinate-descent tolerance $\tau_{\mathrm{CD}}$, maximum coordinate-descent iterations $T_{\max}$, previous ALB loss $q_{\mathrm{prev}}$, and outer tolerance $\tau_{\mathrm{ALB}}$.
\Statex \textbf{Output:} Updated block $\widehat\bbeta$ and the ALB stopping decision.
\State If $\lambda=0$, set $\widehat\bbeta\gets\mathbf A^{-1}\mathbf b$.
\State If $\lambda>0$, compute the Cholesky factorization $\mathbf A=\mathbf L\mathbf L^{\top}$.
\State Set $\widetilde{\mathbf X}\gets\sqrt d\,\mathbf L^{\top}$ and $\widetilde{\by}\gets\sqrt d\,\mathbf L^{-1}\mathbf b$.
\State Call \texttt{sklearn.linear\_model.lasso\_path} with $\texttt{alphas}=\{\lambda\}$, $\texttt{precompute}=d\mathbf A$, $\texttt{Xy}=d\mathbf b$, optional initialization $\bbeta^{(0)}$, tolerance $\tau_{\mathrm{CD}}$, and maximum iterations $T_{\max}$; set $\widehat\bbeta$ to the returned coefficient vector.
\State After all user blocks have been updated, compute $q_e\gets\cQ_n(\hbbeta^{(e)})$ and $\delta_e\gets|q_e-q_{\mathrm{prev}}|/\max\{1,|q_{\mathrm{prev}}|\}$.
\State Stop ALB if $\delta_e\leq\tau_{\mathrm{ALB}}$; otherwise set $q_{\mathrm{prev}}\gets q_e$ and begin the next sweep, subject to $e_{\max}$.
\State \Return $\widehat\bbeta$ and the stopping decision.
\end{algorithmic}
\end{algorithm}

The inner coordinate-descent iterations use the tolerance and iteration limit supplied to \texttt{lasso\_path}. After sweep $e$, ALB computes the relative penalized-loss change
\begin{equation*}
    \delta_e=
    \frac{\left|\cQ_n(\hbbeta^{(e)})-\cQ_n(\hbbeta^{(e-1)})\right|}
    {\max\left\{1,\left|\cQ_n(\hbbeta^{(e-1)})\right|\right\}}.
\end{equation*}
The cyclic updates stop when $\delta_e\leq\tau_{\mathrm{ALB}}$ or the maximum number of sweeps is reached. \textcolor{blue}{The global objective value used in the stopping rule is obtained by
summing user‑local scalar contributions and scalar pairwise
cross‑term contributions. No covariance block or raw feature matrix is centralized for this calculation.} Algorithm \ref{alg:lasso_path} summarizes the transformation and stopping rule.

\subsection{Distributed Confidence-Interval Construction}
\label{app: pseudocode}
This subsection gives an end-to-end implementation of the confidence interval in \eqref{distributed CI}. Fix a target coordinate $j$, partition its estimated precision column as $\hbthe_j=(\hbthe_j^{0\top},\ldots,\hbthe_j^{K\top})^\top$, and let $B_0$ aggregate the final scalar messages. The precision column is obtained by applying the cyclic updates in Algorithm \ref{alg:optimization} to \eqref{global inverse column}: replace the coefficient block $\bbeta_k$ by $\bthe_k$, replace the response-covariate linear term by the corresponding block $\be_{j,k}$ of $\be_j$, replace $\lambda$ by $\lambda_j$, and use the precision-estimation parameters $(a_{1,j},a_{2,j})$ in place of the point-estimation parameters $(a_1,a_2)$. Thus, estimating $\hbthe_j$ has the same $O(n)$ communication order per sweep as estimating $\hbbeta$.

For the distributed variance calculation, partition $\hbphi_i=(\hbphi_i^{0\top},\ldots,\hbphi_i^{K\top})^\top$. Each user forms its block of $S_n(\hbbeta)$ from local moments and the summaries $\bu_{l\to k}(\hbbeta_l)$, then sends its scalar contribution to $\hbthe_j^\top S_n(\hbbeta)$ to $B_0$. User $B_k$ also sends $z_{j,i}^k=\hbthe_j^{k\top}\hbphi_i^k$ for $i=1,\ldots,n$. User $B_0$ aggregates these contributions by sample ID before centering and squaring.

Algorithm \ref{alg:distributed_ci} summarizes the procedure. It computes $\hb_j$ by decomposing $\hbthe_j^\top S_n(\hbbeta)$ across users and then evaluates \eqref{estimator of variance} after aggregating contributions with the same sample ID.

\begin{algorithm}[!t]
\caption{Distributed Confidence Interval for Coordinate $j$}
\label{alg:distributed_ci}
{\renewcommand{\baselinestretch}{0.93}\selectfont
\begin{algorithmic}[1]
    \Statex \textbf{Input:} Target $j$, level $1-\alpha$, $\hbbeta_0,\ldots,\hbbeta_K$, $(a_{1,j},a_{2,j},\lambda_j)$, maximum sweeps $e_{\max}$, initial precision-column blocks, and local data and missingness information.
    \Statex \textbf{Output:} Interval $CI_j$ in \eqref{distributed CI}.
    \State Let $k_j$ hold coordinate $j$, and let $\be_{j,k}$ be the block of $\be_j$ held by $B_k$.
    \State Apply Algorithm \ref{alg:optimization} to \eqref{global inverse column}, replacing $\bbeta_k$ by $\bthe_k$, the linear term by $\be_{j,k}$, $\lambda$ by $\lambda_j$, and $(a_1,a_2)$ by $(a_{1,j},a_{2,j})$, to obtain $\hbthe_j^0,\ldots,\hbthe_j^K$.
    \State For every $l\neq k$, $B_l$ sends $\bu_{l\to k}(\hbbeta_l)=\bX^{l,k}\hbbeta_l$ to $B_k$.
    \Statex \textbf{Debiasing and local variance contributions:}
    \For{$k=0,\ldots,K$}
        \State $B_k$ computes the $k$th block $\hs_k$ of $S_n(\hbbeta)=\tSigma\hbbeta-\tmbC$:
        \Statex \qquad $\hs_k\gets\tSigma_k\hbbeta_k+\displaystyle\sum_{l\neq k}\frac{1}{n_{kl}}\bX^{k,l\top}\bu_{l\to k}(\hbbeta_l)-\tmbC_k$.
        \State Using these summaries, $B_k$ computes $\hbphi_i^k\gets\sum_{l=0}^K\hA_{i,kl}\hbbeta_l-\hq_{i,k}$ for all $i$.
        \State $d_{j,k}\gets\hbthe_j^{k\top}\hs_k$ and $z_{j,i}^k\gets\hbthe_j^{k\top}\hbphi_i^k$ for $i=1,\ldots,n$.
        \State $B_k$ sends $d_{j,k}$ and $(z_{j,1}^k,\ldots,z_{j,n}^k)^\top$ to $B_0$.
    \EndFor
    \State If $k_j\neq0$, $B_{k_j}$ sends the scalar $\hbeta_j$ to $B_0$.
    \State $B_0$ computes $\hb_j\gets\hbeta_j-\hbthe_j^\top S_n(\hbbeta)=\hbeta_j-\displaystyle\sum_{k=0}^K d_{j,k}$.
    \State $B_0$ sets $z_{j,i}\gets\sum_{k=0}^Kz_{j,i}^k$ for every $i$, aggregating by sample ID before centering and squaring.
    \State {$\bar z_j\gets n^{-1}\sum_{i=1}^nz_{j,i}$ and $\hv_{n,j}\gets(n-1)^{-1}\sum_{i=1}^n(z_{j,i}-\bar z_j)^2$}; $r_{n,j}\gets\Phi^{-1}(1-\alpha/2)\sqrt{\hv_{n,j}/n}$.
    \State \Return $CI_j\gets[\hb_j-r_{n,j},\,\hb_j+r_{n,j}]$.
\end{algorithmic}}
\end{algorithm}

\clearpage
\begingroup
\small
\setlength{\bibsep}{0pt}
\renewcommand{\baselinestretch}{1.70}\selectfont
\putbib[citation]
\endgroup
\end{bibunit}

\end{document}

%% file: command.tex
\def\i[#1]{\textit{#1}}
\def\b[#1]{\textbf{#1}}

\def\hA{\widehat{A}}

\def\hb{\hat{b}}

\def\mbC{{\mathbf C}}
\def\tmbC{{\widetilde{\mathbf C}}}

\def\bd{{\boldsymbol d}}

\def\be{{\boldsymbol e}}

\def\cL{{\mathcal L}}

\def\bbP{{\mathbb P}}

\def\bbP{{\mathbb P}}

\def\cQ{\mathcal{Q}}
\def\hq{{\hat q}}

\def\hs{\widehat{ s}}
\def\cS{{\mathcal S}}

\def\hs{\hat{s}}

\def\bu{{\boldsymbol u}}

\def\bv{{\boldsymbol v}}
\def\hv{{\hat{v}}}

\def\bw{{\boldsymbol w}}

\def\bx{{\boldsymbol{x}}}

\def\bX{{\boldsymbol X}}

\def\by{{\boldsymbol y}}

\def\bz{{\boldsymbol z}}

\def\bz{\boldsymbol{z}}

\def\hbeta{\hat{\beta}}

\def\bbeta{\boldsymbol \beta}
\def\hbbeta{\hat{\boldsymbol \beta}}
\def\cbbeta{\check{\boldsymbol \beta}}

\def\bdelta{\boldsymbol \delta}

\def\bphi{\boldsymbol{\phi}}

\def\hbphi{\widehat{\bphi}}

\def\tSigma{\widetilde{\Sigma}}
\def\hSigma{\widehat{\Sigma}}

\def\hbthe{\hat{\boldsymbol \theta}}
\def\bthe{\boldsymbol \theta}

\def\cbthe{\check{\boldsymbol \theta}}

\def\type1{\mbox{Type-\uppercase\expandafter{\romannumeral1}}}

\DeclareMathOperator{\E}{\mathbb{E}}
\DeclareMathOperator{\Var}{\mbox{Var}}

\providecommand{\abs}[1]{\left\lvert#1\right\rvert}
\providecommand{\norm}[1]{\left\lVert#1\right\rVert}

\def\argmin{\mathop{\rm argmin}}

%% file: main.bbl
\begin{thebibliography}{46}
\providecommand{\natexlab}[1]{#1}
\providecommand{\url}[1]{\texttt{#1}}
\expandafter\ifx\csname urlstyle\endcsname\relax
  \providecommand{\doi}[1]{doi: #1}\else
  \providecommand{\doi}{doi: \begingroup \urlstyle{rm}\Url}\fi

\bibitem[Battey et~al.(2018)Battey, Fan, Liu, Lu, and
  Zhu]{battey2018distributed}
H.~Battey, J.~Fan, H.~Liu, J.~Lu, and Z.~Zhu.
\newblock Distributed testing and estimation under sparse high dimensional
  models.
\newblock \emph{Ann. Stat.}, 46\penalty0 (3):\penalty0 1352, 2018.

\bibitem[Cai et~al.(2016)Cai, Cai, and Zhang]{cai2016structured}
T.~Cai, T.~T. Cai, and A.~Zhang.
\newblock Structured matrix completion with applications to genomic data
  integration.
\newblock \emph{J. Am. Stat. Assoc.}, 111\penalty0 (514):\penalty0 621--633,
  2016.

\bibitem[Cai et~al.(2022)Cai, Liu, and Xia]{cai2022individual}
T.~Cai, M.~Liu, and Y.~Xia.
\newblock Individual data protected integrative regression analysis of
  high-dimensional heterogeneous data.
\newblock \emph{J. Am. Stat. Assoc.}, 117\penalty0 (540):\penalty0 2105--2119,
  2022.

\bibitem[Castiglia et~al.(2023)Castiglia, Zhou, Wang, Kadhe, Baracaldo, and
  Patterson]{castiglia2023less}
T.~Castiglia, Y.~Zhou, S.~Wang, S.~Kadhe, N.~Baracaldo, and S.~Patterson.
\newblock Less-vfl: Communication-efficient feature selection for vertical
  federated learning.
\newblock In \emph{ICML}, 2023.

\bibitem[Castiglia et~al.(2022)Castiglia, Das, Wang, and
  Patterson]{castiglia2022compressed}
T.~J. Castiglia, A.~Das, S.~Wang, and S.~Patterson.
\newblock Compressed-vfl: Communication-efficient learning with vertically
  partitioned data.
\newblock In \emph{ICML}, 2022.

\bibitem[Chen et~al.(2022)Chen, Liu, and Zhang]{chen2022first}
X.~Chen, W.~Liu, and Y.~Zhang.
\newblock First-order newton-type estimator for distributed estimation and
  inference.
\newblock \emph{J. Am. Stat. Assoc.}, 117\penalty0 (540):\penalty0 1858--1874,
  2022.

\bibitem[Dai et~al.(2020)Dai, Jiang, Bonomi, Li, Xiong, and
  Ohno-Machado]{dai2020verticox}
W.~Dai, X.~Jiang, L.~Bonomi, Y.~Li, H.~Xiong, and L.~Ohno-Machado.
\newblock Verticox: Vertically distributed cox proportional hazards model using
  the alternating direction method of multipliers.
\newblock \emph{IEEE Trans. Knowl. Data Eng.}, 34\penalty0 (2):\penalty0
  996--1010, 2020.

\bibitem[Diao et~al.(2022)Diao, Ding, and Tarokh]{diao2022gal}
E.~Diao, J.~Ding, and V.~Tarokh.
\newblock Gal: Gradient assisted learning for decentralized multi-organization
  collaborations.
\newblock In \emph{NeurIPS}, 2022.

\bibitem[Fan et~al.(2023{\natexlab{a}})Fan, Guo, and
  Wang]{fan2023communication}
J.~Fan, Y.~Guo, and K.~Wang.
\newblock Communication-efficient accurate statistical estimation.
\newblock \emph{J. Am. Stat. Assoc.}, 118\penalty0 (542):\penalty0 1000--1010,
  2023{\natexlab{a}}.

\bibitem[Fan et~al.(2023{\natexlab{b}})Fan, Li, and Lin]{fan2023residual}
Y.~Fan, J.-S. Li, and N.~Lin.
\newblock Residual projection for quantile regression in vertically partitioned
  big data.
\newblock \emph{Data Min. Knowl. Discov.}, 37\penalty0 (2):\penalty0 710--735,
  2023{\natexlab{b}}.

\bibitem[Fang et~al.(2021)Fang, Zhao, Tan, Chen, Yu, Wang, Wang, Zhou, and
  Zhang]{fang2021large}
W.~Fang, D.~Zhao, J.~Tan, C.~Chen, C.~Yu, L.~Wang, L.~Wang, J.~Zhou, and
  B.~Zhang.
\newblock Large-scale secure xgb for vertical federated learning.
\newblock In \emph{CIKM}, 2021.

\bibitem[Hardy et~al.(2017)Hardy, Henecka, Ivey-Law, Nock, Patrini, Smith, and
  Thorne]{hardy2017private}
S.~Hardy, W.~Henecka, H.~Ivey-Law, R.~Nock, G.~Patrini, G.~Smith, and
  B.~Thorne.
\newblock Private federated learning on vertically partitioned data via entity
  resolution and additively homomorphic encryption.
\newblock \emph{arXiv preprint arXiv:1711.10677}, 2017.

\bibitem[Huang et~al.(2022)Huang, Li, Sun, and Zhao]{huang2022coresets}
L.~Huang, Z.~Li, J.~Sun, and H.~Zhao.
\newblock Coresets for vertical federated learning: Regularized linear
  regression and $ k $-means clustering.
\newblock In \emph{NeurIPS}, 2022.

\bibitem[Javanmard and Montanari(2014)]{javanmard2014confidence}
A.~Javanmard and A.~Montanari.
\newblock Confidence intervals and hypothesis testing for high-dimensional
  regression.
\newblock \emph{J. Mach. Learn. Res.}, 15\penalty0 (1):\penalty0 2869--2909,
  2014.

\bibitem[Jordan et~al.(2019)Jordan, Lee, and Yang]{jordan2019communication}
M.~I. Jordan, J.~D. Lee, and Y.~Yang.
\newblock Communication-efficient distributed statistical inference.
\newblock \emph{J. Am. Stat. Assoc.}, 114\penalty0 (526), 2019.

\bibitem[Kang et~al.(2022)Kang, Liu, and Liang]{kang2022fedcvt}
Y.~Kang, Y.~Liu, and X.~Liang.
\newblock Fedcvt: Semi-supervised vertical federated learning with cross-view
  training.
\newblock \emph{ACM Trans. Intell. Syst. Technol.}, 13\penalty0 (4):\penalty0
  1--16, 2022.

\bibitem[Lee et~al.(2017)Lee, Liu, Sun, and Taylor]{lee2017communication}
J.~D. Lee, Q.~Liu, Y.~Sun, and J.~E. Taylor.
\newblock Communication-efficient sparse regression.
\newblock \emph{J. Mach. Learn. Res.}, 18\penalty0 (5):\penalty0 1--30, 2017.

\bibitem[Li et~al.(2025)Li, Xie, Zhang, Fu, and Li]{li2025voxel}
K.~Li, D.~Xie, Z.~Zhang, C.~Fu, and C.~Li.
\newblock Voxel- and surface-based morphometry in the cortical thickness and
  cortical and subcortical gray matter volume in patients with mild-to-moderate
  alzheimer’s disease.
\newblock \emph{Front. Aging Neurosci.}, 17:\penalty0 1546977, 2025.

\bibitem[Li et~al.(2022)Li, Liang, Chang, and Zhang]{li2022statistical}
X.~Li, J.~Liang, X.~Chang, and Z.~Zhang.
\newblock Statistical estimation and online inference via local sgd.
\newblock In \emph{COLT}, 2022.

\bibitem[Liu et~al.(2020)Liu, Liu, Liu, Liang, Meng, Zhang, and
  Zheng]{liu2020federated}
Y.~Liu, Y.~Liu, Z.~Liu, Y.~Liang, C.~Meng, J.~Zhang, and Y.~Zheng.
\newblock Federated forest.
\newblock \emph{IEEE Trans. Big Data}, 8\penalty0 (3):\penalty0 843--854, 2020.

\bibitem[Liu et~al.(2022)Liu, Zhang, Kang, Li, Chen, Hong, and
  Yang]{liu2022fedbcd}
Y.~Liu, X.~Zhang, Y.~Kang, L.~Li, T.~Chen, M.~Hong, and Q.~Yang.
\newblock Fedbcd: A communication-efficient collaborative learning framework
  for distributed features.
\newblock \emph{IEEE Trans. Signal Process.}, 70:\penalty0 4277--4290, 2022.

\bibitem[Liu et~al.(2024)Liu, Kang, Zou, Pu, He, Ye, Ouyang, Zhang, and
  Yang]{liu2024vertical}
Y.~Liu, Y.~Kang, T.~Zou, Y.~Pu, Y.~He, X.~Ye, Y.~Ouyang, Y.-Q. Zhang, and
  Q.~Yang.
\newblock Vertical federated learning: Concepts, advances, and challenges.
\newblock \emph{IEEE Trans. Knowl. Data Eng.}, 36\penalty0 (7):\penalty0
  3615--3634, 2024.

\bibitem[Mueller et~al.(2005)Mueller, Weiner, Thal, Petersen, Jack, Jagust,
  Trojanowski, Toga, and Beckett]{mueller2005alzheimer}
S.~G. Mueller, M.~W. Weiner, L.~J. Thal, R.~C. Petersen, C.~Jack, W.~Jagust,
  J.~Q. Trojanowski, A.~W. Toga, and L.~Beckett.
\newblock The alzheimer's disease neuroimaging initiative.
\newblock \emph{Neuroimaging Clin. N. Am.}, 15\penalty0 (4):\penalty0 869--877,
  2005.

\bibitem[Nestor et~al.(2008)Nestor, Rupsingh, Borrie, Smith, Accomazzi, Wells,
  Fogarty, and Bartha]{nestor2008ventricular}
S.~M. Nestor, R.~Rupsingh, M.~Borrie, M.~Smith, V.~Accomazzi, J.~L. Wells,
  J.~Fogarty, and R.~Bartha.
\newblock Ventricular enlargement as a possible measure of alzheimer's disease
  progression validated using the alzheimer's disease neuroimaging initiative
  database.
\newblock \emph{Brain}, 131\penalty0 (9):\penalty0 2443--2454, 2008.

\bibitem[Ren et~al.(2022)Ren, Yang, and Chen]{ren2022improving}
Z.~Ren, L.~Yang, and K.~Chen.
\newblock Improving availability of vertical federated learning: Relaxing
  inference on non-overlapping data.
\newblock \emph{ACM Trans. Intell. Syst. Technol.}, 13\penalty0 (4):\penalty0
  58:1--58:20, 2022.

\bibitem[Rizvi et~al.(2021)Rizvi, Lao, Chesebro, Dworkin, Amarante, Beato,
  Gutierrez, Zahodne, Schupf, Manly, Mayeux, and
  Brickman]{rizvi2021association}
B.~Rizvi, P.~J. Lao, A.~G. Chesebro, J.~D. Dworkin, E.~Amarante, J.~M. Beato,
  J.~Gutierrez, L.~B. Zahodne, N.~Schupf, J.~J. Manly, R.~Mayeux, and A.~M.
  Brickman.
\newblock Association of regional white matter hyperintensities with
  longitudinal alzheimer-like pattern of neurodegeneration in older adults.
\newblock \emph{JAMA Netw. Open}, 4\penalty0 (10), 2021.

\bibitem[Song et~al.(2024)Song, Lin, and Zhou]{song2024semi}
S.~Song, Y.~Lin, and Y.~Zhou.
\newblock Semi-supervised inference for block-wise missing data without
  imputation.
\newblock \emph{J. Mach. Learn. Res.}, 25\penalty0 (99):\penalty0 1--36, 2024.

\bibitem[Sun et~al.(2023)Sun, Xu, Yang, Nath, Li, Zhao, Xu, Chen, and
  Roth]{sun2023communication}
J.~Sun, Z.~Xu, D.~Yang, V.~Nath, W.~Li, C.~Zhao, D.~Xu, Y.~Chen, and H.~R.
  Roth.
\newblock Communication-efficient vertical federated learning with limited
  overlapping samples.
\newblock In \emph{ICCV}, pages 5203--5212, 2023.

\bibitem[Sun and Xia(2024)]{sun2024optimal}
Y.~Sun and Y.~Xia.
\newblock Optimal integrative estimation for distributed precision matrices
  with heterogeneity adjustment.
\newblock \emph{arXiv preprint arXiv:2408.06263}, 2024.

\bibitem[Tang et~al.(2025)Tang, Feng, Li, and Wang]{tang2025data}
H.~Tang, L.~Feng, Y.~Li, and F.~Wang.
\newblock Data privatization in vertical federated learning with client-wise
  missing problem.
\newblock \emph{arXiv preprint arXiv:2511.20876}, 2025.

\bibitem[Tu et~al.(2023)Tu, Liu, Mao, and Xu]{tu2023distributed}
J.~Tu, W.~Liu, X.~Mao, and M.~Xu.
\newblock Distributed semi-supervised sparse statistical inference.
\newblock \emph{IEEE Trans. Inf. Theory}, 70\penalty0 (6):\penalty0 4197--4217,
  2023.

\bibitem[Valdeira et~al.(2025)Valdeira, Wang, and Chi]{valdeira2025vertical}
P.~Valdeira, S.~Wang, and Y.~Chi.
\newblock Vertical federated learning with missing features during training and
  inference.
\newblock In \emph{ICLR}, 2025.

\bibitem[van~de Geer et~al.(2014)van~de Geer, B{\"u}hlmann, Ritov, and
  Dezeure]{van2014asymptotically}
S.~van~de Geer, P.~B{\"u}hlmann, Y.~Ritov, and R.~Dezeure.
\newblock On asymptotically optimal confidence regions and tests for
  high-dimensional models.
\newblock \emph{Ann. Stat.}, 42\penalty0 (3):\penalty0 1166--1202, 2014.

\bibitem[Wang et~al.(2023)Wang, Gu, Zhang, Li, Wang, and Ling]{wang2023unified}
G.~Wang, B.~Gu, Q.~Zhang, X.~Li, B.~Wang, and C.~X. Ling.
\newblock A unified solution for privacy and communication efficiency in
  vertical federated learning.
\newblock In \emph{NeurIPS}, 2023.

\bibitem[Wang et~al.(2022)Wang, Zhang, Hong, Yang, and Ding]{wang2022parallel}
X.~Wang, J.~Zhang, M.~Hong, Y.~Yang, and J.~Ding.
\newblock Parallel assisted learning.
\newblock \emph{IEEE Trans. Signal Process.}, 70:\penalty0 5848--5858, 2022.

\bibitem[Xian et~al.(2020)Xian, Wang, Ding, and Ghanadan]{xian2020assisted}
X.~Xian, X.~Wang, J.~Ding, and R.~Ghanadan.
\newblock Assisted learning: A framework for multi-organization learning.
\newblock In \emph{NeurIPS}, 2020.

\bibitem[Xiang et~al.(2013)Xiang, Yuan, Fan, Wang, Thompson, and
  Ye]{xiang2013multi}
S.~Xiang, L.~Yuan, W.~Fan, Y.~Wang, P.~M. Thompson, and J.~Ye.
\newblock Multi-source learning with block-wise missing data for alzheimer's
  disease prediction.
\newblock In \emph{KDD}, 2013.

\bibitem[Xiang et~al.(2014)Xiang, Yuan, Fan, Wang, Thompson, and
  Ye]{xiang2014bilevel}
S.~Xiang, L.~Yuan, W.~Fan, Y.~Wang, P.~M. Thompson, and J.~Ye.
\newblock Bi-level multi-source learning for heterogeneous block-wise missing
  data.
\newblock \emph{NeuroImage}, 102:\penalty0 192--206, 2014.

\bibitem[Xue and Qu(2021)]{xue2021integrating}
F.~Xue and A.~Qu.
\newblock Integrating multisource block-wise missing data in model selection.
\newblock \emph{J. Am. Stat. Assoc.}, 116\penalty0 (536):\penalty0 1914--1927,
  2021.

\bibitem[Xue et~al.(2025)Xue, Ma, and Li]{xue2025statistical}
F.~Xue, R.~Ma, and H.~Li.
\newblock Statistical inference for high-dimensional linear regression with
  blockwise missing data.
\newblock \emph{Stat. Sin.}, 35\penalty0 (1):\penalty0 431--456, 2025.

\bibitem[Yang et~al.(2022)Yang, Du, Gao, Liu, Chen, Wang, Liu, Lv, Zhang, Xia,
  et~al.]{yang2022associations}
A.~Yang, L.~Du, W.~Gao, B.~Liu, Y.~Chen, Y.~Wang, X.~Liu, K.~Lv, W.~Zhang,
  H.~Xia, et~al.
\newblock Associations of cortical iron accumulation with cognition and
  cerebral atrophy in alzheimer’s disease.
\newblock \emph{Quant. Imaging Med. Surg.}, 12\penalty0 (9):\penalty0 4570,
  2022.

\bibitem[Yin et~al.(2026)Yin, Wang, Zhu, Zhu, and Huang]{yin2026vertical}
H.~Yin, L.~Wang, Y.~Zhu, L.~Zhu, and D.~Huang.
\newblock Vertical federated feature screening.
\newblock In \emph{NeurIPS}, 2026.

\bibitem[Yu et~al.(2020)Yu, Li, Shen, and Liu]{yu2020optimal}
G.~Yu, Q.~Li, D.~Shen, and Y.~Liu.
\newblock Optimal sparse linear prediction for block-missing multi-modality
  data without imputation.
\newblock \emph{J. Am. Stat. Assoc.}, 115\penalty0 (531):\penalty0 1406--1419,
  2020.

\bibitem[Yuan et~al.(2012)Yuan, Wang, Thompson, Narayan, Ye, Initiative,
  et~al.]{yuan2012multi}
L.~Yuan, Y.~Wang, P.~M. Thompson, V.~A. Narayan, J.~Ye, A.~D.~N. Initiative,
  et~al.
\newblock Multi-source feature learning for joint analysis of incomplete
  multiple heterogeneous neuroimaging data.
\newblock \emph{NeuroImage}, 61\penalty0 (3):\penalty0 622--632, 2012.

\bibitem[Zhang and Zhang(2014)]{zhang2014confidence}
C.-H. Zhang and S.~S. Zhang.
\newblock Confidence intervals for low dimensional parameters in high
  dimensional linear models.
\newblock \emph{J. R. Stat. Soc. Ser. B}, 76\penalty0 (1):\penalty0 217--242,
  2014.

\bibitem[Zhang et~al.(2026)Zhang, Yang, and Ding]{zhang2026additive}
J.~Zhang, Y.~Yang, and J.~Ding.
\newblock Additive-effect assisted learning.
\newblock \emph{J. R. Stat. Soc. Ser. B}, 88\penalty0 (2):\penalty0 657--676,
  2026.

\end{thebibliography}


\begin{thebibliography}{9}
\providecommand{\natexlab}[1]{#1}
\providecommand{\url}[1]{\texttt{#1}}
\expandafter\ifx\csname urlstyle\endcsname\relax
  \providecommand{\doi}[1]{doi: #1}\else
  \providecommand{\doi}{doi: \begingroup \urlstyle{rm}\Url}\fi

\bibitem[Bickel and Levina(2008)]{bickel2008regularized}
P.~J. Bickel and E.~Levina.
\newblock Regularized estimation of large covariance matrice.
\newblock \emph{Ann. Stat.}, 36\penalty0 (1):\penalty0 199--227, 2008.

\bibitem[Duchi et~al.(2013)Duchi, Jordan, and Wainwright]{duchi2013local}
J.~C. Duchi, M.~I. Jordan, and M.~J. Wainwright.
\newblock Local privacy and statistical minimax rates.
\newblock In \emph{FOCS}, 2013.

\bibitem[Dwork and Roth(2014)]{dwork2014algorithmic}
C.~Dwork and A.~Roth.
\newblock The algorithmic foundations of differential privacy.
\newblock \emph{Found. Trends Theor. Comput. Sci.}, 9\penalty0 (3-4):\penalty0
  211--487, 2014.

\bibitem[Kasiviswanathan et~al.(2011)Kasiviswanathan, Lee, Nissim,
  Raskhodnikova, and Smith]{kasiviswanathan2011can}
S.~P. Kasiviswanathan, H.~K. Lee, K.~Nissim, S.~Raskhodnikova, and A.~Smith.
\newblock What can we learn privately?
\newblock \emph{SIAM. J. Comp.}, 40\penalty0 (3):\penalty0 793--826, 2011.

\bibitem[Mf(1975)]{mf1975mini}
F.~Mf.
\newblock " mini-mental state". a practical method for grading the cognitive
  state of patients for the clinician.
\newblock \emph{J. Psychiatr. Res.}, 12:\penalty0 189--198, 1975.

\bibitem[Ravikumar et~al.(2011)Ravikumar, Wainwright, Raskutti, and
  Yu]{ravikumar2011high}
P.~Ravikumar, M.~J. Wainwright, G.~Raskutti, and B.~Yu.
\newblock High-dimensional covariance estimation by minimizing
  $\ell_1$-penalized log-determinant divergence.
\newblock \emph{Electron. J. Stat.}, 5:\penalty0 935, 2011.

\bibitem[Tombaugh and McIntyre(1992)]{tombaugh1992mini}
T.~N. Tombaugh and N.~J. McIntyre.
\newblock The mini-mental state examination: a comprehensive review.
\newblock \emph{J. Am. Geriatr. Soc.}, 40\penalty0 (9):\penalty0 922--935,
  1992.

\bibitem[Yu et~al.(2020)Yu, Li, Shen, and Liu]{yu2020optimal}
G.~Yu, Q.~Li, D.~Shen, and Y.~Liu.
\newblock Optimal sparse linear prediction for block-missing multi-modality
  data without imputation.
\newblock \emph{J. Am. Stat. Assoc.}, 115\penalty0 (531):\penalty0 1406--1419,
  2020.

\bibitem[Zhou(2024)]{zhou2018fencheldualitystrongconvexity}
X.~Zhou.
\newblock On the fenchel duality between strong convexity and lipschitz
  continuous gradient.
\newblock \emph{arXiv preprint arXiv:1803.06573}, 2024.

\end{thebibliography}
